\documentclass[
 aps,
 prx,
 reprint,
 superscriptaddress,
 longbibliography,
 nofootinbib,
 amsmath,amssymb,
]{revtex4-2}

\usepackage{amsthm,mathtools,bm}
\usepackage{graphicx}
\usepackage{xcolor}
\usepackage{algorithm,algpseudocode}
\AtBeginDocument{\newfloat{algorithm}{tbp}{loa}\floatname{algorithm}{Algorithm~}}
\algrenewcommand{\algorithmicrequire}{\textbf{Input:}}
\algrenewcommand{\algorithmicensure}{\textbf{Output:}}
\usepackage{tikz}
\usetikzlibrary{arrows.meta,positioning,calc,fit,shapes.geometric}
\usepackage{pgfplots}
\pgfplotsset{compat=1.17}
\usepackage[colorlinks=true,linkcolor=blue,citecolor=blue,urlcolor=blue]{hyperref}
\usepackage[nameinlink,capitalise]{cleveref}
\crefname{figure}{Fig.}{Figs.}
\Crefname{figure}{Fig.}{Figs.}
\crefname{equation}{Eq.}{Eqs.}
\Crefname{equation}{Eq.}{Eqs.}
\crefname{section}{Sec.}{Secs.}
\Crefname{section}{Sec.}{Secs.}
\crefname{subsection}{Sec.}{Secs.}
\Crefname{subsection}{Sec.}{Secs.}
\crefname{subsubsection}{Sec.}{Secs.}
\Crefname{subsubsection}{Sec.}{Secs.}

\makeatletter
\def\theHALG@line{\thealgorithm.\arabic{ALG@line}}
\makeatother

\newtheorem{theorem}{Theorem}
\newtheorem{lemma}{Lemma}
\newtheorem{proposition}{Proposition}
\newtheorem{corollary}{Corollary}
\newtheorem{fact}{Fact}
\crefname{fact}{Fact}{Facts}
\Crefname{fact}{Fact}{Facts}
\theoremstyle{definition}
\newtheorem{definition}{Definition}
\theoremstyle{remark}
\newtheorem{remark}{Remark}

\newcommand{\Ot}{\widetilde{O}}
\newcommand{\Omt}{\widetilde{\Omega}}

\newcommand{\C}{\mathbb{C}}
\newcommand{\R}{\mathbb{R}}
\newcommand{\E}{\mathbb{E}}
\newcommand{\norm}[1]{\lVert #1\rVert}
\newcommand{\abs}[1]{\lvert #1\rvert}
\newcommand{\ket}[1]{\lvert #1\rangle}
\newcommand{\bra}[1]{\langle #1\rvert}
\newcommand{\tr}{\operatorname{tr}}
\newcommand{\diag}{\operatorname{diag}}
\newcommand{\polylog}{\operatorname{polylog}}
\newcommand{\poly}{\operatorname{poly}}
\newcommand{\SEL}{\textsc{select}}

\usetikzlibrary{quantikz2}

\begin{document}
\title{$T$-efficient unitary synthesis via block-sparse Hamiltonian simulation}

\author{Diyi Liu}
\affiliation{Lawrence Berkeley National Laboratory,
Berkeley, California 94720, USA}

\author{Shuchen Zhu}
\affiliation{Microsoft Quantum, Redmond, Washington 98052, USA}
\affiliation{Department of Mathematics, Duke University, Durham, North Carolina 27708, USA}
\affiliation{Duke Quantum Center, Duke University, Durham, North Carolina 27701, USA}

\author{Wibe Albert de Jong}
\affiliation{Lawrence Berkeley National Laboratory,
Berkeley, California 94720, USA}

\author{Di Fang}
\affiliation{Department of Mathematics, Duke University, Durham, North Carolina 27708, USA}
\affiliation{Duke Quantum Center, Duke University, Durham, North Carolina 27701, USA}

\author{David Williams-Young}
\affiliation{Microsoft Quantum, Redmond, Washington 98052, USA}

\author{Chao Yang}
\affiliation{Lawrence Berkeley National Laboratory,
Berkeley, California 94720, USA}

\date{\today}

\begin{abstract}
We construct Clifford+$T$ circuits implementing arbitrary $n$-qubit
unitaries with worst-case $T$-count $\Ot(N^{8/7})$ at constant
operator-norm error, where $N=2^n$, improving the previous exponent
$5/4$. The construction uses $\Ot(N^{6/7})$ clean ancillas.
The main idea is to reduce the data addressed by each query
to the block-encoding circuit for a sparse piece
while controlling the additional simulation work needed to combine
the pieces. We simulate a Hermitian logarithm of the target by
flattening its traceless part with a two-layer random similarity transformation,
randomly partitioning the transformed generator into block-sparse
Hermitian pieces, and combining their evolutions using qDRIFT.
For a target operator-norm error $\epsilon\in(0,1/2)$, let
$L=\log_2(N/\epsilon)$, the qubit count plus the requested bits of
accuracy. By successively correcting the approximation error, we
obtain a circuit using
$\Ot(N^{8/7}L^{3/7}+NL)$ $T$ gates and $\Ot(N)$ clean ancillas
whenever $L\le N$.
We also prove that, at constant precision, the exponent $8/7$ is
optimal within a sequential block-SELECT model.
\end{abstract}

\maketitle

\section{Introduction}
\label{sec:intro}

In fault tolerant quantum computing, non-Clifford gates are often substantially
more expensive than Clifford gates. For example, implementations using
surface codes~\cite{fowler2012surface} can supply the gate
$T=\diag(1,e^{i\pi/4})$ through magic state
distillation~\cite{bravyi2005universal}. This motivates using the number of
$T$ gates as a resource measure, distinct from the total gate count, and has led to
extensive optimization of small
circuits~\cite{kliuchnikov2013fast,ross2016optimal,gheorghiu2022tcount}.
Here we ask: \emph{how many $T$ gates
are needed to implement an arbitrary unitary on $n$ qubits to accuracy
$\epsilon$?}

For $N=2^n$, a generic unitary has $N^2$ real parameters.
The challenge is to incorporate these parameters into a Clifford+$T$
circuit using few $T$ gates. At constant precision, a polynomial gap
remains between the known bounds on the number of $T$ gates required
in the worst case. Counting gives
a lower bound of
$\Omega(N)$~\cite{gosset2024quantum,tan2025unitary}.
Low et al.~\cite{Low2024tradingtgatesdirty} give
an upper bound $\Ot(N^{3/2})$. Tan~\cite{tan2025unitary} reduces the
exponent to $4/3$ by combining recursive cosine--sine decomposition
with the synthesis of multiplexed diagonal unitaries. Yuan et al.~\cite{yuan2026general}
obtain exponent $5/4$ through random block
flattening, dilation, and singular value transformation.
Fang et al.~\cite{fang2026nearclifford} use a different approach based
on a logarithm to obtain bounds that depend on the target unitary,
with $\Ot(N^{3/2})$ $T$ gates in the worst case.

Related work studies depth and ancilla tradeoffs using continuously
parameterized gates~\cite{sun2023depth,rosenthal2026query,yuan2023optimal,nehoran2026constant,yuan2026exact,strelchuk2026constant,cao2026measurements}.
Oracle formulations count Boolean queries as primitive
operations~\cite{aaronson2007quantum,rosenthal2026query,lombardi2024onequery,dong2026separations,dong2026preprocessing,brakerski2026scalable,huang2026random}.
A recent construction gives synthesis in polynomial time at constant
error in the diamond norm, using a suitably chosen oracle that depends
on the target~\cite{openai2026unitary}.
Here we count the $T$ gates needed to implement the complete circuit,
including the synthesis of elementary gates and the implementation
of data access.

We obtain the following bound by combining encodings of block-sparse matrices with
Hamiltonian simulation and residual refinement.

\begin{theorem}[Unitary synthesis, informal version of \Cref{thm:constant,thm:refined}]
\label{thm:main-informal}
Let $N=2^n$ and $L=n+\log_2(1/\epsilon)$, with
$0<\epsilon<1/2$ and $L\le N$.
For sufficiently large $n$, every $U\in U(N)$ can be implemented to
error at most $\epsilon$ in operator norm by a Clifford+$T$ circuit using
\begin{equation}
\Ot\!\left(N^{8/7}L^{3/7}+NL\right)
\label{eq:main-bound}
\end{equation}
$T$ gates and $\Ot(N)$ clean ancillas.
At constant precision, a separate choice of parameters gives
$\Ot(N^{8/7})$ $T$ gates with $\Ot(N^{6/7})$ ancillas.
\end{theorem}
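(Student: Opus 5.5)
The plan follows the route sketched in the abstract. First, write $U=e^{iH}$ with $H$ Hermitian and $\norm{H}\le\pi$. Remove the trace part, $H=\theta I+H_0$; the global phase $e^{i\theta}$ costs nothing up to a phase. The traceless part $H_0$ is then flattened by a two-layer random similarity transformation $H_0\mapsto W H_0 W^\dagger$, where $W$ is built from random Clifford-implementable unitaries: random permutations, sign/phase diagonals, and Hadamard-type layers.

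The goal of flattening is that every $b\times b$ block of $\widetilde H=WH_0W^\dagger$ has operator norm $\Ot(\sqrt{b/N})$ with high probability. I would prove this by matrix concentration, i.e.\ matrix Bernstein applied to each block, followed by a union bound over blocks. Since $W$ is Clifford (plus cheap permutations), conjugating $e^{i\widetilde H}$ back costs $\Ot(n)$ or $\Ot(\sqrt N)$ $T$ gates at most.

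Next, randomly partition the $N/b\times N/b$ block grid into $m$ Hermitian pieces $\widetilde H=\sum_{j=1}^m H_j$, each block-sparse with $s=(N/b)/m$ nonzero blocks per block-row. Each $H_j$ is block-encoded by a \SEL-style circuit. Its normalization is $\lambda_j\approx s\cdot\Ot(\sqrt{b/N})$, and its data-access cost is dominated by QROM/QROAM lookups of the $b^2$ entries of each addressed block. With $k$ dirty or clean ancillas, the lookup costs about $\Ot(sb^2/k+k)$ $T$ per query, using the select-swap tradeoff of Low et al.

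Combine the evolutions with qDRIFT. The total weight is $\Lambda=\sum_j\lambda_j$, and reaching constant error requires $\Ot(\Lambda^2)$ random segments. Each segment is a short-time evolution of one $H_j$, implemented by QSVT with $\Ot(\lambda_j\tau+\log)$ queries.

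The total $T$-count is then a function of $(b,s,m,k)$. Optimizing it should give $N^{8/7}$ with $k\sim N^{6/7}$ ancillas. Concretely, I expect $b\sim N^{2/7}$-type scalings, with the qDRIFT overhead $\Lambda^2$ balanced against the per-query lookup cost $N^2/(\text{queries})$. This bookkeeping is routine once the flattening bound is established. The main obstacle is that flattening bound: uniform block-norm control $\Ot(\sqrt{b/N})$ for an \emph{arbitrary} traceless Hermitian $H_0$, using a $W$ that is cheap in $T$. A single random layer only flattens entries, not block norms at all scales, which is why two layers are needed. I would try a random Clifford-type unitary followed by a random permutation with diagonal signs, and bound block norms via a noncommutative Khintchine/Bernstein inequality, conditioning on the first layer having flattened entries and row norms.

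For the refined bound, iterate residual correction. Given a circuit $V_1$ with $\norm{U-V_1}\le 1/4$, form $U V_1^\dagger$, which is close to the identity. Its logarithm $H^{(2)}$ has small norm $\delta$. Simulating $e^{iH^{(2)}}$ then needs fewer qDRIFT samples, since the count scales with $(\delta\Lambda)^2$ in relative terms, while the bit precision must grow. Repeating $O(\log(1/\epsilon))$ times, and accounting for $L$ bits of data per entry in the QROM (cost $\propto L$), gives a geometric-type sum dominated by $\Ot(N^{8/7}L^{3/7}+NL)$. The $NL$ term comes from the final stages, where loading $\Ot(N^2 L)$ bits through $\Ot(N)$ ancillas is the bottleneck. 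The $L^{3/7}$ factor arises from re-optimizing $(b,k)$ with the per-entry data width $L$, under the constraint $k\le\Ot(N)$ and the condition $L\le N$.
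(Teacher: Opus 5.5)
There is a genuine gap, and it lies in the step you call routine bookkeeping. You use the block-encoding normalization $\lambda_j\approx s\,\Ot(\sqrt{b/N})$ as the qDRIFT weight. With $D=N/b$ and $ms=D$, this gives $\Lambda=\sum_j\lambda_j=\Ot(\sqrt D)$, hence $\Ot(D)$ segments, each needing at least one query. Each query must address all $D$ block columns and $s$ slots, then apply the selected $2b\times2b$ unitary, so by \Cref{lem:tan} it costs $\Ot(N/\sqrt m+b^2)$. Your estimate $\Ot(sb^2/k+k)$ omits the factor $D$ in the table size and omits the $b^2$ block-processing term; without that term nothing would stop you from taking $b=N$. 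The total is $\Ot\bigl(D(N/\sqrt m+b^2)\bigr)\ge\Ot(N^{3/2}b^{-1/2}+Nb)$, which is at best $\Ot(N^{4/3})$.

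The missing idea is that the random Hermitian split makes the piece norms much smaller than the normalizations. Matrix Bernstein gives $\norm{H_j}=\Ot(1/\sqrt m)$, whereas the normalization is $\Ot(\sqrt D/m)$. Sampling qDRIFT with weights $\norm{H_j}$ therefore needs only $\Ot(m)$ segments. The total QSVT query count is the weighted time $\sum_u\alpha_{K_u}\tau_u$ plus an additive $\Ot(m)$, and the weighted time has mean $\sum_j\alpha_j=\Ot(\sqrt D)$. Markov's inequality combined with qDRIFT concentration selects one sample that is both accurate and cheap (\Cref{lem:constant-selection}). This gives $\Ot\bigl((\sqrt D+m)(N/\sqrt m+b^2)\bigr)$, so $m\sim\sqrt D$ and $b\sim N^{3/7}$, not $N^{2/7}$.

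Your flattening plan also fails as stated, because $W$ cannot be Clifford. Clifford conjugation maps the Pauli $\diag(I_{N/2},-I_{N/2})$ to a Pauli, and signed permutations keep it monomial. Every nonzero block of a monomial unitary has norm $1$. The paper instead uses two independent, genuinely random sign diagonals, $P=FS_\tau FS_\sigma F$, each costing $O(\sqrt N)$ $T$ gates via a Boolean lookup with phase kickback. It proves \Cref{eq:flatness} by Hanson--Wright for the diagonal part and decoupling plus matrix Rademacher bounds for the rest.

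Your refinement mechanism cannot produce $L^{3/7}$ either. Loading $L$-bit words raises the per-query cost to $\Ot(N\sqrt{L/m}+b^2L)$. The additive QSVT precision cost $\Ot(N_qL)=\Ot(mL)$ forces $m\lesssim\sqrt D/L$, and balancing then gives only $\Ot(N^{8/7}L)$. Moreover, qDRIFT at constant relative precision needs the same $\Ot(m)$ segments at every level. So $O(\log(1/\epsilon))$ levels with a fixed piece count cost another factor $L$, not a geometric sum.

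The paper avoids both problems as follows.
\begin{itemize}
\item It keeps the word length at $B=O(n)$, independent of $\epsilon$, by rounding encoding angles relative to the current block scale. The rounded $\SEL$ is implemented exactly with one shared phase-gradient catalyst (\Cref{lem:exactmux,lem:roundedencoding}), and the next residual absorbs the rounding error.
\item It shrinks the piece count with the residual norm, $s_l\propto a_l\sqrt D/(nL)$. The level costs $O(n^2ND^{1/4}\sqrt{a_lL}+n^2a_l\sqrt D\,b^2)$ then sum geometrically, and $L$ enters only through the lookup term. Balancing gives $b\sim N^{3/7}L^{2/7}$ and $N^{8/7}L^{3/7}$.
\item Once $s_l$ would drop below one, it corrects the remaining residual with its full rounded generator, whose subnormalization decreases by at least a factor $N^2$ per correction. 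Thus $O(1+L/n)$ short-time corrections suffice. The $O(n^2L)$ queries of the first correction, at $O(N\sqrt B+b^2B)$ each, produce the $NL$ term.
\end{itemize}
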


\begin{table*}[t]
\caption{\label{tab:comparison}Number of $T$ gates required in the worst case
to implement an arbitrary unitary on $n$ qubits with Clifford+$T$ circuits,
where $N=2^n$ and
$L=n+\log_2(1/\epsilon)$. Here, fixed $\epsilon$ means a
sufficiently small positive error tolerance independent of $N$.
\Cref{thm:constant} gives a smaller ancilla
bound at fixed error, while \Cref{thm:refined} improves the dependence
on accuracy through residual refinement. Ancilla counts follow each
reference's conventions. Our constructions use clean ancillas, while
Ref.~\cite{Low2024tradingtgatesdirty} also allows dirty workspace.
The lower bound in \Cref{thm:optimality} concerns the cost in the sequential
block-SELECT model under the stated limit on information stored in the
unitaries between evolutions.}
\begin{ruledtabular}
\begin{tabular}{llll}
Reference & $T$ gates & Ancillas & Validity\\
\hline
Low, Kliuchnikov, and Schaeffer~\cite{Low2024tradingtgatesdirty} & $\Ot(N^{3/2})$ & $\Ot(N^{1/2})$ & fixed $\epsilon$\\
Tan~\cite{tan2025unitary} & $O(N^{4/3}L^{2/3}+NL)$ & $O(N^{2/3}L^{1/3}+L)$ & all $\epsilon$\\
Yuan, Zhang, and Zi~\cite{yuan2026general} & $O(N^{5/4}L^{5/8}\log N)$ & $O(N\sqrt L)$ & $L\le N$\\
This work, \Cref{thm:constant} & $\Ot(N^{8/7}\epsilon^{-6/7})$ & $\Ot(N^{6/7}\epsilon^{-8/7})$ & $\epsilon\ge N^{-1/8}\polylog N$\\
This work, \Cref{thm:refined} & $\Ot(N^{8/7}L^{3/7}+NL)$ & $\Ot(N)$ & $L\le N$\\
Lower bound~\cite{gosset2024quantum,tan2025unitary} & $\Omega(N)$ & --- & fixed $\epsilon$\\
This work, \Cref{thm:optimality} & $\Omt(N^{8/7})$ & --- & fixed $\epsilon$\\
\end{tabular}
\end{ruledtabular}
\end{table*}

The improvement shown in \Cref{tab:comparison} comes from encoding
block-sparse pieces separately, so each query accesses a smaller lookup
table. %
In the construction of Yuan et
al.~\cite{yuan2026general}, each query addresses a lookup table containing
data for all $N^2$ matrix entries. Singular value amplification uses
this encoding repeatedly, so the lookup cost is paid at every query.
We instead use separate encodings with smaller lookup tables. To do
this, we represent $U$ as evolution under a Hermitian logarithm and
partition its generator into block-sparse Hermitian pieces.
We construct these pieces using a random change of basis followed by
a random partition that preserves Hermiticity.

To approximate evolution under the sum of these pieces, we use
qDRIFT~\cite{campbell2019random,chen2021concentration}, which applies
short evolutions under randomly chosen pieces. More pieces give
smaller lookup tables, but can require more such evolutions to
achieve the same accuracy. We choose the block size and number of
pieces to balance these costs, using
$\Ot(N^{8/7})$ $T$ gates at constant precision.

For higher precision, we successively correct the remaining
approximation error using residual refinement. Each correction involves
a smaller generator, avoiding the costly precision dependence of a
single qDRIFT approximation.
All sampling is performed during classical preprocessing, and the sampled evolutions are compiled into a Clifford+$T$ circuit.

We also prove a matching lower bound in a restricted cost model that
includes our construction. The model represents the target as a
sequence of Hamiltonian evolutions, charging for matrix lookups,
operations within each block, and simulation. It limits how much
information about the target can be stored in the unitaries between
evolutions. Within this model, called the sequential block-SELECT
model, the exponent $8/7$ is optimal in the worst case at constant precision,
up to factors polynomial in $n$.

\begin{theorem}[Optimality within the model, informal version of \Cref{thm:optimality}]
\label{thm:optimality-informal}
At sufficiently small constant error in operator norm, the cost of
unitary synthesis in the sequential block-SELECT model is
$\Omt(N^{8/7})$ in the worst case.
\end{theorem}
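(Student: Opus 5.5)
The plan is a counting (information-theoretic) argument adapted to the sequential block-SELECT model. A $c$-net argument in operator norm shows that, for a small enough constant $c$, any family of circuits that approximates every $U\in U(N)$ to error $c$ must be able to specify $\Omega(N^2)$ bits. In the model, a circuit is a sequence of $R$ Hamiltonian evolutions. Each evolution is generated by a block-sparse Hermitian piece with block size $b$, accessed through a lookup table of size $S$, and is interleaved with unitaries carrying at most a bounded amount of target-specific information. I would write the total cost as a function $C(R,S,b)$ that charges three things:
\begin{itemize}
\item the lookup cost, which grows with the table size per query, roughly $\Omt(S)$;
\item the within-block operations, of order $\Omt(b)$ per query;
\item the simulation, i.e.\ the number of queries needed to realize each evolution to the required accuracy.
\end{itemize}
The lower bound then has two ingredients. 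The first is an information upper bound: the total data the circuit can encode is at most about $R\cdot S$ plus the capped information in the interleaving unitaries, and this must be at least $\Omt(N^2)$. The second is an approximation or accuracy constraint tying the number of evolutions to the norm and sparsity of the pieces.

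The key step is to make the second constraint precise. I would argue that a generic target with Hermitian logarithm $H$ of norm $\Theta(1)$ cannot be reproduced unless the pieces jointly carry enough "norm budget". Each block-sparse piece with blocks of size $b$ and $N/b$ blocks per row has a block-encoding normalization. Combining pieces sequentially (the qDRIFT-type regime) forces a number of evolutions that scales with the square of the total normalization divided by the error. The target is to derive, from a worst-case (random-like) $U$ whose logarithm is flat after any fixed change of basis, an inequality of the form
\[
R \gtrsim \frac{\Lambda^2}{\epsilon}, \qquad \Lambda \gtrsim \frac{N^2}{S}\cdot\frac{1}{\sqrt{N}},
\]
or a similar relation. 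Then I would optimize $C = R\,(S + b) + (\text{simulation})$ subject to $R S \gtrsim N^2$ and this norm constraint, solving the balance equations by matching exponents. The parameter choices in the upper bound suggest $S\sim N^{6/7}$ and $R\sim N^{2/7}$ up to logs. The matching lower bound should give $C = \Omt(N^{8/7})$ as the minimum of the resulting piecewise power law. I would check that the optimizer lands on the same exponents as \Cref{thm:constant}, which would confirm the model's constraints are the right ones.

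The main obstacle is the accuracy constraint, namely showing that the sequential model genuinely forces the quadratic (qDRIFT-like) scaling of the number of evolutions in the normalization, rather than a linear one. This requires ruling out clever coherent combinations of pieces. I expect the paper's model definition to build in this restriction, via the limit on information stored between evolutions. So I would first try to reduce to a per-evolution bound: each evolution changes the unitary by at most its norm times its time. Averaging over a Haar-random target, I would then show that the achieved unitary's distance to the target cannot drop below $c$ unless the stated resource inequality holds. I would use concentration of measure on $U(N)$ to pass from average to worst case.
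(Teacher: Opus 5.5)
\textbf{There is a genuine gap, located exactly where you expected it: the accuracy constraint.} Your starting point is right: a volumetric counting bound on $U(N)$ together with the cap on interleaver bits. The second ingredient, however, is not available in this model. The constraint $R\gtrsim\Lambda^2/\epsilon$ on the number of evolutions is false.

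\Cref{def:design} places no restriction on how generators are combined, so a design may consist of a single evolution. For example, take $s=1$, $H_1=PH_0P^T$, interleavers $W_0=P$ and $W_1=P^T$, and phase $\varphi$. Then $n_1=D^2$, $\beta_1$ is of order $\sqrt D$ up to logarithms, and the cost is of order $\sqrt D(N+b^2)$. The model charges simulation through the multiplier $\max\{1,\beta_k\}$ of each evolution, not through the number of evolutions. The interleaver budget only prevents the $W_j$ from storing the target; it does not exclude coherent combinations. The per-evolution bound $\norm{e^{iH_k}-I}\le\norm{H_k}$ gives only a linear constraint, which cannot produce the exponent $8/7$.

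Your bookkeeping also departs from the model in three ways:
\begin{itemize}
\item the lookup charge is $b\sqrt{n_k}$, the square root of the $b^2n_k$ stored entries, not $S$;
\item the processing charge is $b^2$, not $b$;
\item each stored block carries $\Theta(b^2)$ real parameters, so the information constraint is $b^2\sum_kn_k\gtrsim N^2$, not $RS\gtrsim N^2$.
\end{itemize}
With a lookup charge proportional to $S$ and $RS\gtrsim N^2$, you would get cost at least $N^2$, contradicting \Cref{thm:constant}. Indeed, your own choice $S\sim N^{6/7}$, $R\sim N^{2/7}$ has $RS\sim N^{8/7}$.

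\textbf{The missing idea is that the quadratic dependence enters through the compressibility of generators, not through a simulation lower bound.} Direct storage alone only forces $\sum_kn_k\gtrsim D^2$, and hence cost of order $N$. The reason is that a single generator with all $D^2$ blocks and $\beta_1\le1$ costs only $N+b^2$. Such a generator, however, carries little information. By unbiased block sampling (\Cref{lem:sparsification}), it can be replaced within error $\epsilon_k$ by an estimator built from $t_k\approx D(1+\beta_k^2/\epsilon_k^2)\log N$ sampled adjoint block pairs.

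The estimators for different generators are independent and unbiased. Their errors in the product therefore add in quadrature by a matrix martingale bound (\Cref{lem:sparsification-composition}). This permits tolerances $\epsilon_k^2\propto\epsilon^2\beta_k/a$ subject to $\sum_{k\in\mathcal O}\beta_k\le a$. After each stored block is rounded to $O(b^2\log N)$ bits, Szarek's covering bound and the interleaver cap force
\[
\sum_{k\notin\mathcal O}n_k+\sum_{k\in\mathcal O}\min\{n_k,t_k\}\gtrsim D^2/\log N
\]
for every valid allocation (\Cref{lem:compression-information}).

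The cost bound then follows from a threshold allocation. Sample exactly the generators whose $n_k$ exceeds a threshold of order $D\beta_ka/\epsilon^2$, up to logarithmic factors. Fix $a$ at the crossing point of $a\mapsto\sum_{k\in\mathcal O(a)}\beta_k$ with the line $a$. Checking which of the four resulting terms dominates yields cost $\Omt(\sqrt\epsilon\,ND^{1/4}+\epsilon\sqrt D\,b^2)$, hence $\epsilon^{4/7}N^{8/7}$ after minimizing over $b$.

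No concentration-of-measure step from average to worst case is needed. The counting shows directly that half of Haar targets require $\Omega(N^2\log(1/\epsilon))$ bits under a fixed decoder, uniformly over all designs and block sizes.
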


The rest of the paper is organized as follows.
\Cref{sec:prelim} introduces the notation and error criterion and
summarizes the tools for data loading and Hamiltonian simulation.
\Cref{sec:overview} illustrates the construction and its cost balance.
\Cref{sec:pieces} develops the
flattening, partitioning, and encoding steps.
\Cref{sec:constant} bounds the resources for unitary synthesis
using qDRIFT and QSVT, and \Cref{sec:refinement} improves the
precision dependence through residual refinement.
\Cref{sec:optimality} states the lower bound in the sequential
block-SELECT model. These last three sections include proof sketches.
\Cref{sec:discussion} concludes the work.
The appendices contain the complete proofs and an extension of the
lower bound to mixed block sizes.

\section{Preliminaries}
\label{sec:prelim}

We first set the notation and error criterion. The remaining
subsections state the data-loading and simulation tools needed.

\subsection{Notation}

Let $n$ be the number of data qubits and $N=2^n$. We order the
computational basis lexicographically by bit strings $x\in\{0,1\}^n$.
For a positive integer $r$, write $[r]=\{1,\dots,r\}$.

The construction groups matrix entries into square blocks. The
block size balances the cost of loading the data with the cost of
implementing the quantum operations.
Let $b$ be the number of rows and columns in each square block,
chosen as a power of two with $b\le N$. Then $D=N/b$ is the number
of block rows and of block columns. For a matrix $A\in\C^{N\times N}$,
group its rows and columns consecutively in sets of $b$. The
\emph{block} $A_{IJ}$ is the $b\times b$ submatrix in block row $I$
and block column $J$, with $I,J\in[D]$.

For a row index $x$, write $x=(x_{\mathrm h},x_{\mathrm l})$. Here
$x_{\mathrm h}$ identifies the block row and $x_{\mathrm l}$ the
row within it. They consist of the first $\log_2D$ and last
$\log_2b$ bits, respectively. The same convention applies to column
indices.
\Cref{fig:block-notation} illustrates the partition and its bit labels.

\begin{figure}[htb]
\centering
\begin{tikzpicture}[x=0.85cm,y=0.85cm,font=\footnotesize]
\fill[black!5] (0,0) rectangle (4,4);
\fill[blue!15] (2,2) rectangle (3,3);
\draw[step=0.5,black!20,very thin] (0,0) grid (4,4);
\draw[step=1,black!70,thick] (0,0) grid (4,4);
\draw[blue!75!black,thick] (2,2) rectangle (3,3);
\node[fill=blue!15,inner sep=1pt] at (2.5,2.5) {$A_{23}$};
\foreach \block in {1,2,3,4}{
  \node at ({\block-0.5},4.25) {$\block$};
  \node at (-0.3,{4.5-\block}) {$\block$};
}
\node at (2,4.7) {block column $J$ ($D=4$)};
\node[rotate=90] at (-0.85,2) {block row $I$};
\draw[{Stealth[length=1.3mm]}-{Stealth[length=1.3mm]}]
  (4.25,2) -- node[right] {$b=2$} (4.25,3);
\node at (2,-0.75)
  {$x=(\underbrace{01}_{x_{\mathrm h}},\underbrace{1}_{x_{\mathrm l}})$};
\end{tikzpicture}
\caption{\label{fig:block-notation}An $N=8$ matrix partitioned into
$D=4$ block rows and columns, with $b=2$ rows and columns per block.
The shaded submatrix $A_{23}$ contains rows $3$--$4$ and columns
$5$--$6$. For the row label $x=011$, the high bits $01$ select
row group $I=2$, and the low bit $1$ selects its second row. %
}
\end{figure}

We use $\norm{\cdot}$ for the operator norm and $\norm{\cdot}_F$ for
the unnormalized Frobenius norm. Logarithms are natural unless
written $\log_2$. Both $\Ot(\cdot)$ and $\Omt(\cdot)$ hide factors
polynomial in $n$ only.
Define $F=\mathsf H^{\otimes n}$, where $\mathsf H$ is the Hadamard gate.
$F$ is real and symmetric, with $F^2=I$. For a sign vector
$\sigma\in\{\pm1\}^N$, let $S_\sigma=\diag(\sigma)$ denote the
corresponding diagonal sign matrix.

\subsection{Implementation with clean ancillas}

We measure error on the full output, including the ancillas, using
the standard convention of Ref.~\cite{tan2025unitary}.

\begin{definition}[Approximate implementation with clean ancillas]
\label{def:implementation}
A unitary $\widetilde U$ on $n+m$ qubits \emph{implements} $U\in U(2^n)$ to
error $\epsilon$ if
\[
\norm{\widetilde U(\ket{0^m}\otimes I)-\ket{0^m}\otimes U}\le\epsilon,
\]
where the ancillas are written first. The
$T$-count of $\widetilde U$ is the number of $T$ gates in a Clifford+$T$
circuit for $\widetilde U$, and $m$ is its number of ancillas.
\end{definition}
If $\widetilde U_j$ implements $U_j$ to error
$\epsilon_j$ on a shared ancilla register, then
$\widetilde U_2\widetilde U_1$ implements $U_2U_1$ to error
$\epsilon_1+\epsilon_2$. This follows from the triangle inequality
and unitary invariance of the operator norm.

\subsection{Data loading with Clifford+$T$ circuits}

Lookup tables and classical descriptions of unitaries are inputs to
the compiler. We count the $T$ gates and ancillas required to access
the data and apply the selected unitaries.
Quantum lookup methods trade ancillary workspace for a
lower $T$-count per lookup~\cite{Low2024tradingtgatesdirty,zhu2025unified}.
For example, at constant precision, the optimal worst-case $T$-count
for $n$-qubit diagonal-unitary synthesis is
$\Theta(N^{1/2})$~\cite{gosset2024quantum}.
Related lookup-based sparsity and amplitude oracles reduce
block-encoding gate counts for second-quantized fermionic
Hamiltonians~\cite{liu2025blockencoding}.

The first primitive reads a table entry into an output register by
bitwise XOR.

\begin{lemma}[Boolean and diagonal sign oracles~\cite{Low2024tradingtgatesdirty,gosset2024quantum}]
\label{lem:boolean}
Let $f:\{0,1\}^m\to\{0,1\}^r$ be an arbitrary function. The Boolean oracle
\[
O_f:\quad\ket x\ket y\mapsto\ket x\ket{y\oplus f(x)}
\]
has an exact Clifford+$T$ implementation with $O(\sqrt{r2^m})$
$T$ gates and ancillas.

For any $\sigma\in\{\pm1\}^{2^m}$, the \emph{diagonal sign oracle}
$S_\sigma:=\diag(\sigma)$ acts on computational basis states as
$S_\sigma\ket x=\sigma_x\ket x$.
We implement $S_\sigma$ using the Boolean oracle $O_f$ through
\emph{phase kickback}. Set $r=1$ and $f(x)=(1-\sigma_x)/2$,
and prepare the output qubit in $\ket{-}=(\ket0-\ket1)/\sqrt2$.
A bit flip multiplies $\ket{-}$ by $-1$, so $O_f$ acts as
\[
\ket x\ket{-}\xrightarrow{O_f}(-1)^{f(x)}\ket x\ket{-}
=\sigma_x\ket x\ket{-}.
\]
Thus $O_f$ applies $S_\sigma$ to the input register and preserves
$\ket{-}$ on the output qubit. Preparing $\ket{-}$ and returning it to $\ket0$
use only Clifford gates, so $S_\sigma$ has an exact implementation
with $O(2^{m/2})$ $T$ gates and ancillas.
\end{lemma}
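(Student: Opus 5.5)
The Boolean oracle bound is the one substantive claim here; the sign-oracle part then follows by the phase-kickback computation already written out in the statement. I would build $O_f$ as a two-level lookup, also called SELECT-SWAP, which splits the $2^m$ table entries into batches that are copied out in parallel.

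\emph{Setup.} Choose a batch size $\lambda=2^k$, where $k$ is the power of two nearest to $\sqrt{2^m/r}$, clipped to lie in $[1,2^m]$. Write $x=(x_{\mathrm h},x_{\mathrm l})$, with $x_{\mathrm h}$ the first $m-k$ bits and $x_{\mathrm l}$ the last $k$ bits.

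\emph{Step 1: SELECT.} Allocate $\lambda$ ancilla registers of $r$ qubits each, all starting in $\ket0$. Apply a multiplexed XOR controlled by $x_{\mathrm h}$. For each value $h$ of $x_{\mathrm h}$, it writes the $\lambda$ words $f(h,0),\dots,f(h,\lambda-1)$ into the $\lambda$ registers. This is done with the unary-iteration circuit over the $2^{m-k}$ values of $x_{\mathrm h}$, which costs $O(2^{m-k})$ $T$ gates. The writes themselves are CNOTs fanned out from a single control qubit, so they are Clifford.

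\emph{Step 2: SWAP.} Apply a controlled-swap network on $x_{\mathrm l}$ that routes register $x_{\mathrm l}$ into the first register. It has $k$ layers, and layer $j$ conditionally exchanges register pairs at distance $2^j$. The total is $O(\lambda r)$ controlled-swaps, so $O(\lambda r)$ $T$ gates.

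\emph{Step 3: copy and uncompute.} CNOT the first register into $y$; this is Clifford. Then run Steps 2 and 1 in reverse. That returns every ancilla to $\ket0$, leaving exactly $\ket x\ket{y\oplus f(x)}$.

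\emph{Counting.} Both the $T$-count and the ancilla count are $O(2^m/\lambda+\lambda r)$. The choice $\lambda\approx\sqrt{2^m/r}$ gives $O(\sqrt{r2^m})$, with the rounding to a power of two costing only a constant factor. In the degenerate case $r>2^m$, take $\lambda=1$. The bound then reads $O(2^m+r)$, which can exceed $\sqrt{r2^m}$ when $r\gg 2^m$. The statement is only used with $r$ polynomial in $m$ or smaller than $2^m$, so I would either restrict to $r\le 2^m$ or read the bound as $O(\sqrt{r2^m}+r)$. The construction is exact because every gate is Toffoli-based.

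\emph{Main obstacle: uncomputation cost.} Clean uncomputation of the SELECT stage must not double the cost asymptotically. Reversing it exactly costs the same as the forward pass, which gives only a factor of $2$, so this is benign. I could alternatively use measurement-based uncomputation, but that is not needed for the $O(\cdot)$ bound.

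\emph{Sign oracle.} With $r=1$, the construction yields $O(2^{m/2})$. Correctness is the computation shown in the statement: $O_f\ket x\ket{-}=(-1)^{f(x)}\ket x\ket{-}=\sigma_x\ket x\ket{-}$, using $f(x)=(1-\sigma_x)/2$. Preparing and unpreparing $\ket{-}$ takes only $X$ and $\mathsf H$, which are Clifford.
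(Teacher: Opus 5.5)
Your SELECT--SWAP construction is the clean-ancilla lookup of Low, Kliuchnikov, and Schaeffer that the paper cites for the Boolean oracle, and your sign-oracle argument is the phase-kickback computation already written in the statement, so the proposal is correct and follows the paper's route. Your caveat about $r>2^m$ is unnecessary, and the statement holds for all $r$: with $\lambda=1$ the swap stage is empty and you can XOR directly into $y$, which gives $O(2^m)\le O(\sqrt{r2^m})$ $T$ gates and $O(m)$ ancillas (also, it is $\lambda$, not $k$, that should be the power of two nearest $\sqrt{2^m/r}$).
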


A $\SEL$ oracle uses an address register to select and apply one of
$M$ unitaries acting on the same $d$-dimensional target space.

\begin{lemma}[$\SEL$ oracle~\cite{tan2025unitary}]
\label{lem:tan}
Let $M=2^\mu$, $d=2^\kappa$, and $V_x\in U(d)$ be arbitrary. Define
\begin{equation}
W=\sum_{x\in\{0,1\}^\mu}\ket x\!\bra x\otimes V_x.
\label{eq:multiplexer}
\end{equation}
For every $\delta\in(0,1/2)$ there is a Clifford+$T$ circuit that
implements $W$ to error $\delta$ with $O(d\sqrt{M\Lambda}+d^2\Lambda)$
$T$ gates and ancillas, where $\Lambda=\kappa+\log_2(1/\delta)$.
\end{lemma}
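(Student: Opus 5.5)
The plan is to write $W$ as a product of a lookup step and a data-controlled unitary, implemented as a multiplexed sequence of two-level rotations. \emph{Decomposition.} Factor each $V_x$ into $O(d^2)$ elementary rotations of a fixed combinatorial type. One option is Givens rotations on fixed coordinate pairs $(i,j)$ in a fixed order, plus a final diagonal phase. Another is a QR/cosine--sine decomposition into uniformly controlled $R_y$ and $R_z$ rotations on the $\kappa$ target qubits. The second option gives $O(d^2)$ single-qubit rotation angles per $x$, each uniformly controlled on the other target qubits. The only dependence on $x$ is then through a list of $K=O(d^2)$ real angles $\theta_{x,1},\dots,\theta_{x,K}$. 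Round each angle to $\Lambda'=\Lambda+O(\log d)$ bits. Each rounding then contributes error $O(2^{-\Lambda'})$, and the $K$ rotations together contribute at most $\delta/2$.

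\emph{Lookup.} Define $f(x)\in\{0,1\}^{K\Lambda'}$ as the concatenation of the rounded angles, and apply the Boolean oracle $O_f$ of \Cref{lem:boolean}. This writes all angles of $V_x$ into a clean register. Its cost is $O(\sqrt{K\Lambda' M})=O(d\sqrt{M\Lambda})$ $T$ gates and ancillas, up to the harmless $\log d$ absorbed into $\Lambda\ge\kappa$. Next, apply the angle-controlled rotations. A rotation $R(\theta)$ with $\theta$ held in a $\Lambda'$-bit register can be done by phase gradient addition. One prepares a phase-gradient state once, with $O(\Lambda'\cdot\Lambda')$ $T$ gates or fewer via synthesis. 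Each controlled rotation then needs one $\Lambda'$-bit adder, which costs $O(\Lambda')$ $T$ gates. The uniformly controlled rotations inside the decomposition are themselves data-dependent, so their control on target qubits is folded into the table. Concretely, $\theta_{x,k}$ ranges over the angles of the uniformly controlled gate, and we select by target bits via the same adder trick. Alternatively, use Givens rotations, where each rotation acts on a fixed pair and is controlled by a comparator on the target register using $O(\kappa)$ $T$ gates. The total is $O(K\Lambda')=O(d^2\Lambda)$ $T$ gates. Finally, uncompute the table by applying $O_f$ again, which is exact, so the ancillas return clean.

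\emph{Error and accounting.} The only errors come from angle rounding and the phase-gradient state preparation, which is synthesized to accuracy $\delta/4$ with $O(\Lambda)$ $T$ gates. By the composition remark after \Cref{def:implementation}, the total error is at most $\delta$. The sum of the costs is $O(d\sqrt{M\Lambda}+d^2\Lambda)$.

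The main obstacle is making the per-rotation cost $O(\Lambda)$ rather than $O(\Lambda+\kappa)$ times extra overhead, while keeping the number of angles $O(d^2)$. Uniformly controlled rotations in the cosine--sine decomposition naively need $d/2$ angles per gate selected by target bits. This requires either a second small lookup inside the target register, or an arrangement where the total angle count stays $O(d^2)$ and each is applied with a single adder. I would first try the Givens sequence ordering: every rotation then touches a fixed pair, no target-dependent selection is needed beyond Clifford-plus-comparator logic amortized over the sequence, and I would verify that its $T$ overhead is $O(\kappa)\le O(\Lambda)$ per rotation.
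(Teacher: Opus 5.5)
Your construction matches, in outline, the one the paper gives in its own version of this lemma (\Cref{lem:exactmux}, which follows Tan's lookup scheme). One Boolean lookup writes all rounded angle words of $V_x$ into a clean register at cost $O(d\sqrt{M\Lambda})$. The data-dependent rotations are applied by controlled addition into a phase-gradient register, and the lookup is then undone.

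The real difference is the decomposition. The paper uses $d-1$ uniformly controlled single-qubit gates. For each one, it reversibly selects the angle word indexed by the control qubits from the loaded table with $O(d\Lambda)$ Toffoli gates, performs one $O(\Lambda)$ modular addition, and unselects. This is exactly the ``second small lookup inside the target register'' you mention.

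Your Givens route also works, and the obstacle you raise is not a real one. For a fixed pair $(i,j)$:
\begin{itemize}
\item pick a bit $k$ where $i$ and $j$ differ;
\item apply CNOTs controlled on bit $k$ so that the pair differs only in bit $k$ (Clifford);
\item compute the equality flag on the other $\kappa-1$ bits with $O(\kappa)$ $T$ gates;
\item rotate qubit $k$ by a flag-controlled phase-gradient addition.
\end{itemize}
Since $\Lambda\ge\kappa$, this costs $O(\kappa+\Lambda')=O(\Lambda)$ per factor and $O(d^2\Lambda)$ in total.

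The trade-off between the two routes is as follows. Givens needs $\Theta(d^2)$ additions and accumulates rounding error over $\Theta(d^2)$ sequential factors, which forces $\Lambda'=\Lambda+\kappa+O(1)$. Uniformly controlled gates are block diagonal, so their rounding error is only $O(d2^{-\Lambda'})$ and they need only $O(d)$ additions. Both routes meet the stated bound.

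The step that fails as written is the catalyst. You first say $O(\Lambda'^2)$ $T$ gates, then $O(\Lambda)$ at accuracy $\delta/4$, and the latter has no justification. The gradient state has $\Theta(\Lambda')$ nontrivial single-qubit phases. Synthesizing each to accuracy about $\delta/\Lambda'$ costs $O(\Lambda'\log(\Lambda'/\delta))$ by \Cref{eq:single-qubit-synthesis}. This term is absorbed into $O(d\sqrt{M\Lambda}+d^2\Lambda)$ only when $\log(1/\delta)=O(d^2+d\sqrt{M/\Lambda})$. That covers every use in this paper, where $\Lambda=O(n)$, but not every $\delta\in(0,1/2)$ as stated.

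You also need to unprepare the catalyst so the ancillas return to $\ket0$, as \Cref{def:implementation} requires. Its error then contributes $2\delta_c$, independent of the number of additions.

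The paper's \Cref{lem:exactmux} avoids the preparation cost by taking $\ket\Psi$ as a given catalyst. It charges the $O(B\log(B/\delta_c))$ preparation once, separately, with $B=O(n)$ in \Cref{thm:refined}. As it stands, your argument proves the stated bound plus an additive $O(\Lambda\log(\Lambda/\delta))$. Removing that term requires a cheaper preparation, which you have not supplied.
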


\paragraph*{Single-qubit synthesis.}
\label{tool:rotations}
Single-qubit synthesis~\cite{ross2016optimal} supplies the variable-angle
rotations used in simulation and the global phase of the target.
Every $V\in SU(2)$ has a Clifford+$T$ circuit approximation
$\widetilde V$ satisfying
\begin{equation}
\norm{\widetilde V-V}\le\delta,
\qquad T(\widetilde V)=O(\log(1/\delta)),
\label{eq:single-qubit-synthesis}
\end{equation}
where $T(\widetilde V)$ denotes the circuit's $T$-gate count.

In particular, a
global phase $e^{i\varphi}$ is implemented to error $\delta$ by applying
$\diag(e^{i\varphi},e^{-i\varphi})$ to a clean ancilla in state $\ket0$.

\subsection{Block encodings and Hamiltonian simulation}

A block encoding places a scaled matrix in the zero-ancilla block
of a larger unitary.

\begin{definition}[Block encoding]
A unitary $U_A$ on $a+n$ qubits is an $(\alpha,a,\eta)$-block encoding of
$A\in\C^{N\times N}$ if
\[
\norm{A-\alpha(\bra{0^a}\otimes I)U_A(\ket{0^a}\otimes I)}\le\eta.
\]
\end{definition}

Here $\alpha$ is the subnormalization factor, $a$ is the number of encoding
ancillas, and $\eta$ bounds the matrix error after rescaling by
$\alpha$.
We count one application of $U_A$ or its inverse $U_A^\dagger$,
including their singly controlled versions, as one \emph{query}
to the block encoding. The total number of such queries is the
\emph{query complexity}.

An exact encoding of a Hermitian matrix can be used to
simulate its evolution by quantum singular-value transformation
(QSVT). The following lemma gives the implementation guarantee
and query bound.

\begin{lemma}[Hamiltonian simulation by QSVT~\cite{gilyen2019quantum,low2019hamiltonian}]
\label{lem:hamsim}
Let $U_A$ be an $(\alpha,a,0)$-block encoding of a Hermitian matrix $A$,
$t\in\R$, and $\delta\in(0,1/2)$. There is a circuit $\mathcal S_t(U_A)$
that implements $e^{iAt}$ to error $\delta$ in the sense of
\Cref{def:implementation}. It makes $Q_t=O(\alpha\abs t+\log(1/\delta))$ queries to $U_A$,
$U_A^\dagger$, or their singly controlled versions.

For $\theta=e\alpha\abs t/2\in(0,1)$, the query complexity also satisfies
\begin{equation}
Q_t=O\!\left(1+\frac{\log(1/\delta)}{\log(1/\theta)}\right).
\label{eq:hamsim-small-time}
\end{equation}

If each query, including controlled queries and inverses, has
implementation error at most $\eta$, the overall implementation
error is at most $\delta+Q_t\eta$.
\end{lemma}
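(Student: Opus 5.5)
The statement is Lemma~\ref{lem:hamsim}, and the plan is to assemble it from the standard QSVT pipeline~\cite{gilyen2019quantum,low2019hamiltonian} and then add an error-propagation argument.

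\textbf{Step 1 (polynomial approximation).} Write $e^{iAt}=\cos(At)+i\sin(At)$ and set $\tau=\alpha t$. On $[-1,1]$, approximate $\cos(\tau x)$ and $\sin(\tau x)$ by their truncated Jacobi--Anger expansions in Chebyshev polynomials. The coefficients are Bessel functions $J_k(\tau)$, and they satisfy
\[
\abs{J_k(\tau)}\le \frac{(e\abs\tau/2)^k}{k^k}.
\]
Truncating at degree $R$ with $R=O(\abs\tau+\log(1/\delta))$ therefore makes the tail at most $\delta/4$. In the regime $\theta=e\abs\tau/2<1$, the tail is bounded by $\sum_{k>R}\theta^k=O(\theta^R)$. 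It is then enough to take $R=O(1+\log(1/\delta)/\log(1/\theta))$, which gives the bound in \cref{eq:hamsim-small-time}. Before using these polynomials, rescale them by a constant factor such as $1/2$ so that their sup-norm on $[-1,1]$ is at most $1$.

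\textbf{Step 2 (QSVT and combination).} The truncated polynomials have definite parity and are bounded by $1$ on $[-1,1]$. QSVT therefore gives exact block encodings of them using $O(R)$ queries to $U_A$ and $U_A^\dagger$, with one extra ancilla for the phase rotations. A linear combination of unitaries with one further ancilla adds the even and odd parts, which yields a block encoding of $(\cos(At)+i\sin(At))/2$ up to error $\delta/2$. This is a block encoding of a near-unitary with subnormalization $2$. Apply robust oblivious amplitude amplification: a single round (three uses of the encoding) maps it to a block encoding of $e^{iAt}$ with error $O(\delta)$.

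\textbf{Step 3 (clean-ancilla form).} Having $\norm{\Pi\widetilde W\Pi-\ket{0}\!\bra{0}\otimes e^{iAt}}\le\epsilon'$ with the target unitary does not immediately give the column-norm condition of \Cref{def:implementation}. We need a bound on $\norm{\widetilde W(\ket{0^m}\otimes I)-\ket{0^m}\otimes e^{iAt}}$. Since $\widetilde W$ is unitary and the top block is $\epsilon'$-close to a unitary, the leakage out of the zero-ancilla block satisfies $\norm{(I-\Pi)\widetilde W\Pi}^2\le 1-(1-\epsilon')^2\le 2\epsilon'$. This square-root loss is the step I expect to be the main obstacle. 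It is handled by running the earlier steps with accuracy $\delta^2/C$, which only changes constants in $\log(1/\delta)$. Constants are then rebalanced so the final error is $\delta$. Single-qubit phase rotations are treated as exact here; their synthesis cost is accounted for elsewhere.

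\textbf{Step 4 (noisy queries).} Suppose each query is replaced by an implementation that is $\eta$-close in the sense of \Cref{def:implementation}, including its controlled and inverse versions. The circuit consists of $Q_t$ queries interleaved with exact unitaries. The telescoping/hybrid argument noted after \Cref{def:implementation} then adds at most $\eta$ per query, giving total error $\delta+Q_t\eta$.
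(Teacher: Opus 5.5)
Your proposal is correct and follows the paper's argument: the paper cites Theorem~58 and Lemma~59 of Ref.~\cite{gilyen2019quantum}, whose proofs carry out your Jacobi--Anger truncation, parity-split QSVT, LCU, and robust oblivious amplitude amplification (the general and small-time degree bounds come from the choices $q=1$ and $q=\log(1/\theta)$ in Lemma~59), and then, as in your Steps~3 and~4, uses block-encoding error $\delta^2/2$ so that the full-output error is $\sqrt{2\cdot\delta^2/2}=\delta$, and adds the query errors by the composition rule. Two slips affect only constants and are easy to repair: rescaling the polynomials by $1/2$ would make the LCU output $e^{iAt}/4$ rather than $e^{iAt}/2$, which a single OAA round cannot boost to $e^{iAt}$ (rescale only by $1/(1+\delta/4)$), and $\sum_{k>R}\theta^k$ is $O(\theta^R)$ only for $\theta$ bounded away from $1$, so keep the $k^{-k}$ factor from the Bessel bound or note that for $\theta\ge1/e$ the general bound already implies the small-time one.
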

The query bounds follow from
\cite[Theorem~58 and Lemma~59]{gilyen2019quantum}. We use
block-encoding error $\delta'=\delta^2/2$: the corresponding
full-output error is at most $\sqrt{2\delta'}=\delta$, as required
by \Cref{def:implementation}. The general and small-time query
bounds follow from Lemma~59 by choosing $q=1$ and
$q=\log(1/\theta)$, respectively. Errors in implementing the
queries add by the composition rule in \Cref{sec:prelim}.

Beyond the queries, the circuit uses $O((a+1)Q_t)$ Clifford and
Toffoli gates, $O(Q_t)$ single-qubit rotations, and $O(a+1)$
reusable clean workspace qubits for the zero-ancilla phase operations.

\subsection{Random product formulas}

To simulate a sum of Hermitian matrices, we combine short evolutions
under the individual pieces. For $A=\sum_{k=1}^sA_k$, the randomized
product formula qDRIFT~\cite{campbell2019random} approximates $e^{iA}$
by choosing these pieces at random (see also
Ref.~\cite{childs2019faster} for randomized orderings).

Assume that each $A_k$ is nonzero and Hermitian, and let
$\lambda=\sum_k\norm{A_k}$. Choose a positive integer $N_q$.
During classical preprocessing, sample independent indices
$K_j\in[s]$ for $j=1,\ldots,N_q$ and set
the evolution times $\tau_j$ by
\begin{equation}
\Pr(K_j=k)=\frac{\norm{A_k}}{\lambda},
\qquad
\tau_j=\frac{\lambda}{N_q\norm{A_{K_j}}}.
\label{eq:qdrift-sampling}
\end{equation}
We call the exact evolution $V_j=e^{i\tau_jA_{K_j}}$ a
\emph{segment}. Applying the segments in sequence gives
$V=V_{N_q}\cdots V_1$, with $V_1$ applied first. Thus $N_q$
counts the sampled evolutions in the product. The same piece
$A_k$ may be selected more than once. We account separately for
errors in implementing these segments.

For synthesis, we choose a fixed sequence of short evolutions whose
product approximates $e^{iA}$ in operator norm.
Chen et al.~\cite{chen2021concentration} show that a randomly sampled
sequence has this property with high probability when it contains
sufficiently many segments. The following lemma gives the required
segment count in terms of the target error and failure probability.

\begin{lemma}[Concentration of qDRIFT~\cite{chen2021concentration}]
\label{lem:qdrift}
Let $A=\sum_{k=1}^sA_k$ with nonzero Hermitian $A_k\in\C^{d\times d}$,
and let $\lambda=\sum_k\norm{A_k}$. For $0<\epsilon\le4\lambda$ and
$0<\delta<1$, qDRIFT uses
\[
N_q=O\!\left(\frac{\lambda^2}{\epsilon^2}\log\frac{2d}{\delta}
+\frac{\lambda^2}{\epsilon}\right)
\]
segments to produce a product $V$ that satisfies
$\norm{V-e^{iA}}\le\epsilon$ with probability at least $1-\delta$.
\end{lemma}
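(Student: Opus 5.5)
This lemma is the concentration bound of Chen et al., and I would derive it as a martingale concentration argument on the error of the random product. Throughout, write $U=e^{iA}$ and $U_j=e^{iA/N_q}$.

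\textbf{Expected segment.} Consider one random segment $V_j=e^{i\tau_jA_{K_j}}$, where the exponent $\tau_jA_{K_j}$ has norm exactly $\lambda/N_q$ and mean $\E[\tau_jA_{K_j}]=A/N_q$. A second-order Taylor expansion gives $V_j=I+i\tau_jA_{K_j}+R_j$ with $\norm{R_j}\le\lambda^2/(2N_q^2)$. The same expansion holds for $U_j$. Hence the bias satisfies $\norm{\E V_j-U_j}=O(\lambda^2/N_q^2)$.

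\textbf{Telescoping decomposition.} Write
\[
V-U=\sum_{j=1}^{N_q}V_{N_q}\cdots V_{j+1}(V_j-U_j)U_{j-1}\cdots U_1 ,
\]
and split each summand into two parts.

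The mean part is $V_{N_q}\cdots V_{j+1}(\E V_j-U_j)U_{j-1}\cdots U_1$. Summed over $j$, it is deterministically bounded by $O(\lambda^2/N_q)$, because the outer products are unitary.

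The fluctuation part is $X_j=V_{N_q}\cdots V_{j+1}(V_j-\E V_j)U_{j-1}\cdots U_1$. As written, $X_j$ depends on later segments, which breaks the martingale structure. To fix this I reverse the order: $\norm{V-U}=\norm{V^\dagger-U^\dagger}$, and I telescope so that only the already-sampled factors appear on the left. The sum then becomes $\sum_j Y_j$, where
\[
Y_j=U_{N_q}\cdots U_{j+1}(V_j-\E V_j)V_{j-1}\cdots V_1 .
\]
This sequence is adapted to $K_1,\dots,K_j$ and has zero conditional mean, so $\{Y_j\}$ is a matrix martingale difference sequence. Each increment satisfies $\norm{Y_j}\le 2\lambda/N_q$ almost surely.

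\textbf{Concentration.} I apply the matrix Freedman (or Azuma) inequality for $d\times d$ matrices, after dilating to Hermitian $2d\times 2d$ matrices. This gives
\[
\Pr\Bigl(\Bigl\|\sum_j Y_j\Bigr\|\ge t\Bigr)\le 2d\,\exp\bigl(-c\,t^2N_q/\lambda^2\bigr).
\]
Setting $t=\epsilon/2$ and requiring the failure probability to be at most $\delta$ gives $N_q\gtrsim(\lambda^2/\epsilon^2)\log(2d/\delta)$. Requiring the bias to be at most $\epsilon/2$ gives $N_q\gtrsim\lambda^2/\epsilon$. Adding the two requirements yields the stated $N_q$. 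The hypothesis $\epsilon\le4\lambda$ only ensures the Taylor regime is nonvacuous, that is, $\lambda/N_q\lesssim1$.

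\textbf{Main obstacle.} The delicate step is the adaptedness of the telescoped terms, which forces the reversed decomposition above. Once that is in place, the proof is a standard application of the matrix martingale inequality.
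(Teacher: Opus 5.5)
Your proof is correct. It takes a self-contained route that differs somewhat from the paper's.

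\textbf{The paper's route.} The paper splits $V-e^{iA}=(V-\E V)+(\E V-e^{iA})$ through the averaged product $\E V=(\E V_1)^{N_q}$. It then cites two results of Chen et al.\ as black boxes:
\begin{itemize}
\item Proposition~3.2 for the bias, $\norm{e^{iA}-\E V}\le\lambda^2/N_q$;
\item Proposition~3.3 for the fluctuation, $\Pr(\norm{V-\E V}\ge\epsilon/2)\le2d\,e^{-N_q\epsilon^2/(44\lambda^2)}$.
\end{itemize}
It takes $N_q$ to be the least integer at least $\max\{44\lambda^2\epsilon^{-2}\log(2d/\delta),\,2\lambda^2/\epsilon\}$.

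\textbf{Your route.} You telescope around the unitary steps $U_j=e^{iA/N_q}$ and split each increment as $V_j-U_j=(V_j-\E V_j)+(\E V_j-U_j)$.
\begin{itemize}
\item The second part gives the same $\lambda^2/N_q$ bias bound, and it holds surely for every sample.
\item The first part is a martingale difference sequence with unitary prefactors and increments of norm at most $2\lambda/N_q$. Standard matrix Azuma (Tropp) applies to it after Hermitian dilation.
\end{itemize}
Your version removes the dependence on Chen et al.\ and rests only on a textbook matrix martingale inequality, at the price of an unspecified absolute constant. The paper's version is shorter and inherits explicit constants. The adaptedness issue you flag is not a real obstacle. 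With i.i.d.\ indices you could equally keep the first telescoping and use the reversed filtration.

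\textbf{One correction.} The hypothesis $\epsilon\le4\lambda$ has nothing to do with a ``Taylor regime.'' The bound $\norm{e^{iX}-I-iX}\le\norm X^2/2$ holds for every Hermitian $X$, and your argument never needs $\lambda/N_q\lesssim1$. The hypothesis's actual role, as in the paper, is to bound $\lambda^2\epsilon^{-2}\log(2d/\delta)$ below by an absolute constant. That way, rounding $N_q$ up to a positive integer costs at most a constant factor in the claimed $O(\cdot)$ bound. Without it, the expression could tend to zero while $N_q\ge1$.
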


\section{Overview of the construction}
\label{sec:overview}

Our construction reduces the data addressed by each
query to a piece's block encoding.
We use Tan's $\SEL$ oracle synthesis bound~\cite{tan2025unitary}
(\Cref{lem:tan}) for data access. Yuan et al.~\cite{yuan2026general} encode the
full unitary and repeatedly use that encoding for singular-value
amplification. We simulate a Hermitian logarithm using $s$ separately
encoded block-sparse pieces, with few nonzero blocks per block row
and column. Each piece has a smaller lookup table.
qDRIFT~\cite{campbell2019random,chen2021concentration} combines
their evolutions. Balancing the cost per query against the number of
queries yields the $\Ot(N^{8/7})$ $T$-count at constant precision.
\Cref{fig:pipeline} compares the table access, cost per query, and
total query count of the two constructions.

\begin{figure}[!htb]
\centering
\begin{tikzpicture}[
  x=1cm, y=1cm, font=\footnotesize,
  callbox/.style={draw,fill=white,minimum width=0.95cm,
    minimum height=0.55cm,inner sep=2pt},
  access/.style={dashed,thick},
  accessarrow/.style={access,-{Stealth[length=1.3mm]}}]
\node[anchor=west,font=\footnotesize\bfseries] at (0,3.8)
  {(a) Dense encoding: Yuan et al.};
\fill[black!12] (0,1.6) rectangle (1.2,2.8);
\draw[step=0.3,shift={(0,1.6)},black!60] (0,0) grid (1.2,1.2);
\node at (0.6,1.2) {all $D^2$ blocks};
\draw[-{Stealth[length=1.3mm]}] (2,2.1) -- (7.55,2.1);
\node[callbox] (densefirst) at (3,2.1) {query};
\node[callbox] (densesecond) at (4.55,2.1) {query};
\node[fill=white,inner sep=2pt] at (5.65,2.1) {$\cdots$};
\node[callbox] (denselast) at (6.75,2.1) {query};
\draw[access,black!60] (1.2,2.5) -- (1.65,2.5)
  -- (1.65,3) -- (6.75,3);
\draw[accessarrow,black!60] (3,3) -- (densefirst.north);
\draw[accessarrow,black!60] (4.55,3) -- (densesecond.north);
\draw[accessarrow,black!60] (6.75,3) -- (denselast.north);
\node at (4.65,3.35) {same table at every query};
\node at (4.7,1.2) {$\Ot(N+b^2)$ $T$ gates per query};
\node at (4.7,0.7) {$\Ot(\sqrt D)$ queries};
\draw[black!25] (0,0.2) -- (7.55,0.2);

\node[anchor=west,font=\footnotesize\bfseries] at (0,-0.35)
  {(b) Separate piece encodings: this work};
\fill[blue!15] (0,-2.6) rectangle (0.6,-1.4);
\draw[step=0.3,shift={(0,-2.6)},blue!65!black]
  (0,0) grid (0.6,1.2);
\fill[orange!20] (0.8,-2.6) rectangle (1.4,-1.4);
\draw[step=0.3,shift={(0.8,-2.6)},orange!70!black]
  (0,0) grid (0.6,1.2);
\node at (0.3,-2.9) {$H_1$};
\node at (1.1,-2.9) {$H_2$};
\node at (0.7,-3.3) {$O(D^2/s)$ slots each};
\draw[-{Stealth[length=1.3mm]}] (2,-2.05) -- (7.55,-2.05);
\node[callbox,fill=blue!15] (piecefirst) at (3,-2.05) {query};
\node[callbox,fill=orange!20] (piecesecond) at (4.55,-2.05) {query};
\node[fill=white,inner sep=2pt] at (5.65,-2.05) {$\cdots$};
\node[callbox,fill=blue!15] (piecelast) at (6.75,-2.05) {query};
\draw[access,blue!65!black] (0.3,-1.4) -- (0.3,-0.9) -- (6.75,-0.9);
\draw[accessarrow,blue!65!black] (3,-0.9) -- (piecefirst.north);
\draw[accessarrow,blue!65!black] (6.75,-0.9) -- (piecelast.north);
\draw[accessarrow,orange!70!black] (1.1,-1.4) -- (1.1,-0.65)
  -- (4.55,-0.65) -- (piecesecond.north);
\node at (4.7,-2.8) {$\Ot(N/\sqrt s+b^2)$ $T$ gates per query};
\node at (4.7,-3.3) {$\Ot(\sqrt D+s)$ queries};
\end{tikzpicture}
\caption{\label{fig:pipeline}Schematic table access at fixed precision.
Boxes in (a) denote queries to the dense block
encoding of the target matrix in Yuan et al.~\cite{yuan2026general}.
Boxes in (b) denote queries to the block encodings $U_{H_k}$ of
our $s$ pieces,
illustrated with two piece tables. Each table has $O(D^2/s)$ block
slots in the parameter range used in the construction. Dashed
arrows link queries to their tables.
The query counts are totals over the respective circuits.}
\end{figure}

\subsection{Cost of the dense block encoding}

To identify where our construction can save $T$ gates, we first
examine the cost of encoding the target unitary as a whole.
In the dense construction of Yuan et al.~\cite{yuan2026general},
each query addresses a lookup table containing data for all $D^2$
blocks. Its $T$-count is $\Ot(N+b^2)$ by \Cref{lem:tan}.
The term $N$ pays for the lookup, while $b^2$ pays for processing
the selected block. Increasing the block size makes this processing
more expensive.

Larger blocks also reduce the number of queries. At fixed accuracy,
the construction uses $\Ot(\sqrt{N/b})$ queries, so the two costs give
\begin{equation}
T=\Ot\!\bigl(
\underbrace{\sqrt{N/b}}_{\text{queries}}
\underbrace{(N+b^2)}_{\text{$T$ gates per query}}
\bigr).
\label{eq:dense-encoding-cost}
\end{equation}

Choosing $b=\Theta(\sqrt N)$ balances these effects: each query
costs $\Ot(N)$ $T$ gates, and $\Ot(N^{1/4})$ queries suffice. Their
product gives $\Ot(N^{5/4})$. The dense encoding's subnormalization
factor sets the amplification query scale. For typical target unitaries,
this dense encoding requires a subnormalization factor
$\Omega(\sqrt{N/b})$.
\Cref{app:normalization} gives the cost derivation and this lower bound.

The opportunity for improvement is the repeated lookup cost. Every
query pays the term $N$, even after the block size has been optimized.
Our construction uses smaller lookup tables for separate matrix
pieces. We choose the number of pieces to balance the lower cost
per query against the additional queries needed to simulate their sum.

\subsection{Block-sparse Hamiltonian simulation}

We obtain smaller lookup tables by changing what we encode. Write
$U=e^{iH}$, where $H$ is the Hermitian logarithm with eigenvalues in
$(-\pi,\pi]$. Implementing $U$ is then equivalent to simulating $H$
for time one. We will approximate this evolution using short
evolutions under separate matrix pieces.

Before splitting the matrix, we separate its scalar part
$\varphi I$, where $\varphi=\tr H/N$. This part contributes only
the global phase $e^{i\varphi}$, which we apply on a clean ancilla.
We then change basis using the random real orthogonal matrix
$P$ defined in \Cref{lem:flattening}:
\begin{equation}
H_0=H-\varphi I,\qquad H'=PH_0P^T.
\label{eq:generator}
\end{equation}
This choice of $P$ makes all block norms small with high probability.
The scalar part is handled
separately because it is unchanged by a change of basis. Once we
simulate $H'$, the identity $U=e^{i\varphi}P^Te^{iH'}P$ recovers the
target.

We now split $H'$ into $s$ matrices $H_1,\ldots,H_s$, called pieces.
Each piece is still $N\times N$: it keeps its assigned blocks in
their original positions and sets all other blocks to zero. Every
block belongs to exactly one piece, so $H'=\sum_{k=1}^{s}H_k$.

Each piece must be Hermitian so that its short evolutions
$e^{itH_k}$ are unitary for real $t$. We enforce this by assigning
each off-diagonal pair $H'_{IJ}$ and $H'_{JI}=(H'_{IJ})^\dagger$
together to one piece chosen uniformly at random.
Diagonal blocks are already Hermitian and are assigned individually
in the same way to one of the $s$ pieces. Choices for distinct pairs
and diagonal blocks are independent. This guarantees $H_k^\dagger=H_k$ for every piece and
every assignment.
\Cref{fig:hermitian-split} illustrates this assignment with two colors.

\begin{figure}[!htb]
\centering
\begin{tikzpicture}[x=0.47cm,y=0.47cm,font=\footnotesize]
\fill[orange!20] (0,0) rectangle (4,4);
\fill[orange!20] (11.8,0) rectangle (15.8,4);
\foreach \blockrow/\blockcolumn in {1/2,2/1,2/3,3/2,3/4,4/3}{
  \fill[blue!15] ({\blockcolumn-1},{4-\blockrow}) rectangle ++(1,1);
  \fill[blue!15] ({5.9+\blockcolumn-1},{4-\blockrow}) rectangle ++(1,1);
  \fill[white] ({11.8+\blockcolumn-1},{4-\blockrow}) rectangle ++(1,1);
}
\foreach \matrixshift/\matrixname/\blockcount in {0/{H'}/16,5.9/{H_1}/6,11.8/{H_2}/10}{
  \draw[step=1,shift={(\matrixshift,0)},black!55] (0,0) grid (4,4);
  \node at ({\matrixshift+2},5.4) {$\matrixname$};
  \node at ({\matrixshift+2},-0.7) {$\blockcount$ blocks};
  \foreach \blockindex in {1,2,3,4}{
    \node at ({\matrixshift+\blockindex-0.5},4.45) {$\blockindex$};
    \node at ({\matrixshift-0.45},{4.5-\blockindex}) {$\blockindex$};
  }
}
\foreach \matrixshift in {0,5.9}{
  \draw[blue!75!black,very thick] ({\matrixshift+2},2) rectangle ++(1,1);
  \draw[blue!75!black,very thick] ({\matrixshift+1},1) rectangle ++(1,1);
}
\node at (4.75,2) {$=$};
\node at (10.65,2) {$+$};
\end{tikzpicture}
\caption{\label{fig:hermitian-split}One possible assignment with $D=4$
and $s=2$. Colored cells are nonzero $b\times b$ blocks; blank cells
are zero. Each block and its conjugate-transpose partner are assigned
to the same piece, indicated by their shared color. The outlined
blocks show one such pair. Thus $H'=H_1+H_2$ with Hermitian pieces.}
\end{figure}

We build a separate lookup table for each piece by listing the row
indices of its nonzero blocks in each block column. Each list
position is called a slot. A table address $(J,u)$ specifies slot
$u$ in column $J$.
For example, column $J=2$ of the blue piece $H_1$ in
\Cref{fig:hermitian-split} has row-index list $(1,3)$, so address
$(J,u)=(2,2)$ selects row $I=3$ and hence block $(H_1)_{3,2}$.
\Cref{fig:slot-table} shows the resulting table.

\begin{figure}[!htb]
\centering
\begin{tikzpicture}[x=1cm,y=1cm,font=\footnotesize]
\node at (0.9,0.75) {$K=H_1$};
\foreach \blockrow/\blockcolumn in {1/2,2/1,2/3,3/2,3/4,4/3}{
  \fill[blue!15] ({0.45*(\blockcolumn-1)},{-0.45*\blockrow})
    rectangle ++(0.45,0.45);
}
\draw[step=0.45,black!55] (0,-1.8) grid (1.8,0);
\foreach \blockindex in {1,2,3,4}{
  \node at ({0.45*(\blockindex-0.5)},0.25) {$\blockindex$};
  \node at (-0.25,{-0.45*(\blockindex-0.5)}) {$\blockindex$};
}
\draw[blue!75!black,very thick] (0.9,-0.9) rectangle ++(0.45,0.45);
\draw[blue!75!black,very thick] (0.45,-1.35) rectangle ++(0.45,0.45);
\draw[-{Stealth[length=1.5mm]}] (2,-0.9) -- (2.45,-0.9);
\node at (0.9,-2.35) {$D=4,\quad m=2$};
\node at (0.9,-2.85) {$Dm=8$ slots};

\node at (4.2,0.75) {lookup table};
\node at (3.6,0.35) {$u=1$};
\node at (4.8,0.35) {$u=2$};
\fill[blue!8] (3,-2.4) rectangle (5.4,0);
\fill[black!10] (4.2,-0.6) rectangle (5.4,0);
\fill[black!10] (4.2,-2.4) rectangle (5.4,-1.8);
\fill[blue!22] (4.2,-1.2) rectangle (5.4,-0.6);
\fill[blue!22] (3,-1.8) rectangle (4.2,-1.2);
\draw[xstep=1.2,ystep=0.6,shift={(3,-2.4)},black!55]
  (0,0) grid (2.4,2.4);
\foreach \columnlabel/\firstneighbor/\secondneighbor in
  {1/2/{pad},2/1/3,3/2/4,4/3/{pad}}{
  \node at (2.7,{-0.6*(\columnlabel-0.5)}) {$\columnlabel$};
  \node at (3.6,{-0.6*(\columnlabel-0.5)}) {\firstneighbor};
  \node at (4.8,{-0.6*(\columnlabel-0.5)}) {\secondneighbor};
}
\draw[blue!75!black,very thick] (4.2,-1.2) rectangle (5.4,-0.6);
\draw[blue!75!black,very thick] (3,-1.8) rectangle (4.2,-1.2);
\node at (4.2,-2.85) {addresses $(J,u)$};
\node[text=blue!75!black] at (4.2,-3.35) {$(2,2)\longleftrightarrow(3,1)$};
\node at (4.2,-3.8) {paired addresses};
\end{tikzpicture}
\caption{\label{fig:slot-table}Lookup table for the blue piece $H_1$ in
\Cref{fig:hermitian-split}. Table rows correspond to block columns
$J$, and table columns to slots $u=1,2$. Numbers give block row
indices $I$, while gray cells represent zero padding. The outlined
blocks $(H_1)_{3,2}$ and $(H_1)_{2,3}$ are adjoints, with paired
addresses $(2,2)$ and $(3,1)$, respectively.}
\end{figure}

\begin{figure*}[t]
\centering
\begin{tikzpicture}[x=1cm,y=1cm,font=\footnotesize]
\node at (1.2,3.4) {lookup table};
\node at (5.6,3.4) {selected unitary};
\node at (13.6,3.4) {$U_{H_1}$};

\node at (0.6,2.25) {$u=1$};
\node at (1.8,2.25) {$u=2$};
\fill[blue!8] (0,-0.5) rectangle (2.4,1.9);
\fill[black!10] (1.2,1.3) rectangle (2.4,1.9);
\fill[black!10] (1.2,-0.5) rectangle (2.4,0.1);
\fill[blue!30] (1.2,0.7) rectangle (2.4,1.3);
\draw[xstep=1.2,ystep=0.6,shift={(0,-0.5)},black!55]
  (0,0) grid (2.4,2.4);
\foreach \columnlabel/\firstentry/\secondentry in
  {1/2/{pad},2/1/3,3/2/4,4/3/{pad}}{
  \node at (-0.25,{1.9-0.6*(\columnlabel-0.5)}) {$\columnlabel$};
  \node at (0.6,{1.9-0.6*(\columnlabel-0.5)}) {\firstentry};
  \node at (1.8,{1.9-0.6*(\columnlabel-0.5)}) {\secondentry};
}
\draw[blue!75!black,very thick] (1.2,0.7) rectangle (2.4,1.3);
\node at (1.2,-0.95) {$m=2,\quad M=Dm=8$};
\node[text=blue!75!black] at (1.2,-1.45) {$(J,u)=(2,2)$};

\draw[-{Stealth[length=1.8mm]},thick] (2.65,1) -- (3.95,1);
\node at (3.3,1.65) {$\SEL$};
\node[align=center] at (3.3,0.35) {entries of\\$(H_1)_{3,2}$};

\fill[black!5] (4.2,-0.4) rectangle (7,2.4);
\fill[blue!30] (4.2,1) rectangle (5.6,2.4);
\draw[black!55] (4.2,-0.4) rectangle (7,2.4);
\draw[black!55] (5.6,-0.4) -- (5.6,2.4);
\draw[black!55] (4.2,1) -- (7,1);
\draw[blue!75!black,very thick] (4.2,1) rectangle (5.6,2.4);
\node[text=blue!75!black] at (4.9,1.7) {$\dfrac{(H_1)_{3,2}}g$};
\node[text=black!55] at (6.3,1.7) {$\ast$};
\node[text=black!55] at (4.9,0.3) {$\ast$};
\node[text=black!55] at (6.3,0.3) {$\ast$};
\node at (5.6,2.8) {encoded block};
\node at (5.6,-0.95) {dimension $d=2b$};

\draw[-{Stealth[length=1.8mm]},thick] (7.35,1) -- (11.65,1);
\node at (9.5,2.25) {route and unprepare};
\node at (9.5,1.65) {$\Pi:\ (2,2)\mapsto(3,1)$};
\node[align=center] at (9.5,0.3)
  {output block row $I=3$\\input block column $J=2$};

\fill[black!5] (12,-0.6) rectangle (15.2,2.6);
\fill[white] (12,1) rectangle (13.6,2.6);
\foreach \blockrow/\blockcolumn in {1/2,2/1,2/3,3/2,3/4,4/3}{
  \fill[blue!15] ({12+0.4*(\blockcolumn-1)},{1+0.4*(4-\blockrow)})
    rectangle ++(0.4,0.4);
}
\fill[blue!30] (12.4,1.4) rectangle (12.8,1.8);
\draw[step=0.4,shift={(12,1)},black!55] (0,0) grid (1.6,1.6);
\draw[black!55] (12,-0.6) rectangle (15.2,2.6);
\draw[black!55] (13.6,-0.6) -- (13.6,2.6);
\draw[black!55] (12,1) -- (15.2,1);
\draw[blue!75!black,very thick] (12.4,1.4) rectangle (12.8,1.8);
\node[text=blue!75!black] at (12.8,2.95) {$H_1/(mg)$};
\node[text=black!55] at (14.4,1.8) {$\ast$};
\node[text=black!55] at (12.8,0.2) {$\ast$};
\node[text=black!55] at (14.4,0.2) {$\ast$};
\node at (13.6,-0.95) {$\alpha_1=mg=2g$};
\end{tikzpicture}
\caption{\label{fig:encoding-chain}Building $U_{H_1}$ from the lookup
table in \Cref{fig:slot-table} and the matrix entries. The solid outlines
trace address $(J,u)=(2,2)$: its entry $I=3$ selects a unitary with
encoded block $(H_1)_{3,2}/g$. Each selected unitary is constructed
from the corresponding matrix block and $g$ during classical
preprocessing. $\SEL$ applies that unitary for the corresponding
address. The permutation $\Pi$ changes the block label from input
$J=2$ to output $I=3$ and the slot from $u=2$ to $u'=1$.
This branch therefore contributes to block $(3,2)$ of the encoded
matrix.
}
\end{figure*}

The column lists have different lengths, so we add dummy slots
representing zero to give them a common length $m$. Here $m=2$,
so the first column's list $(2)$ becomes $(2,\mathrm{pad})$.
The table then has eight entries across four columns, including
two padding entries, compared with sixteen for the dense table.

In each block column, the $D$ possible blocks are assigned uniformly
among $s$ pieces, giving $D/s$ assigned positions per piece on average.
For our parameters, \Cref{lem:split} bounds every column list by
$O(D/s)$ with high probability. We pad these lists to a common length
$m$, rounding up to a power of two, which increases the required length
by less than a factor of two. Thus $m=O(D/s)$, and each piece's $D$
column lists give $M=Dm=O(D^2/s)$ table entries, compared with $D^2$
for the dense table.

Using the lookup tables and the entries of the matrix blocks,
we construct a block encoding of each piece, as illustrated in
\Cref{fig:encoding-chain}.
Let $g>0$ bound the operator norm of every $b\times b$ block of $H'$.
For each nonzero piece $H_k$, \Cref{lem:encoding} gives an exact
block encoding $U_{H_k}$ with subnormalization factor $\alpha_k=mg$.

For each nonpadding slot address $(J,u)$ with table entry $I$,
we embed the corresponding scaled $b\times b$ block $(H_k)_{IJ}/g$
in a $2b\times2b$ unitary. Applying \Cref{lem:tan} with $d=2b$ and $M=Dm$ gives cost
$\Ot(b\sqrt M+b^2)$. Since $M=O(D^2/s)$, splitting reduces the
lookup term from $N$ to $O(N/\sqrt s)$ while leaving the $b^2$
processing term unchanged. The block-label operations fit this
bound, so the $T$-gate cost of one query to $U_{H_k}$ is
\begin{equation}
c_{\rm query}=\Ot(b\sqrt M+b^2)
=\Ot(N/\sqrt s+b^2).
\label{eq:cuse}
\end{equation}

The total $T$-count also depends on the number of queries needed
to approximate $e^{iH'}$.
qDRIFT approximates this evolution by a product of short evolutions
$e^{i\tau_jH_{K_j}}$, using the sampling rule in
\Cref{eq:qdrift-sampling} with $A_k=H_k$ and omitting zero pieces.
For our parameters, \Cref{lem:split} gives
$\lambda=\sum_k\norm{H_k}=\Ot(\sqrt s)$, so
\Cref{lem:qdrift} supplies $N_q=\Ot(s)$ segments at fixed target
error $\epsilon$.

We implement each segment by QSVT using its corresponding block
encoding $U_{H_{K_j}}$. At error $\epsilon/(4N_q)$ per segment,
\Cref{lem:hamsim} bounds the total query count by
\[
Q=O\!\left(\sum_{j=1}^{N_q}\alpha_{K_j}\tau_j
  +N_q\log(N_q/\epsilon)\right).
\]

To bound the first term, recall that $m=O(D/s)$ and flattening
gives $g=\Ot(1/\sqrt D)$ for $\norm{H'}=O(1)$. Thus
$\alpha_k=mg=\Ot(\sqrt D/s)$ and
$\sum_k\alpha_k=\Ot(\sqrt D)$ (\Cref{cor:normsum}).
\Cref{lem:constant-selection} guarantees a sampled product meeting
the required accuracy and satisfying
\[
\sum_{j=1}^{N_q}\alpha_{K_j}\tau_j
=O\!\left(\sum_k\alpha_k\right)=\Ot(\sqrt D).
\]
Together with $N_q=\Ot(s)$, this gives $Q=\Ot(\sqrt D+s)$
queries in total.

Multiplying this query complexity by the cost per query in
\Cref{eq:cuse} gives
\begin{equation}
T=\Ot\!\bigl(
\underbrace{(\sqrt D+s)}_{\text{\#queries}}
\underbrace{(N/\sqrt s+b^2)}_{\text{$T$ gates per query}}
\bigr).
\label{eq:aggregate-cost}
\end{equation}
Taking $s$ of order $\sqrt D$, and
substituting $D=N/b$ gives
\begin{equation}
T=\Ot\!\left(N^{5/4}b^{-1/4}+N^{1/2}b^{3/2}\right).
\label{eq:ledger}
\end{equation}
The terms balance at $b=\Theta(N^{3/7})$, yielding
$\Ot(N^{8/7})$ $T$ gates.

\section{Constructing block-sparse encodings}
\label{sec:pieces}

\begingroup
To prove the resource bounds used in \Cref{sec:overview}, we construct
block encodings of the Hermitian pieces $H_k$.
Similarity flattening (\Cref{sec:flattening}) gives a common upper
bound $g$ on the operator norms $\norm{H'_{IJ}}$ of the $b\times b$
blocks of $H'$.
Random splitting (\Cref{sec:split}) bounds the number of nonzero
blocks in each block column of each piece, determining the padded
number $m$ of slots per block column and hence the $Dm$ entries
in that piece's lookup table. It also bounds the piece
norms $\norm{H_k}$, whose sum controls the number of qDRIFT
segments needed to approximate $e^{iH'}$.
Finally, \Cref{sec:encoding} constructs block encodings of the pieces
with subnormalization factor $\alpha_k=mg$ and bounds their
$T$-gate cost per query.
\par\endgroup

\subsection{Similarity flattening}
\label{sec:flattening}

To obtain the required subnormalization factor, we need a real orthogonal
matrix $P$ for which every block of $H'=PH_0P^T$ has norm
$\Ot(\norm{H_0}/\sqrt D)$. %

\begingroup
Let $I_d$ denote the $d\times d$ identity matrix. An orthogonal
similarity transformation is a change of basis $A\mapsto PAP^T$,
where $P$ is real orthogonal and hence $P^T=P^{-1}$. It leaves
the scalar part of $H$ unchanged:
\[
P(\varphi I_N)P^T=\varphi I_N.
\]
Each diagonal $b\times b$ block of this scalar matrix is
$\varphi I_b$, with operator norm $\abs\varphi$ in every basis.
We therefore remove the scalar part in \Cref{eq:generator} and
apply flattening to the traceless matrix $H_0$. The quantities
we seek to make small are the operator norms
$\norm{(PH_0P^T)_{IJ}}$ of its individual $b\times b$ blocks.

A random sign layer means multiplication by a diagonal matrix
$S_\sigma=\diag(\sigma)$, whose diagonal entries are independent
and take the values $+1$ and $-1$ with equal probability.
To see why one sign matrix may be insufficient, consider
$P_1=FS_\sigma F$ with $F=\mathsf H^{\otimes n}$.
If $FH_0F$ is diagonal, it commutes
with $S_\sigma$, so the signs cancel:
\[
P_1H_0P_1^T=FS_\sigma(FH_0F)S_\sigma F=H_0.
\]
For such inputs, this random change of basis leaves every block
unchanged. We therefore use two independent sign matrices,
$S_\sigma$ and $S_\tau$, separated by applications of $F$.
The following lemma defines the resulting $P$ and bounds every
block's operator norm with high probability for any fixed
traceless Hermitian input.
\par\endgroup

\begin{lemma}[Similarity flattening]
\label{lem:flattening}
Let $H\in\C^{N\times N}$ be Hermitian with $\tr H=0$.
For independent, uniformly random sign vectors
$\sigma,\tau\in\{\pm1\}^N$, set
\begin{equation}
P=F\,S_\tau\,F\,S_\sigma\,F .
\label{eq:flattening-similarity}
\end{equation}
Then $P$ is a real orthogonal matrix and $e^{iPHP^T}=Pe^{iH}P^T$. For every
$\delta\in(0,1/2)$, with probability at least $1-\delta$ over $\sigma,\tau$,
\begin{equation}
\max_{I,J\in[D]}\big\|(PHP^T)_{IJ}\big\|
\ \le\ \frac{C_A\norm H}{\sqrt D}\,\log\frac N\delta,
\label{eq:flatness}
\end{equation}
where $C_A>0$ is an absolute constant. Moreover, $P$ and $P^T$ each have an
exact Clifford+$T$ implementation with $O(\sqrt N)$ $T$ gates.
\end{lemma}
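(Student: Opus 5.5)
The easy parts come first. $F$ and $S_\sigma,S_\tau$ are real orthogonal, so $P$ is real orthogonal. The identity $e^{iPHP^T}=Pe^{iH}P^T$ follows from $P^T=P^{-1}$ and the power series. For the circuit, $F$ is Clifford. Each $S_\sigma$ is implemented exactly with $O(\sqrt N)$ $T$ gates by the diagonal sign oracle of \Cref{lem:boolean} with $m=n$. Since $P^T=FS_\sigma FS_\tau F$ has the same form, it has the same cost.

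For the flatness bound, I would write $P=FS_\tau G$ with $G=FS_\sigma F$ and proceed in two stages. Set $K=GHG^T$. The block of interest is
\[
(PHP^T)_{IJ}=\sum_{x,y}(FS_\tau)_{I,x}\,K_{xy}\,(FS_\tau)_{J,y}^T ,
\]
which is a Rademacher chaos in $\tau$ with matrix coefficients:
\[
(PHP^T)_{IJ}=\sum_{x,y}\tau_x\tau_y K_{xy}\,f_{I,x}f_{J,y}^T ,
\]
where $f_{I,x}\in\R^b$ is the restriction of column $x$ of $F$ to block row $I$, with entries $\pm N^{-1/2}$. I would split this into a diagonal part $\sum_x K_{xx}f_{I,x}f_{J,x}^T$ and an off-diagonal decoupled chaos. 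The diagonal part is deterministic given $\sigma$. The off-diagonal part is bounded with high probability by matrix Rademacher chaos inequalities (decoupling, then matrix Khintchine/Bernstein twice). This gives a bound of order $\log(N/\delta)$ times $\max\bigl(\|\,\text{row/column variance proxies}\,\|^{1/2}\bigr)$.

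For these variance proxies, $\sum_y |K_{xy}|^2\le\|H\|^2$, and each $f f^T$ has norm $b/N$. The proxies should therefore scale like $\|H\|^2 b/N=\|H\|^2/D$, which gives the $\|H\|/\sqrt D$ rate after taking a union bound over the $D^2\le N^2$ blocks.

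The first layer $\sigma$ is used to control the diagonal term, which is the main obstacle. This term is not small in general: the counterexample in the text is exactly where $FHF$ is diagonal. I would rewrite it via the Hadamard structure. Because $f_{I,x}f_{J,x}^T$ depends on $x$ only through characters, the diagonal term equals $(1/N)$ times a Fourier-type transform of the vector $(K_{xx})_x$ restricted to the block. Its norm is therefore controlled by $\max_x$ and spectral flatness of $\diag(K)$ in the Hadamard basis, essentially needing $|\sum_x K_{xx}\chi(x)|\lesssim \sqrt{N}\|H\|\log$ for nontrivial characters $\chi$.

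Now $K_{xx}=\sum_{y,z}G_{xy}H_{yz}G_{xz}$ with $G=FS_\sigma F$. So $\sum_x K_{xx}\chi(x)=\tr(HG^TD_\chi G)$ with $D_\chi$ a diagonal $\pm1$ matrix. Since $G^TD_\chi G=FS_\sigma(FD_\chi F)S_\sigma F$ and $FD_\chi F$ is a permutation (shift) matrix, this quantity is again a degree-2 Rademacher chaos in $\sigma$. Its mean is zero for nontrivial shifts, and for the trivial character it is $\tr H=0$, which is exactly where tracelessness enters. Its fluctuations are bounded via Hanson--Wright by $\|H\|_F\log\le\sqrt N\|H\|\log$.

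Combining the two stages with a union bound over blocks and characters, and splitting $\delta$ between them, yields \eqref{eq:flatness} with an absolute constant. I expect the bookkeeping of the diagonal term, making the Fourier identity precise, to be the hardest part. If it resists, I would instead bound $\|\sum_x K_{xx}f_{I,x}f_{J,x}^T\|$ directly by matrix Bernstein conditioned on $\sigma$, after showing $\max_x|K_{xx}|$ is small. That alone fails, however, since the diagonal of $K$ need not be small, so the character approach above is my first choice.
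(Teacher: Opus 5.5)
Your easy parts match the paper. So does your treatment of the off-diagonal chaos: decoupling, then two matrix Rademacher bounds with variance proxies of order $\norm H^2 b/N$ up to logarithms, where the second bound also uses $\norm{K-\diag(K)}\le2\norm H$.

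The gap is the diagonal term, and you misidentify its difficulty. Since $\tau_x^2=1$, that term is exactly the $(I,J)$ block of $F\diag(K)F$. Because $F$ is orthogonal, every such block has norm at most $\norm{\diag(K)}=\max_x\abs{K_{xx}}$. Your assertion that ``the diagonal of $K$ need not be small'' is false except with small probability over $\sigma$; making it small is precisely what the $\sigma$ layer does. Write $K=FS_\sigma(FHF)S_\sigma F$ and use $F_{xa}=\pm N^{-1/2}$. Then $K_{xx}=N^{-1}\sigma'^T\mathrm{Re}(FHF)\sigma'$, where $\sigma'_a=\sqrt N F_{xa}\sigma_a$ are again uniform signs. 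Tracelessness gives mean zero. Hanson--Wright with $\norm{FHF}_F^2\le N\norm H^2$, plus a union bound over $x$, gives $\max_x\abs{K_{xx}}\lesssim\norm H\log(N/\delta)/\sqrt N$, which is better than needed. This is the paper's argument. It makes your character analysis unnecessary, and also your conditional-Bernstein fallback, which has no randomness to act on since the diagonal term does not involve $\tau$.

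The character route you chose instead does not prove the lemma as stated. Write $k_x=K_{xx}$ and $\mu=\norm H\log(N/\delta)$. The bound $\abs{\widehat k(\chi)}\lesssim\sqrt N\mu$ controls only the individual entries $N^{-1}\widehat k(u\oplus v)$ of the block, each of size at most $\mu/\sqrt N$. A $b\times b$ matrix with such entries can have norm $b\mu/\sqrt N=\sqrt b\,\mu/\sqrt D$.

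This loss actually occurs. The block is group-circulant in the low bits. Its eigenvalues are partial Walsh sums over the high bits only, $D^{-1}\sum_{x_{\mathrm h}}(-1)^{h\cdot x_{\mathrm h}}k_{(x_{\mathrm h},z)}$ with $h=I\oplus J$, and full-character flatness does not control these. For example, take $h\ne0$ and set $k_x=(b\mu/\sqrt N)(-1)^{h\cdot x_{\mathrm h}}$ when $x_{\mathrm l}=0$, and $k_x=0$ otherwise. Every Walsh coefficient of this $k$ is at most $\sqrt N\mu$, yet it produces a block of norm $b\mu/\sqrt N$.

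So full-character flatness loses a factor $\sqrt b$, which is polynomial in $N$ at $b\sim N^{3/7}$. The missing ingredient is the uniform entrywise bound on $\diag(K)$ given above.
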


\begin{proof}
We defer the proof to \Cref{app:flattening}.
\end{proof}

The $1/\sqrt D$ dependence in \Cref{eq:flatness} is optimal in the
worst case, up to the logarithmic factor. To see this, consider the
traceless Hermitian matrix $H=\diag(I_{N/2},-I_{N/2})$, which has
$\norm H=1$ and $\norm H_F^2=N$. A change of basis preserves the
sum of squared magnitudes of the matrix entries, $\norm H_F^2$.
Since $\norm B_F^2\le b\norm B^2$ for every $b\times b$ matrix $B$,
every real orthogonal matrix $P$ satisfies
\[
\begin{aligned}
N=\norm H_F^2
&=\sum_{I,J\in[D]}\norm{(PHP^T)_{IJ}}_F^2\\
&\le bD^2\max_{I,J\in[D]}\norm{(PHP^T)_{IJ}}^2.
\end{aligned}
\]
Since $N=bD$, every such $P$ leaves at least one block with
\[
\max_{I,J\in[D]}\norm{(PHP^T)_{IJ}}
\ge\frac1{\sqrt D}.
\]
The upper bound in \Cref{eq:flatness} is therefore within a factor
$C_A\log(N/\delta)$ of this lower
bound. %

\subsection{Random block-sparse split}
\label{sec:split}

We split $H'$ into $s$ Hermitian pieces to obtain smaller lookup
tables while controlling the piece norms. As in
\Cref{fig:hermitian-split}, each off-diagonal adjoint block pair
and each diagonal block is assigned independently to a uniformly
random piece. Thus $\E(H_k)=H'/s$, with $D/s$ assigned block
positions per piece in each block column on average.

To control the lookup cost per query, we bound the number of
nonzero blocks per block column. This determines the padded
length $m$ and the $Dm$ entries in each piece's lookup table.
We also bound the piece norms, since the qDRIFT segment bound
in \Cref{lem:qdrift} is quadratic in $\lambda=\sum_k\norm{H_k}$.
The following lemma gives both bounds simultaneously for all
pieces with high probability.

\begin{lemma}[Random split]
\label{lem:split}
Let $H'\in\C^{N\times N}$ be Hermitian with
$\norm{H'_{IJ}}\le g$ for all $I,J\in[D]$.
Let $s\ge1$ be an integer and $\delta\in(0,1/2)$, and put
$\ell=\log(8Ns/\delta)$. Form $H_1,\ldots,H_s$ by assigning each
off-diagonal adjoint block pair and each diagonal block independently
to a uniformly random piece, with zero blocks elsewhere.
Then each $H_k$ is Hermitian and $\sum_kH_k=H'$.

With probability at least $1-\delta$, both of the following bounds
hold for every $k\in[s]$. The piece norm satisfies
\begin{equation}
\norm{H_k}\le\norm{H'}/s+\norm{H'}\sqrt{2\ell/s}+\tfrac23g\ell,
\label{eq:split-piece-norm}
\end{equation}
and every block row and block column of $H_k$ contains at most
\[
m_0:=D/s+\sqrt{2D\ell/s}+\ell
\]
nonzero blocks. On the same event, if $g\le\norm{H'}/\sqrt{s\ell}$ then
$\norm{H_k}\le4\norm{H'}\sqrt{\ell/s}$ for all $k$.
\end{lemma}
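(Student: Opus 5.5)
The proof splits into three parts: a structural check, a counting bound, and a matrix concentration bound, joined by a union bound at the end.

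\emph{Structure.} Each piece $H_k$ contains, for every off-diagonal pair, either both $H'_{IJ}$ and $H'_{JI}=(H'_{IJ})^\dagger$ or neither. It also contains either all of a Hermitian diagonal block $H'_{II}$ or none of it. Hence $H_k^\dagger=H_k$. Since every block lands in exactly one piece, $\sum_kH_k=H'$.

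\emph{Counting.} Fix a piece $k$ and a block column $J$. The number of nonzero blocks of $H_k$ in column $J$ is at most $X=\sum_{I}\xi_I$. Here $\xi_I$ indicates that the unordered pair $\{I,J\}$, or the diagonal block when $I=J$, was assigned to $k$. The $D$ indicators are independent Bernoulli$(1/s)$, because distinct $I$ give distinct pairs. By Bernstein's inequality with variance at most $D/s$ and increments bounded by $1$,
\[
\Pr\!\left(X\ge D/s+\sqrt{2D\ell/s}+\tfrac23\ell\right)\le e^{-\ell}.
\]
Since $\ell\ge\tfrac23\ell$, this gives the bound $m_0$. By Hermiticity, block rows have the same counts as block columns.

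\emph{Piece norms.} Write $H_k-H'/s=\sum_p(\xi_p-1/s)B_p$. The sum runs over independent units $p$, namely off-diagonal pairs and diagonal blocks. For a pair, $B_p$ is the Hermitian matrix with blocks $H'_{IJ}$ and $H'_{JI}$. For a diagonal unit, $B_p=H'_{II}$, embedded in $N\times N$.

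We will use the matrix Bernstein inequality. Each summand is bounded by $\norm{B_p}\le g$, since an off-diagonal Hermitian pair has norm $\norm{H'_{IJ}}$. For the variance, $\sum_p\E(\xi_p-1/s)^2B_p^2\preceq\frac1s\sum_pB_p^2$.

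The key step is to bound $\sum_pB_p^2\preceq$ something of norm at most $\norm{H'}^2$. I would prove this by pairing the block-diagonal pieces. $B_p^2$ for the pair $\{I,J\}$ is block-diagonal, with entries $H'_{IJ}H'^{\dagger}_{IJ}$ at position $I$ and $H'^{\dagger}_{IJ}H'_{IJ}$ at position $J$. Summing over all units therefore gives, at diagonal position $I$, exactly $\sum_J H'_{IJ}H'_{JI}=(H'^2)_{II}$. So $\sum_pB_p^2$ is the block-diagonal part of $H'^2$, whose norm is at most $\norm{H'}^2$ because pinching is a contraction.

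Matrix Bernstein in dimension $N$ then gives, with failure probability at most $2N e^{-t}$,
\[
\norm{H_k-H'/s}\le\norm{H'}\sqrt{2t/s}+\tfrac23 g t.
\]
I take $t=\ell$. Adding $\norm{H'}/s$ via the triangle inequality yields \eqref{eq:split-piece-norm}.

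\emph{Union bound.} There are $s$ norm events with failure probability $2Ne^{-\ell}$ each, and $2Ds\le2Ns$ count events with failure probability $e^{-\ell}$ each. The total is at most $4Ns\,e^{-\ell}=\delta/2\le\delta$, given $\ell=\log(8Ns/\delta)$.

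\emph{Final claim.} Assume $g\le\norm{H'}/\sqrt{s\ell}$. Then $\tfrac23 g\ell\le\tfrac23\norm{H'}\sqrt{\ell/s}$. Since $\ell\ge1$ and $s\ge1$, we have $1/s\le\sqrt{\ell/s}$. Summing $1+\sqrt2+\tfrac23<4$ gives $\norm{H_k}\le4\norm{H'}\sqrt{\ell/s}$.

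The main obstacle is the variance identification, which shows the variance is at most $\norm{H'}^2/s$ rather than something like $D g^2/s$. If the pinching argument fails, I would instead bound the variance by the row-sum $\max_I\sum_J\norm{H'_{IJ}}^2$.
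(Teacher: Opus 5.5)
Your proposal is correct and follows the paper's proof essentially step for step. It uses the same decomposition $H_k-H'/s=\sum_p(\xi_p-1/s)B_p$, with the matrix Bernstein variance bounded through $\sum_pB_p^2=\bigoplus_I(H'^2)_{II}$. It bounds the Binomial$(D,1/s)$ occupancies with a scalar Bernstein/Chernoff tail, using Hermiticity to transfer the bound between rows and columns, and ends with the same union bound and constant $1+\sqrt2+\tfrac23<4$. The only differences are cosmetic: your $\tfrac23\ell$ term is slightly sharper than needed, and your separate row and column events are redundant given Hermiticity. Both only loosen the union bound harmlessly.
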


\begin{proof}
We defer the proof to \Cref{app:split}.
\end{proof}

\subsection{Block encoding of a piece}
\label{sec:encoding}

\begingroup
We construct a block encoding $U_K$ of a nonzero piece
$K=H_k$ using its lookup table (\Cref{fig:slot-table}). On the
event in \Cref{lem:split}, pad each column's list to length $m$,
the smallest power of two at least $m_0$.

Since $\norm{K_{IJ}}\le g$, each block admits an exact
$(g,1,0)$-block encoding.
For each nonpadding address $(J,u)$, the lookup table records
$I$, and $\SEL$ applies the corresponding unitary constructed
from $K_{IJ}$ and $g$ during classical preprocessing.
The circuit in \Cref{fig:sparse-encoding} prepares the slot register
$\mathsf S$ in the uniform state
$\frac1{\sqrt m}\sum_z\ket z_{\mathsf S}$,
applies $\SEL$, routes each block to row $I$ and column $J$ using
$\Pi$, and unprepares the slot register. The resulting encoded block is
\[
\frac1{\sqrt m}\,\frac{K_{IJ}}g\,\frac1{\sqrt m}
=\frac{K_{IJ}}{mg}.
\]
Thus $U_K$ is an exact $(mg,\log_2m+1,0)$-block encoding of $K$.
\par\endgroup

\begin{figure}[!htb]
\centering
\tikzset{
  U/.style={draw,rectangle,minimum width=0.35cm,
    minimum height=0.35cm,fill=white}
}
\begin{quantikz}[row sep=0.3cm,column sep=0.45cm,
  /tikz/every node/.append style={font=\footnotesize}]
\lstick{$\mathsf S:\ket{0^{\log_2m}}$}
  & \gate[style={U}]{F_{\mathsf S}}
  & \gate[4,style={U,fill=blue!20}]{\SEL}
  & \gate[style={U}]{u\mapsto u'}\wire[d][2]{q}
    \gategroup[3,steps=1,style={dashed,rounded corners,fill=red!20},
      background,label style={label position=above,anchor=south,yshift=0.05cm}]{$\Pi$}
  & \gate[style={U}]{F_{\mathsf S}} & \qw \\
\lstick{$\mathsf F:\ket0$}
  & \qw & & \qw & \qw & \qw \\
\lstick{$\mathsf B:\ket J$}
  & \qw & & \gate[style={U}]{J\mapsto I} & \qw & \qw \\
\lstick{$\mathsf T:\ket{x_{\mathrm l}}$}
  & \qw & & \qw & \qw & \qw
\end{quantikz}
\caption{\label{fig:sparse-encoding}Block encoding $U_K$ of a Hermitian
piece $K$ with $b\times b$ blocks. For $K=H_1$, this circuit implements
the construction in \Cref{fig:encoding-chain}.
The data registers are $\mathsf B$,
holding the input block column $J$, and $\mathsf T$, holding the
position $x_{\mathrm l}$ within a block. The encoding ancillas are
$\mathsf S$, holding one of $m$ padded slots per column, and the
single flag qubit $\mathsf F$. The Hadamard layer
$F_{\mathsf S}=\mathsf H^{\otimes\log_2m}$ prepares and unprepares
the uniform slot state. For each nonpadding address $(J,u)$
identifying row $I$, $\SEL$ applies a unitary on
$\mathsf F,\mathsf T$ whose $\mathsf F=0$ block is $K_{IJ}/g$,
where $g$ bounds all block operator norms. Padding encodes zero.
The permutation $\Pi$ exchanges addresses $(J,u)$ and $(I,u')$
of adjoint blocks on $\mathsf S,\mathsf B$. The block with
$\mathsf S=\mathsf F=0$ on input and output is $K/(mg)$.}
\end{figure}

\begingroup
The following lemma bounds the $T$-gate and ancilla costs of a
query to $U_K$.
\par\endgroup

\begin{lemma}[Sparse block encoding via $\SEL$]
\label{lem:encoding}
Let $K\in\C^{N\times N}$ be a nonzero Hermitian matrix with
$\norm{K_{IJ}}\le g$ for all blocks, where $g>0$. Suppose that every block
column of $K$ has at most $m$ nonzero blocks, where $m$ is a power of two.
Then there is an $(mg,\log_2m+1,0)$-block encoding $U_K$ of $K$.
\begingroup
It consists of Hadamard gates, one $\SEL$ oracle selecting among
$M=Dm$ unitaries of dimension $d=2b$, and a basis-state permutation
$\Pi$ defined as follows. List the row indices of each column's
nonzero blocks in $m$ slots, adding dummy slots for zero blocks
where needed. If slot $u$ of column $J$ records row $I$ with
$K_{IJ}\ne0$, let $u'$ be the unique slot in column $I$ recording
row $J$ of the adjoint block $K_{JI}=K_{IJ}^\dagger$.
Define the address map
\[
q(J,u)=(I,u').
\]
For a padding address, set $q(J,u)=(J,u)$; addresses of diagonal
blocks are also fixed. The unitary $\Pi$ permutes the basis
states of the block-label and slot registers according to $q$
and acts as the identity on the flag qubit and the register
for the position within a block. Hermiticity ensures that the
adjoint partner exists and that $q(q(J,u))=(J,u)$, so
$\Pi^{-1}=\Pi$.
\par\endgroup

For every $\eta\in(0,1/2)$, an ordinary or singly controlled query to
$U_K$, or its inverse, can be implemented to error $\eta$ in the
sense of \Cref{def:implementation} with
\begin{equation}
O\!\left(b\sqrt{Dm\Lambda}+b^2\Lambda+\sqrt{Dm\log(2Dm)}\right),
\label{eq:encodingcost}
\end{equation}
$T$ gates and ancillas, where $\Lambda=\log_2(2b)+\log_2(1/\eta)$. The
last term, which comes from the address permutation, is dominated by the
first whenever $b^2\Lambda\ge\log(2Dm)$.
\end{lemma}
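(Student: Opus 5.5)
The proof has two parts: exactness of the block encoding, then the resource count.

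\emph{Exactness.} We compute $(\bra{0}_{\mathsf S}\bra{0}_{\mathsf F}\otimes I)\,U_K\,(\ket{0}_{\mathsf S}\ket{0}_{\mathsf F}\otimes I)$ on an input $\ket{J}_{\mathsf B}\ket{y}_{\mathsf T}$ and read off the amplitude on $\ket{I}_{\mathsf B}\ket{x}_{\mathsf T}$.
\begin{itemize}
\item The first $F_{\mathsf S}$ produces $m^{-1/2}\sum_u\ket u$.
\item For each slot $u$, $\SEL$ applies $V_{J,u}$. Its $\mathsf F=0$ block is $K_{I(J,u),J}/g$, or $0$ for padding. Such a unitary exists for every block: a contraction $A=K_{IJ}/g$ has the unitary dilation $\begin{pmatrix}A&\sqrt{I-AA^\dagger}\\ \sqrt{I-A^\dagger A}&-A^\dagger\end{pmatrix}$. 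For padding we take $V=X\otimes I$.
\item $\Pi$ maps $\ket{J,u}\mapsto\ket{I,u'}$ and does not touch $\mathsf F,\mathsf T$.
\item The final $F_{\mathsf S}$ followed by projection onto $\ket0_{\mathsf S}$ contributes $m^{-1/2}$ for each surviving slot.
\end{itemize}
The map $u\mapsto u'$ is injective for fixed $J$, since distinct slots carry distinct rows $I$. Hence each nonzero block $K_{IJ}$ contributes exactly once, with weight $1/m$. Padding slots vanish because their $\mathsf F=0$ block is zero. The result is $\sum_{I,J}\ket I\bra J\otimes K_{IJ}/(mg)=K/(mg)$.

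We also check that $q$ is a well-defined involution. Hermiticity gives $K_{JI}=K_{IJ}^\dagger\neq0$, so row $J$ appears in some unique slot $u'$ of column $I$, and applying $q$ twice returns $(J,u)$. Diagonal-block and padding addresses are fixed points. So $\Pi$ is a permutation of basis states with $\Pi=\Pi^{-1}$.

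\emph{Costs.}
\begin{itemize}
\item \textbf{Hadamards.} These are Clifford and cost nothing.
\item \textbf{$\SEL$.} We apply \Cref{lem:tan} with $M=Dm$, $d=2b$ and error $\eta/2$. This gives $O(b\sqrt{Dm\Lambda}+b^2\Lambda)$ $T$ gates and ancillas, with $\Lambda=\log_2(2b)+\log_2(2/\eta)$, which matches the stated $\Lambda$ up to constants.
\item \textbf{$\Pi$.} We implement it exactly using Boolean oracles from \Cref{lem:boolean}. With $r=\log_2(Dm)$ bits and $Dm$ addresses, compute $q(a)$ into a fresh register with $O_q$, costing $O(\sqrt{Dm\log(Dm)})$. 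Since $q$ is an involution, applying $O_q$ with the registers' roles swapped XORs $q(q(a))=a$ into the original register and clears it. A SWAP then leaves $\ket{q(a)}$ in place with clean ancillas. Three such oracle calls cost $O(\sqrt{Dm\log(2Dm)})$.
\end{itemize}

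\emph{Controlled and inverse queries.} The inverse is the same circuit in reverse with $\SEL^\dagger$, which is again a $\SEL$ oracle on the unitaries $V_x^\dagger$ and has the same cost. For the singly controlled version, note that the Hadamards and $\Pi$ are self-inverse, so only $\SEL$ needs to be controlled. The controlled $\SEL$ is a $\SEL$ oracle over $2M$ addresses, with the identity on the control-$0$ branch, so it costs only a constant factor more. Errors add by the composition rule.

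\emph{Dominance of the last term.} If $b^2\Lambda\ge\log(2Dm)$, then $b\sqrt{Dm\Lambda}\ge\sqrt{Dm\log(2Dm)}$, so the $\Pi$ term is absorbed into the first term.

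The main obstacle is implementing $\Pi$ with clean uncomputation. The trick is the involution property, which lets the same lookup table both compute $q(a)$ and erase $a$.
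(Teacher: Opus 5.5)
You have a genuine gap in the singly controlled query. The rest of your argument matches the paper's proof: the dilation-based $\SEL$, the corner computation giving $K/(mg)$, the involution property of $q$, the cost of $\SEL$ from \Cref{lem:tan}, the exact lookup-based $\Pi$, and the inverse query.

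You leave $\Pi$ uncontrolled because it is self-inverse, but self-inverseness is not what allows a gate to stay unconditioned. In $U_K=F_{\mathsf S}\Pi\,\SEL\,F_{\mathsf S}$ the permutation occurs only once. With the control in $\ket0$, your circuit therefore applies $F_{\mathsf S}\Pi F_{\mathsf S}$, which is not the identity. Applied to $\ket{0}_{\mathsf S}\ket J_{\mathsf B}$, it moves amplitude to $\ket I_{\mathsf B}$ for every $I\ne J$ with $K_{IJ}\ne0$. So it changes the data register whenever column $J$ contains an off-diagonal nonzero block. The operator you build is $\ket0\!\bra0\otimes F_{\mathsf S}\Pi F_{\mathsf S}+\ket1\!\bra1\otimes U_K$, which is not the controlled query that QSVT needs. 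Gates may be left unconditioned only when they form a conjugating pair $A(\cdot)A^{-1}$ around the controlled part. Here only the two Hadamard layers do, because $F_{\mathsf S}^2=I$.

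The paper fixes this by controlling $\Pi$ as well. It leaves $(J,u)$ unchanged when $c=0$ and applies $q$ when $c=1$, and it puts the control bit into the lookup address, doubling the table to $2Dm$ entries. Your compute--erase--swap pattern still works because each of these maps is an involution. The cost in \Cref{eq:encodingcost} is unchanged. Alternatively, order the steps as compute, swap, erase and control only the register swap, which costs $O(\log(Dm))$ Fredkin gates.

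Two minor corrections:
\begin{itemize}
\item The map $u\mapsto u'$ need not be injective for fixed $J$. In \Cref{fig:slot-table}, $q(2,1)=(1,1)$ and $q(2,2)=(3,1)$. Your corner computation only needs each nonzero block of column $J$ to occupy exactly one slot.
\item Your $\Pi$ circuit uses two lookups and one swap, not three lookups.
\end{itemize}
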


\begin{proof}
We defer the proof to \Cref{app:sparse-encoding}.
\end{proof}

Recall from \Cref{lem:split} that $s$ counts all pieces and
$\ell=\log(8Ns/\delta)$, where $\delta$ is the allowed failure
probability. When $s\ell\le D$, the sparsity bound gives
$m_0=O(D/s)$. Rounding $m_0$ up to a power of two then gives
$m=O(D/s)$ and $Dm=O(D^2/s)$ lookup table entries per piece.
Applying \Cref{lem:encoding} gives the following bounds on the
cost per query and the sum of subnormalization factors.

\begin{corollary}[Sum of subnormalization factors and per-query cost]
\label{cor:normsum}
Suppose the conclusions of \Cref{lem:split} hold with $g>0$ and
$s\ell\le D$. Discard zero pieces and encode the remaining $H_k$ by
\Cref{lem:encoding}, with $m$ the smallest power of two that is at
least $m_0$. Then their subnormalization factors $\alpha_k=mg$ satisfy
$\sum_{k:\,H_k\ne0}\alpha_k\le7Dg$.
If $b^2\Lambda\ge\log(2Dm)$, each query to a
block encoding $U_{H_k}$, its inverse, or a singly controlled
version of either costs $O(N\sqrt{\Lambda/s}+b^2\Lambda)$ $T$ gates
and ancillas. Here $s$ counts all pieces, including those discarded as zero.
\end{corollary}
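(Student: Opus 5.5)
The corollary has two parts: a bound on the summed subnormalization factors and a per-query cost bound. Both follow from combining \Cref{lem:split} with \Cref{lem:encoding} under the hypothesis $s\ell\le D$. Since every nonzero piece gets the same padded length $m$ and the same block bound $g$, all subnormalization factors equal $mg$. Hence
\[
\sum_{k:\,H_k\ne0}\alpha_k\le s\,m\,g .
\]
It therefore suffices to show $sm\le 7D$. Each $H_k$ is Hermitian with $\norm{(H_k)_{IJ}}\le g$, because its blocks are either blocks of $H'$ or zero. On the event of \Cref{lem:split}, each block column has at most $m_0\le m$ nonzero blocks, so \Cref{lem:encoding} applies to every nonzero piece.

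The arithmetic step is bounding $m$. Since $m$ is the smallest power of two at least $m_0$, we have $m<2m_0$. Writing $m_0=D/s+\sqrt{2D\ell/s}+\ell$ and using $s\ell\le D$:
\begin{itemize}
\item the middle term satisfies $\sqrt{2D\ell/s}=\sqrt{2}\,\sqrt{D/s}\,\sqrt{\ell}\le\sqrt2\,D/s$, because $\ell\le D/s$;
\item the last term satisfies $\ell\le D/s$ directly.
\end{itemize}
Thus $m_0\le(2+\sqrt2)D/s$, and so $sm<2(2+\sqrt2)D\approx6.83\,D\le7D$. This gives $\sum_k\alpha_k\le 7Dg$. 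This constant-chasing is the only place where care is needed: we must use $m<2m_0$ rather than $m\le 2m_0$ in the edge case, but $2(2+\sqrt2)<7$ leaves slack either way.

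For the cost, substitute into \eqref{eq:encodingcost}. From the bound above, $Dm\le 7D^2/s$, so
\[
b\sqrt{Dm\Lambda}\le\sqrt7\, bD\sqrt{\Lambda/s}=O\!\left(N\sqrt{\Lambda/s}\right),
\]
using $N=bD$. The $b^2\Lambda$ term carries over unchanged. The permutation term $\sqrt{Dm\log(2Dm)}$ is dominated by the first term under the stated hypothesis $b^2\Lambda\ge\log(2Dm)$, as noted at the end of \Cref{lem:encoding}. The same bound covers inverses and singly controlled versions, which \Cref{lem:encoding} already includes.

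Finally, $s$ is the total number of pieces including discarded zero ones. The sum of $\alpha_k$ over the nonzero pieces has at most $s$ terms, so the bound above still holds. The value of $m$, and hence the per-query cost, also depends only on $s$ through $m_0$, not on how many pieces turn out to be nonzero.
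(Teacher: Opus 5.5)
Your proof is correct and matches the paper's argument. You bound $sm\le 2(D+\sqrt{2Ds\ell}+s\ell)\le(4+2\sqrt2)D<7D$ using $s\ell\le D$, then substitute $Dm\le 7D^2/s$ and $N=bD$ into \Cref{eq:encodingcost} and absorb the permutation term via $b^2\Lambda\ge\log(2Dm)$. Your concern about needing the strict inequality $m<2m_0$ is unnecessary: $m\le 2m_0$, which the paper uses, already gives $(4+2\sqrt2)D<7D$.
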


\begin{proof}
We defer the proof to \Cref{app:sparse-encoding}.
\end{proof}

Choose a constant failure probability in \Cref{lem:flattening}.
When its bound holds, we can take $g=O(\norm{H'}\log N/\sqrt D)$,
because $H'$ has the same operator norm as $H_0$.
Under the assumptions of \Cref{cor:normsum}, including
$s\ell\le D$, this gives
\[
\sum_{k:\,H_k\ne0}\alpha_k
\le7Dg=O(\sqrt D\log N\norm{H'}).
\]
In \Cref{sec:constant}, we combine this bound with the piece-norm
estimate to control the total number of block-encoding queries.

\section{Unitary synthesis using qDRIFT and QSVT}
\label{sec:constant}

We now combine qDRIFT and QSVT with the block encodings
$U_{H_k}$ from \Cref{sec:pieces} to implement $U$ to target
error $\epsilon$. These block encodings have subnormalization
factors $\alpha_k=mg$ and are used to approximate the evolution in
\[
U=e^{i\varphi}P^Te^{iH'}P,\qquad H'=\sum_kH_k.
\]

Discard zero pieces and set $\lambda=\sum_k\norm{H_k}$.
Since $\norm{e^{iH'}-I}\le\norm{H'}\le\lambda$, we can omit
the evolution when $\lambda<\epsilon/4$.
Otherwise, apply \Cref{lem:qdrift} at error $\epsilon/4$ to choose
the segment count $N_q$. Select a qDRIFT sample with piece
indices $K_j$ and durations $\tau_j$ whose product satisfies
\[
V=\prod_{j=N_q}^{1}e^{i\tau_jH_{K_j}},\qquad
\norm{V-e^{iH'}}\le\epsilon/4.
\]
Then $e^{i\varphi}P^TVP$ approximates $U$ with the same error
by unitary invariance.

To implement the selected product $V$, we simulate each segment
by QSVT using its block encoding $U_{H_{K_j}}$. Treating these
queries as exact, we allocate error $\epsilon/(4N_q)$ to each
segment, so the total simulation error is at most $\epsilon/4$.
\Cref{lem:hamsim} bounds the total query count by
\[
Q=O\!\left(\sum_{j=1}^{N_q}\alpha_{K_j}\tau_j
       +N_q\log\frac{4N_q}{\epsilon}\right).
\]
Let $c_{\rm query}$ be the bound on the $T$-gate cost per query
from \Cref{cor:normsum}. These queries contribute at most
$Qc_{\rm query}$ $T$ gates. \Cref{app:constant} accounts for
the remaining gate costs and errors from Clifford+$T$ compilation.

To bound the total query count, we control the sampled
product's weighted simulation time $\sum_j\alpha_{K_j}\tau_j$.
The sampling rule in \Cref{eq:qdrift-sampling}, applied to the
nonzero pieces, gives
\[
\E\!\left[\sum_{j=1}^{N_q}\alpha_{K_j}\tau_j\right]
=\sum_{k:\,H_k\ne0}\alpha_k.
\]
The following lemma combines this identity with Markov's inequality
and qDRIFT concentration to bound the weighted simulation time
and approximation error of the same sampled product. We apply it
to the $r$ nonzero pieces, reindexed as $A_1,\ldots,A_r$, with $\alpha_k=mg$.

\begin{lemma}[Sampled-product accuracy and query complexity]
\label{lem:constant-selection}
Let $A=\sum_{k=1}^r A_k$ with nonzero Hermitian $A_k\in\C^{N\times N}$,
and set $\lambda=\sum_k\norm{A_k}$.
For each $k$, suppose an exact block encoding $U_{A_k}$ of $A_k$ with
subnormalization factor $\alpha_k>0$ is given.
For $0<\varepsilon\le4\lambda$ and $p\in(0,1)$, choose $N_q$
using \Cref{lem:qdrift} with error $\varepsilon$ and failure probability
$p/2$. A qDRIFT sample with indices $K_j$, durations $\tau_j$,
and product $V=\prod_{j=N_q}^{1}e^{i\tau_jA_{K_j}}$ satisfies
\[
\norm{V-e^{iA}}\le\varepsilon,
\qquad \mathcal T:=\sum_j\alpha_{K_j}\tau_j\le\frac2p\sum_k\alpha_k
\]
with probability at least $1-p$.

For any sample satisfying both inequalities above
and any $\xi\in(0,1/2)$, QSVT implements $V$ to error
$\xi$ using a total of
\[
Q=O\!\left(\frac1p\sum_k\alpha_k+N_q\log(N_q/\xi)\right)
\]
queries to the block encodings $U_{A_k}$ in total
over all $N_q$ segments, counting their inverses and singly
controlled versions as well.
\end{lemma}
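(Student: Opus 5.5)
The proof has two parts: a probabilistic statement about the sample, and a deterministic query count for any sample satisfying both inequalities.

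\emph{Probabilistic part.} By \Cref{lem:qdrift}, with $N_q$ chosen at error $\varepsilon$ and failure probability $p/2$, the event $\norm{V-e^{iA}}\le\varepsilon$ fails with probability at most $p/2$.

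For the weighted time, compute $\E[\alpha_{K_j}\tau_j]$ from \Cref{eq:qdrift-sampling}:
\[
\E[\alpha_{K_j}\tau_j]
=\sum_k\frac{\norm{A_k}}{\lambda}\,\alpha_k\,\frac{\lambda}{N_q\norm{A_k}}
=\frac1{N_q}\sum_k\alpha_k .
\]
Summing over $j$ gives $\E\mathcal T=\sum_k\alpha_k$. Since $\mathcal T\ge0$, Markov's inequality gives $\Pr(\mathcal T>\tfrac2p\sum_k\alpha_k)\le p/2$. A union bound over the two failure events yields probability at least $1-p$ that both inequalities hold. Independence of the two events is not needed.

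\emph{Deterministic part.} Fix a sample satisfying both inequalities. Implement each segment $e^{i\tau_jA_{K_j}}$ by \Cref{lem:hamsim} applied to $U_{A_{K_j}}$ with $t=\tau_j$ and error $\xi/N_q$. Each segment then costs
\[
O\bigl(\alpha_{K_j}\tau_j+\log(N_q/\xi)\bigr)
\]
queries, and $\tau_j>0$ so $\abs t=\tau_j$. The queries are exact, so no $Q_t\eta$ term arises.

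The segment circuits act on a shared clean ancilla register: the encoding ancillas plus the QSVT workspace, sized for the largest piece. By the composition rule after \Cref{def:implementation}, the product implements $V$ to error at most $N_q\cdot\xi/N_q=\xi$.

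Summing the per-segment counts gives
\[
Q=O\bigl(\mathcal T+N_q\log(N_q/\xi)\bigr).
\]
Substituting $\mathcal T\le\tfrac2p\sum_k\alpha_k$ gives the stated bound.

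\emph{Main obstacle.} The only delicate point is the ancilla sharing. Different pieces may have different numbers of encoding ancillas $a$. We therefore pad every encoding to a common register by acting as the identity on unused qubits. An exact block encoding padded with extra $\ket0$ ancillas on which it acts trivially remains an exact block encoding with the same $\alpha$. With that padding, the composition of implementations from \Cref{def:implementation} applies directly. The hidden constants in \Cref{lem:hamsim} are uniform in the encoding, so the sum over segments is legitimate.
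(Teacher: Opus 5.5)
Your proposal is correct and follows the paper's proof essentially step for step. Both compute $\E\mathcal T=\sum_k\alpha_k$, combine Markov's inequality with the qDRIFT failure event by a union bound, and then simulate each segment by QSVT at error $\xi/N_q$ on a shared clean ancilla register padded with unused ancillas, summing the per-segment query counts under the weighted-time bound.
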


\begin{proof}
We defer the proof to \Cref{app:constant}.
\end{proof}

The following theorem bounds the $T$-count and
ancilla count as functions of $N$ and $\epsilon$.
We summarize the compiler in
\Cref{alg:constant} and the full procedure can be found in \Cref{alg:constant-full}.

\begin{theorem}[Unitary synthesis using qDRIFT and QSVT]
\label{thm:constant}
There is an absolute constant $C_0$ such that the following holds for
every $U\in U(N)$ and every
$\epsilon\in[N^{-1/8}n^{C_0},\,1/2]$. There is a Clifford+$T$ circuit that
implements $U$ to error $\epsilon$ with $\Ot(N^{8/7}\epsilon^{-6/7})$ $T$ gates
and $\Ot(N^{6/7}\epsilon^{-8/7})$ ancillas.
The hidden factors are $O(n^2L)$ and $O(nL)$ respectively, with
$L=n+\log_2(1/\epsilon)\le2n$.
\end{theorem}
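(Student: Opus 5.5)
The plan is to instantiate the pipeline of \Cref{sec:overview} with parameters that depend on $\epsilon$, and then split the error budget into four parts of $\epsilon/4$ each. First write $U=e^{iH}$ with the spectrum of $H$ in $(-\pi,\pi]$, set $\varphi=\tr H/N$ and $H_0=H-\varphi I$, so that $\norm{H_0}\le2\pi$. Apply \Cref{lem:flattening} with failure probability $1/4$. This gives $P$ with an exact $O(\sqrt N)$-$T$ implementation, and $g=O(\log N/\sqrt D)$ bounding all blocks of $H'=PH_0P^T$. Next apply \Cref{lem:split} with failure probability $1/4$, so $\ell=O(n)$. The global phase costs $O(\log(1/\epsilon))$ $T$ gates on one ancilla by \eqref{eq:single-qubit-synthesis}. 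Since $1/4+1/4<1$, some choice of $(\sigma,\tau)$ and of the split satisfies both lemmas at once. All sampling is classical preprocessing, so fixing such a good outcome suffices.

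\textbf{Parameter choice.} Take
\[
b\approx N^{3/7}\epsilon^{-4/7},\qquad D=N/b\approx N^{4/7}\epsilon^{4/7},\qquad s\approx\epsilon^2\sqrt D/\polylog(N),
\]
where $b$ is rounded to a power of two and $s$ to an integer. The polylogarithmic factor in $s$ is chosen so that $g\le\norm{H'}/\sqrt{s\ell}$ and $s\ell\le D$ both hold. Then \Cref{lem:split} gives $\lambda=\sum_k\norm{H_k}\le4s\norm{H'}\sqrt{\ell/s}=\Ot(\sqrt s)$, and \Cref{cor:normsum} gives $\sum_k\alpha_k\le7Dg=\Ot(\sqrt D)$.

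\textbf{Simulation and $T$-count.} If $\lambda<\epsilon/4$, omit the evolution. Otherwise \Cref{lem:qdrift} at error $\epsilon/4$ gives
\[
N_q=\Ot(\lambda^2/\epsilon^2)=\Ot(s/\epsilon^2)=\Ot(\sqrt D).
\]
\Cref{lem:constant-selection} with constant $p$ and $\xi=\epsilon/4$ then gives $Q=\Ot(\sqrt D)$ queries. Compile each query by \Cref{lem:encoding} at error $\eta=\epsilon/(4Q)$, so $\Lambda=O(L)$; by the composition rule and \Cref{lem:hamsim}, the total error is at most $\epsilon$. The $T$-count is
\[
Q\cdot O\!\left(N\sqrt{\Lambda/s}+b^2\Lambda\right)
=\Ot\!\left(\sqrt D\,N/(\epsilon D^{1/4})+\sqrt D\,b^2\right)
=\Ot\!\left(N^{5/4}b^{-1/4}\epsilon^{-1}+N^{1/2}b^{3/2}\right).
\]
At the chosen $b$ both terms equal $N^{8/7}\epsilon^{-6/7}$. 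The QSVT overhead (rotations, reflections, $O(\sqrt N)$ for $P$ and $P^T$) is lower order. Ancillas are dominated by a single $\SEL$ oracle reused across queries. This costs $O(b\sqrt{Dm\Lambda}+b^2\Lambda)$, and both $N/(\epsilon D^{1/4})$ and $b^2$ equal $N^{6/7}\epsilon^{-8/7}$ at this $b$.

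\textbf{Main obstacle.} The hard part is checking that the parameter window is nonempty and tracking the hidden factors. We need $s\ge1$ after dividing by the polylogarithm. Since $s\approx N^{2/7}\epsilon^{16/7}/\polylog$, this holds exactly when $\epsilon\ge N^{-1/8}n^{C_0}$, which is where the lower limit in the statement comes from. We also need $b\le N$, $s\ell\le D$, $\epsilon/4\le4\lambda$ (handled by the omission case), $b^2\Lambda\ge\log(2Dm)$, and the extra factor of two from rounding to powers of two. I would verify each of these first, then collect the powers of $n$ and $L$: two logarithmic factors come from $g$ and $\ell$, and one factor of $L$ from $\Lambda$ and the $\log(N_q/\xi)$ term. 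Together they should give the stated $O(n^2L)$ for the $T$-count and $O(nL)$ for the ancillas.
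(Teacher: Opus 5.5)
Your proposal is correct and follows the paper's proof essentially step for step: the same principal logarithm and scalar-phase separation, flattening and random split, and $b=\Theta(N^{3/7}\epsilon^{-4/7})$ leading to $Q=O(\sqrt D\,n)$, with the lower limit on $\epsilon$ coming from $s\ge1$ (the paper uses failure parameters $1/8$ and a $7\epsilon/8$ error budget, which is immaterial). One small correction in emphasis: the paper fixes $s=\lfloor c_s\sqrt D\,\epsilon^2/(n^2L)\rfloor$. The factor $n^2L$ there keeps the additive term $N_q\log(N_q/\xi)=O(sn^2L/\epsilon^2)$ in $Q$ at $O(\sqrt D)$, which pins the hidden factor to $O(n^2L)$; the conditions $g\le\norm{H'}/\sqrt{s\ell}$ and $s\ell\le D$ are automatically slack in this range.
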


\begin{proof}[Proof sketch]
\Cref{lem:constant-selection} combines qDRIFT concentration
with Markov's inequality to select a product that meets both
an accuracy test and a bound on weighted simulation time.

The parameter choice in \Cref{app:constant} has the scalings
\[
b=\Theta(N^{3/7}\epsilon^{-4/7}),\qquad
s=\Theta\!\left(\frac{\sqrt D\,\epsilon^2}{n^2L}\right).
\]
The stated range of $\epsilon$ ensures $s\ge1$.
With these choices, simulating all
$N_q$ sampled evolutions requires $Q=O(\sqrt D\,n)$ queries
in total to the block encodings $U_{H_k}$, including their
inverses and singly controlled versions (\Cref{eq:total-calls}).
Multiplying $Q$ by the $T$-gate cost of one such query from
\Cref{cor:normsum} gives the stated $T$-count. Reusing the
ancillas across these sequential queries gives the ancilla
bound. The full proof in \Cref{app:constant} verifies the
parameter conditions and gives the error analysis for qDRIFT,
QSVT, and the Clifford+$T$ implementation.
\end{proof}

\begin{algorithm}[tb]
\caption[qDRIFT and QSVT unitary synthesis outline]{Unitary synthesis using qDRIFT and QSVT for
\Cref{thm:constant}. Full procedure is in \Cref{alg:constant-full}.}
\label{alg:constant}
\begin{algorithmic}[1]
\Require $U\in U(N)$ and $\epsilon$ in the range of \Cref{thm:constant}.
\Ensure A fixed Clifford+$T$ circuit implementing $U$ to error $\epsilon$.
\State Choose $b,s,\ell$ as in \Cref{app:constant} and separate the logarithm into $\varphi I+H_0$. If $H_0=0$, return the scalar-phase circuit.
\State Resample the similarity and Hermitian split until the bounds of \Cref{lem:flattening,lem:split} hold. Discard zero pieces and set $\lambda=\sum_k\norm{H_k}$. If $\lambda<\epsilon/4$, return the scalar-phase circuit.
\State Encode the pieces and select a qDRIFT product passing the accuracy and weighted-time tests of \Cref{lem:constant-selection}.
\State Compile its evolutions with QSVT, exact basis changes, and the scalar phase, using reusable clean workspace and the implementation tolerances specified in the error analysis of \Cref{app:constant}.
\end{algorithmic}
\end{algorithm}

\section{Unitary synthesis using residual refinement}
\label{sec:refinement}

For a fixed decomposition and failure probability, the qDRIFT
segment bound in \Cref{lem:qdrift} grows as $\epsilon^{-2}$.
To improve this precision dependence, we successively correct
the remaining error, first using block-sparse generators and then
the full residual generator.

Let $W_{l-1}$ be the current unitary approximation to $U$, starting from $W_{-1}=I$. At block-sparse
level $l$, define the residual $R_l=UW_{l-1}^{\dagger}$.
Since $R_lW_{l-1}=U$, applying $R_l$ after $W_{l-1}$
would give the target exactly. We approximate this correction
to obtain $W_l$. As $W_{l-1}$ approaches $U$, $R_l$ approaches
the identity. Its principal Hermitian logarithm
$H^{(l)}=-i\log R_l$, whose eigenvalues lie in $(-\pi,\pi]$,
satisfies
\[
\norm{H^{(l)}}\le\frac\pi2\norm{R_l-I}
=\frac\pi2\norm{U-W_{l-1}}.
\]
Thus a more accurate approximation gives a smaller generator
for the next correction.

For each residual, we repeat the scalar-phase separation,
flattening, and splitting used in \Cref{sec:constant}. Let
$\varphi_l$ be its scalar phase, $P_l$ its change of basis,
and $H'^{(l)}=\sum_k H_k^{(l)}$ its transformed generator, so
$R_l=e^{i\varphi_l}P_l^Te^{iH'^{(l)}}P_l$.
For a residual requiring simulation, qDRIFT selects piece
indices $K_{l,u}$ and durations
$\tau_{l,u}$ for $u=1,\ldots,N_q^{(l)}$, using the unrounded
pieces $H_k^{(l)}$ (\Cref{lem:constant-selection}).

To simplify repeated block-encoding queries, we replace each
selected piece $H_k^{(l)}$ by a nearby Hermitian matrix
$\widetilde K_k^{(l)}$ obtained by rounding its encoding angles.
Using these rounded pieces, define
\[
\begin{aligned}
V_l&=\prod_{u=N_q^{(l)}}^{1}
e^{i\tau_{l,u}\widetilde K_{K_{l,u}}^{(l)}},\\
W_l&=e^{i\varphi_l}P_l^TV_lP_lW_{l-1}.
\end{aligned}
\]
Here $V_l$ approximates $e^{iH'^{(l)}}$, and $W_l$ appends the
resulting correction, including the basis changes and scalar phase,
to $W_{l-1}$.
The rightmost factor acts first. When a residual evolution is
omitted because its generator is zero or sufficiently small,
we set $V_l=I$, as specified in \Cref{alg:refinement-full}.
All choices of matrices, sample indices, and durations are made
during classical preprocessing.

We call $W_l$ the \emph{ideal product} because its exponentials,
basis changes, and scalar phases are evaluated exactly.
Its difference from $U$ includes qDRIFT and rounding errors,
which enter the next residual. Circuit implementation errors
are analyzed separately.

We next explain how rounding simplifies these queries. Round
each encoding's rotation and phase angles to the nearest multiples
of $2\pi/2^B$, where the integer $B$ sets the angle precision.
Given a fixed $(B+1)$-qubit state $\ket\Psi$, controlled modular
addition implements these phases exactly by phase kickback
(\Cref{lem:exactmux}). Each query returns $\ket\Psi$ unchanged
and unentangled with the other registers, so we call it a
\emph{catalyst} and reuse it throughout the circuit.
The following lemma gives the rounded Hermitian encoding and
bounds its matrix error and the $T$-gate and ancilla costs per query.

\begin{lemma}[Block encodings of rounded Hermitian matrices]
\label{lem:roundedencoding}
Let $K$ be Hermitian with blocks of norm at most $g$ and at most $m_0$
nonzero blocks in each block column, and let $m$ be the least power of
two with $m\ge m_0+1$. The extra slot accommodates diagonal-block
symmetrization. For $B\ge\log_2(2b)+2$, there is a Hermitian matrix
$\widetilde K$, determined by $K$, $g$, and $B$, with
\begin{equation}
\norm{\widetilde K-K}\le16\pi\,mg\,b\,2^{-B},
\label{eq:rounded-encoding-error}
\end{equation}
and an $(mg,\log_2m+1,0)$-block encoding $U_{\widetilde K}$
of $\widetilde K$. Queries to $U_{\widetilde K}$,
its inverse, and their singly controlled versions are implemented exactly, given
$\ket\Psi$, with the cost in \Cref{eq:encodingcost} at $\Lambda=B$.
\end{lemma}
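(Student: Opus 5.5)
The plan is to reuse the construction of \Cref{lem:encoding}, with one change: each block's $2b\times2b$ unitary is parameterized by rotation and phase angles, and these angles are rounded to multiples of $2\pi/2^B$. The rounded matrix $\widetilde K$ is then defined as $mg$ times the zero-ancilla block of the resulting encoding. Three properties are needed, and we prove them in turn: Hermiticity of $\widetilde K$, the error bound \Cref{eq:rounded-encoding-error}, and exact implementability with the stated cost.

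\emph{Step 1 (symmetrized encoding and Hermiticity).} Rounding angles independently for $K_{IJ}$ and for $K_{JI}=K_{IJ}^\dagger$ would in general destroy Hermiticity. We therefore choose the unitary for address $q(J,u)$ as the adjoint-related counterpart of the one for $(J,u)$. Concretely, fix a unitary dilation $V_{IJ}$ of $K_{IJ}/g$, say the standard $\begin{pmatrix}A&\sqrt{I-AA^\dagger}\\ \sqrt{I-A^\dagger A}&-A^\dagger\end{pmatrix}$. Decompose it into a fixed circuit of multiplexed single-qubit rotations, for instance via cosine--sine decomposition as in \Cref{lem:tan}. Round its angles, and use the gate-by-gate inverse circuit, which has negated angles and so is again exactly representable, for the partner address.

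Diagonal blocks are different, because they pair with themselves. Here the rounded block $\widetilde A$ need not be Hermitian. We symmetrize by spending the extra slot: the diagonal block is split into two half-weight contributions, encoding $\widetilde A$ in one slot and $\widetilde A^\dagger$ in the other. Under the swap in $\Pi$ these pair like an off-diagonal adjoint pair, which explains $m\ge m_0+1$.

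With this pairing, $\Pi$ intertwines the address structure so that the encoded top-left block satisfies $\widetilde K^\dagger=\widetilde K$. This is the same argument by which \Cref{lem:encoding} produced $K$ with $\Pi^{-1}=\Pi$.

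\emph{Step 2 (error bound).} The $2b\times2b$ unitary uses $O(b^2)$ angles, and rounding each one perturbs the unitary by at most $2\pi\cdot2^{-B}$. This crude count gives $O(b^2 2^{-B})$, which is too weak. To obtain the linear factor $b$, we use the cosine--sine structure: each of the $O(\log b)$ layers is a multiplexed rotation that is block-diagonal. Its perturbation is therefore a maximum over branches rather than a sum, namely $\pi2^{-B}$ per multiplexed layer. The number of such layers in a full recursive CSD of a $2b$-dimensional unitary is $O(b)$, or at most $4b$ counting the phase layers. This gives $\|\widetilde V_{IJ}-V_{IJ}\|\le 8\pi b2^{-B}$, and hence $\|\widetilde K_{IJ}-K_{IJ}\|\le 8\pi gb2^{-B}$.

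Passing from per-block errors to an operator-norm bound on $\widetilde K-K$ is the main obstacle. The route we take: the error matrix has the same block-sparsity pattern, with at most $m$ nonzero blocks per block row and column. A Schur-type bound for block-sparse matrices then gives a norm of at most $m\max_{IJ}\|\cdot\|$. Alternatively, view $\widetilde K-K$ as $mg$ times the difference of two block encodings that differ only in the $\SEL$ oracle, whose difference has norm at most $\max\|\widetilde V-V\|$; this gives $mg\cdot 8\pi b2^{-B}$ directly, and the factor $2$ from symmetrization accounts for the $16\pi$. The hypothesis $B\ge\log_2(2b)+2$ serves to keep the rounded angles well defined and the bound meaningful.

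\emph{Step 3 (exact implementation and cost).} Each rounded rotation $e^{i\pi k/2^{B-1}Z}$ with an integer $k$ loaded from the lookup table is applied exactly by controlled modular addition of $k$ into the register $\ket\Psi$, the Fourier state of $\mathbb Z_{2^{B+1}}$, using phase kickback as in \Cref{lem:exactmux}. The catalyst is returned unchanged. The angles are written as $B$-bit words through Boolean oracles (\Cref{lem:boolean}), which are exact. Repeating the accounting of \Cref{lem:tan} with $\Lambda$ replaced by $B$ bits per angle, without any synthesis error, reproduces \Cref{eq:encodingcost} at $\Lambda=B$. The Hadamard and $\Pi$ parts are exact exactly as before. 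Inverses and controlled versions reverse the circuit or add a control to the adders, at constant-factor overhead.
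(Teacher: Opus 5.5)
\textbf{Verdict.} Your construction and both error routes are correct, and the construction itself is the paper's. You round the dilation of $K_{IJ}/g$ and place its exact adjoint (reversed, negated angle list) at the partner address. You split each nonzero diagonal block over two slots carrying a rounded dilation of $K_{JJ}/(2g)$ and its adjoint, whence $m\ge m_0+1$. You then take a per-dilation error of $8\pi b2^{-B}$, pass to $\norm{\widetilde K-K}$ by the block Schur test, and implement the multiplexer exactly with the catalyst.

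\textbf{Missing step: the unused slots.} You do not say how the unused (padding) slots are realized in the rounded multiplexer. Because $\Pi$ fixes their addresses, they contribute to diagonal blocks. Hermiticity of $\widetilde K$ therefore requires that they contribute \emph{exactly} zero, and so does your Schur-route claim that $E=\widetilde K-K$ has the block pattern of $K$. Rounding a generic decomposition of $X_{\mathsf F}\otimes I_{\mathsf T}$ via \Cref{lem:exactmux} only guarantees a small error, not a zero encoded block. The paper fixes this as follows. It puts zero angle words at unused addresses, so the rounded multiplexer acts as the identity there. It then applies $X_{\mathsf F}$ conditioned on a looked-up ``unused'' bit, at extra cost $O(\sqrt{Dm})$, which is absorbed in \Cref{eq:encodingcost}.

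\textbf{Controlled queries.} You must also condition $\Pi$ (and this flag flip), not only the adders. The two Hadamard layers stay unconditional because $F_{\mathsf S}^2=I$.

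\textbf{Layer count.} Your count is inconsistent: you first say $O(\log b)$ layers, then $O(b)$. What is rounded is the sequence of $d-1=2b-1$ uniformly controlled single-qubit gates. Each is block-diagonal, so each changes by at most $4\pi2^{-B}$ (a maximum over branches). This is exactly \Cref{lem:exactmux}'s bound $2^{2-B}\pi d=8\pi b2^{-B}$ at $d=2b$. Its hypothesis $B\ge\kappa+2$ is the present lemma's $B\ge\log_2(2b)+2$, so you can simply cite it.

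\textbf{Your second route.} Comparing $U_{\widetilde K}$ with the unrounded encoding of the same slot layout is valid and genuinely different from the paper's Schur test. That reference encoding encodes $K$ exactly, since $A+A^\dagger=K_{JJ}/g$ for $A=K_{JJ}/(2g)$. The route gives $\norm{\widetilde K-K}\le mg\max_x\norm{\widetilde V_x-V_x}\le8\pi mgb2^{-B}$ directly, without the sparsity of $E$ and with a better constant. No factor $2$ arises in it. The symmetrization factor $2$ belongs to the Schur route, where the diagonal error is $2g\norm{\widetilde A-A}$. Hermiticity, however, still needs the exact padding above.
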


\begin{proof}
We defer the proof to \Cref{app:roundedencoding}.
\end{proof}

The rounding-error bound in \Cref{lem:roundedencoding} scales
with the subnormalization factor $mg$. During corrections using
the full generator, $mg$ shrinks with the residual, so a fixed
angle precision gives progressively smaller absolute error bounds
in \Cref{eq:rounded-encoding-error}. \Cref{app:refinement-parameters}
specifies a choice $B=O(n)$ independent of the target precision
$L=n+\log_2(1/\epsilon)$.

With this fixed $B$, all queries can share the same catalyst.
Its product form is
\[
\ket\Psi=\bigotimes_{j=1}^{B+1}
\frac{\ket0+e^{-2\pi i/2^j}\ket1}{\sqrt2},
\]
with qubits ordered from most to least significant. This
factorization follows from the binary expansion of the index in
\Cref{app:exactmux}, and allows each qubit to be prepared separately.

Single-qubit synthesis (\Cref{eq:single-qubit-synthesis}) therefore
prepares the catalyst to error $\delta_c$ using
$O(B\log(B/\delta_c))$ $T$ gates. Each query preserves the exact
catalyst, so we prepare it once before the first query and unprepare
it after the last. Comparing the complete circuit with an exactly
initialized catalyst bounds the combined error by $2\delta_c$,
independently of the number of queries. Choosing $\delta_c=\epsilon/16$
gives error at most $\epsilon/8$ and total preparation and
unpreparation cost $O(nL)$.

These block-encoding queries are then used within QSVT to
approximate the chosen evolutions. QSVT truncation can leave
ancillas nonzero, so we bound its full-output error separately
from the ideal-product error (\Cref{sec:refinement-errors}).
The following theorem accounts for these implementation costs and
errors together with the query costs of both refinement stages.

\begin{theorem}[Unitary synthesis using residual refinement]
\label{thm:refined}
There is an absolute constant $n_0$ such that for every $n\ge n_0$,
every $U\in U(N)$, and every $\epsilon\in(0,1/2)$ with
$L=n+\log_2(1/\epsilon)\le N$, there is a Clifford+$T$ circuit that
implements $U$ to error $\epsilon$ with
\begin{equation}
O\big(n^3(N^{8/7}L^{3/7}+NL)\big)
\label{eq:refinement-total-cost}
\end{equation}
$T$ gates and $O(nN)$ ancillas.
\end{theorem}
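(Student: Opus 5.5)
The plan is to run residual refinement in two stages. The ideal products $W_l$ are constructed during classical preprocessing. The circuit then realizes each correction $e^{i\varphi_l}P_l^TV_lP_l$ using the exact basis changes of \Cref{lem:flattening}, a synthesized scalar phase, and QSVT (\Cref{lem:hamsim}) applied to the rounded encodings of \Cref{lem:roundedencoding}. All encodings share one catalyst $\ket\Psi$ with a fixed $B=O(n)$. I track two errors separately: the ideal-product error $e_l:=\norm{U-W_l}$, and the circuit error of implementing the $W_l$'s, which adds over segments by the composition rule. Since $\norm{H^{(l)}}\le\frac\pi2 e_{l-1}$, the aim at each level is to shrink $e_l$ by a fixed factor relative to $e_{l-1}$. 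The total cost is then a sum over levels, and it should be dominated by the first level and by a term linear in $L$.

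\textbf{Block-sparse stage.} At level $l$ write $h_l=\norm{H^{(l)}}$. I apply \Cref{lem:split} and \Cref{lem:constant-selection} with qDRIFT error $\varepsilon_l=c\,h_l$ for a small constant $c$, so that $e_l\le e_{l-1}/4$ after accounting for rounding. The rounding error from \Cref{eq:rounded-encoding-error} is at most $16\pi mgb2^{-B}$. This quantity is proportional to $h_l$ because $g=O(h_l\log N/\sqrt D)$, so a fixed $B=O(n)$ makes it at most $c\,h_l$. By \Cref{lem:qdrift}, the segment count $N_q^{(l)}=O(s\,\ell)$ is then independent of $h_l$. The weighted time satisfies $\sum\alpha_k\le 7Dg=O(\sqrt D\,h_l\log N)$. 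QSVT error is allocated as $\xi=\epsilon/\mathrm{poly}$ per segment, so each segment costs an additive $O(L)$ queries. The queries at level $l$ are therefore $O(\sqrt D\,h_l n+s\ell L)$, and each query costs $O(N\sqrt{B/s}+b^2B)$ $T$ gates by \Cref{cor:normsum}. I choose
\[
s=\Theta\!\left(\frac{\sqrt D}{nL}\right),\qquad b=\Theta\!\left(N^{3/7}L^{2/7}\right).
\]
This balances $N^{5/4}b^{-1/4}L^{1/2}$ against $N^{1/2}b^{3/2}$ and gives $\Ot(N^{8/7}L^{3/7})$ at the first level. The conditions $s\ell\le D$, $s\ge1$, and $g\le\norm{H'}/\sqrt{s\ell}$ need $L\le N$ and $n\ge n_0$, which is where those hypotheses enter.

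The $\sqrt D\,h_l$ part decays geometrically across levels, but the $sL$ part does not. For this reason I run only $O(1)$ block-sparse levels, or at most $O(\log n)$. I stop once $h_l\le N^{-\gamma}$ for a small constant $\gamma>0$. An alternative is to use the small-time bound \Cref{eq:hamsim-small-time}, which cuts the per-segment cost to $O(1+L/\log(1/\theta))$ once $\theta$ is polynomially small.

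\textbf{Full-generator stage.} Once $h_l\le N^{-\gamma}$, I switch to a single piece: $s=1$, no splitting, and a single dense rounded encoding of $H'^{(l)}$. A query costs $O(N\sqrt B+b^2B)=O(N\sqrt n)$ $T$ gates, and the $\SEL$ ancillas are $O(b\sqrt{D^2B})=O(N\sqrt n)$, which is the source of the $O(nN)$ ancilla bound. The evolution $e^{iH'^{(l)}}$ is simulated directly, with no qDRIFT error. The subnormalization is $\alpha=Dg=O(\sqrt D\,h_l\log N)$, so after taking $b$ large enough we have $\theta=e\alpha/2\le N^{-\Omega(1)}$. \Cref{eq:hamsim-small-time} then gives $O(1)$ queries per level with error $\theta^{\Omega(1)}h_l$. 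The ideal error is governed only by the rounding error, which is proportional to $mg\propto h_l$. As a result, each level reduces the error by a factor $N^{-\Omega(1)}$, and $O(L/n)$ levels suffice to reach $\epsilon/8$. This stage costs $O(L/n)\cdot O(N\sqrt n\,n)=\Ot(NL)$ in total. Adding the catalyst cost $O(nL)$, the flattening cost $O(\sqrt N)$ per level, and the phase synthesis gives \Cref{eq:refinement-total-cost}.

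\textbf{Main obstacle.} The hardest step is the error bookkeeping. Four requirements have to hold at once. First, the rounded pieces $\widetilde K^{(l)}_k$ must remain compatible with the qDRIFT guarantee, which was proved for the unrounded pieces; I handle this by perturbation, giving $\norm{V_l-\text{unrounded}}\le\sum_u\tau_{l,u}\norm{\widetilde K-K}$, and this is $O(h_l\cdot\text{rounding})$ after using the weighted-time bound. Second, a single fixed $B=O(n)$ must suffice at every level, which relies on the rounding error scaling with $h_l$. Third, QSVT implementation errors, which leave ancillas dirty, must sum to at most $\epsilon/2$ over all $\Ot(s L)$ segments, and this is what forces $\xi=\epsilon/\mathrm{poly}(N)$ and hence the $L$ factor per segment. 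Fourth, the failure probabilities of flattening and splitting must be handled by resampling, which affects only preprocessing. I would verify the per-level contraction $e_l\le e_{l-1}/4$ first, since it is the invariant that everything else depends on.
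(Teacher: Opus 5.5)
\textbf{Main gap: the large-$L$ regime.} Your block-sparse stage matches the paper's level stage in its essentials: constant relative qDRIFT precision, rounding error proportional to $h_l$ through the weighted-time bound, and one fixed $B=O(n)$. The problem is that your parameter choice fails on most of the range $L\le N$.

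\begin{itemize}
\item With $b=\Theta(N^{3/7}L^{2/7})$ you have $\sqrt D=\Theta(N^{2/7}L^{-1/7})$. So $s=\Theta(\sqrt D/(nL))\ge1$ forces $L\lesssim N^{1/4}n^{-7/8}$. This does not follow from $L\le N$ and $n\ge n_0$, as you claim.
\item For $L>N^{1/4}$, the processing term $b^2B\approx nN^{6/7}L^{4/7}$ exceeds $N\sqrt B$. So with this $b$, even your full-generator stage does not cost $O(N\sqrt n)$ per query.
\end{itemize}

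The range $N^{1/4}n^{-7/8}\lesssim L\le N$ is exactly where the $NL$ term of the bound dominates, and your proposal gives no valid construction there. The paper handles it separately. It runs no qDRIFT levels, takes $D$ to be the least power of two at least $\min\{N,n^2L^2\}$, and corrects directly from the traceless generator of norm at most $2\pi$. The first dense correction then has subnormalization $O(n\sqrt D)=O(n^2L)$. By the general bound $O(\alpha+L)$ it needs $O(n^2L)$ queries. The condition $L>L_0$ gives $b^2B\le N^2B/(n^4L^4)=O(N\sqrt B)$, so each query costs $O(N\sqrt B)$. Together these produce the $n^{5/2}NL$ term.

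\textbf{Two further slips when $L\le L_0$.}

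First, the number of levels and the stopping point are inconsistent. With contraction by a factor $4$ per level, reaching $h_l\le N^{-\gamma}$ takes $\Theta(\gamma n)$ levels, not $O(1)$ or $O(\log n)$. A small $\gamma$ is also not enough. The first dense correction has $\alpha=O(n\sqrt D\,h)$ with $\sqrt D=\Theta(N^{2/7}L^{-1/7})$. Since $b$ is pinned by the cost balance, you cannot enlarge it to force $\theta\le N^{-\Omega(1)}$. The paper instead stops the levels when $s_l$ would vanish, i.e.\ when $a_l<nL/(c_s\sqrt D)$. This gives $\alpha_1=O(n^2L)$, and the first correction is simulated with the general bound, not the short-time bound. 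If you keep $s$ fixed over these $\Theta(n)$ levels, the additive $N_qL=\Theta(n\sqrt D)$ query term recurs at every level. The total is $O(n^3N^{8/7}L^{3/7})$, which happens to fit the theorem. The paper takes $s_l\propto a_l$ so that both terms decay geometrically, saving a factor of $n$.

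Second, the query count in the dense stage is wrong as stated. You note yourself that QSVT truncation is not corrected by later residuals. So each call needs error $\epsilon/\mathrm{poly}(N)$, and hence $O(1+L/\log(1/\theta_j))$ queries, not $O(1)$ queries with error $\theta^{\Omega(1)}h_l$. The total stays $O(J+(L/n)\log L)$ only because $\alpha_{j+1}\le N^{-2}\alpha_j$ gives $\log(1/\theta_j)=\Omega((j-1)n)$. Your accounting needs this harmonic-sum step.
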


\begin{proof}[Proof sketch]
\Cref{lem:roundedencoding} gives Hermitian rounded generators
with a bound on the rounding error and encodings implemented
exactly given a shared catalyst state.
We form each residual from the corresponding ideal product, so
later corrections address both qDRIFT error and data rounding.
The block-sparse levels use geometrically decreasing error
tolerances. As the generator shrinks, we reduce the number of pieces
to control the additive precision cost per segment. The costs of
these levels then sum geometrically.

When the piece count would fall below one,
we correct each remaining residual using a block encoding of
its full transformed and rounded generator $\widetilde G_j$,
where $j$ indexes these corrections. QSVT simulates
$e^{i\widetilde G_j}$ by querying that block encoding.
The subnormalization factors of these block encodings decrease by at
least a factor $N^2$ per correction, and the short-time simulation
bound controls their query complexity. The two stages give the stated
$T$-count with reusable workspace. Simulation truncation, rotation
synthesis, catalyst preparation and unpreparation, and the phase
are treated in a separate error analysis. We give
the full proof in \Cref{app:refinement}.
\end{proof}

For fixed $n$, the $T$-count bound grows at most linearly with the
requested number of accuracy bits. This improves the
$\epsilon^{-6/7}$ dependence in \Cref{thm:constant}.
In particular, for $\epsilon\ge2^{-\poly(n)}$, the number of accuracy
bits is polynomial in $n$, giving a $T$-count of
$2^{8n/7}\poly(n)$ for sufficiently large $n$.
When $N^{1/4}\le L\le N$, the $NL$ term in
\Cref{eq:refinement-total-cost} dominates, so the bound simplifies
to $O(n^3NL)$. The ancilla bound remains $O(nN)$ throughout
$L\le N$. \Cref{sec:refinement-cost} compares this precision
dependence with previous bounds. The compiler is summarized in
\Cref{alg:refinement}, with the full procedure in
\Cref{alg:refinement-full}.

\begin{algorithm}[tb]
\caption[Residual-refinement outline]{Residual refinement for
\Cref{thm:refined}. Full procedure is in \Cref{alg:refinement-full}.}
\label{alg:refinement}
\begin{algorithmic}[1]
\Require $U\in U(N)$ and $\epsilon$ in the range of \Cref{thm:refined}.
\Ensure A fixed Clifford+$T$ circuit implementing $U$ to error $\epsilon$.
\State Choose the parameters and refinement schedule as in \Cref{app:refinement-parameters}. Set the current ideal product $W=I$.
\State For each block-sparse level, select a qDRIFT correction for $UW^\dagger$, round its generators, and left-multiply $W$ by the resulting ideal correction, including basis changes and the scalar phase.
\State Repeatedly correct $UW^\dagger$ using its full rounded generator and update $W$ by the ideal correction until the latest generator-rounding error is at most $\epsilon/8$.
\State Compile the recorded evolutions with QSVT, exact basis changes, and the accumulated scalar phase, preparing and unpreparing one shared catalyst $\ket\Psi$. Use reusable clean workspace and the tolerances in \Cref{sec:refinement-errors}.
\end{algorithmic}
\end{algorithm}

\section{Limits of the sequential block-SELECT model}
\label{sec:optimality}

We now ask whether a different choice of pieces or block size can
improve the $N^{8/7}$ scaling of the $T$-count in
\Cref{thm:constant} when the target error $\epsilon$ is fixed
independently of $N$. The following theorem establishes a matching
lower bound within the sequential block-SELECT model, subject to a
bound on the information carried by the intervening unitaries.

The model represents the target by Hamiltonian evolutions separated
by unitaries $W_j$, called interleavers. Durations are absorbed into
the generators, and $s$ counts exponential factors. Here $s$ and $W_j$
have local meanings, distinct from the split count and residual
approximations used earlier.

\begin{definition}[Sequential block-SELECT design]
\label{def:design}
A design at block size $b$ is a sequence
$\mathcal D=(W_0,H_1,W_1,\dots,H_s,W_s)$ of unitaries $W_j$ and Hermitian
matrices $H_k$. It approximates $U$ to error $\epsilon$ if
\begin{equation}
\norm{e^{i\theta}W_se^{iH_s}\cdots W_1e^{iH_1}W_0-U}\le\epsilon
\label{eq:design-approximation}
\end{equation}
for some $\theta\in\R$.

The Hermitian generators $H_k$ may depend
arbitrarily on the target $U$. Let $n_k$ count the nonzero blocks of
$H_k$, and let $m_k$ be their maximum count in a block row or column
before padding. Set
$c_k=\max_J(\sum_I\norm{(H_k)_{IJ}}^2)^{1/2}$ and $\beta_k=\sqrt{m_k}\,c_k$.
For the sequence $\mathcal D$, we define the
following model of the $T$-gate cost for implementing its
Hamiltonian evolutions $e^{iH_1},\ldots,e^{iH_s}$:
\begin{equation}
\mathrm{cost}(\mathcal D)=\sum_{k=1}^s\max\{1,\beta_k\}\big(b\sqrt{n_k}+b^2\big).
\label{eq:designcost}
\end{equation}
The interleavers have zero model cost. To prevent them from storing
the entire target unitary, we require the entire sequence $(W_0,\dots,W_s)$
to be specified by at most $B_W$ bits. These bits may depend on $U$.
A fixed decoding rule recovers the sequence for every target.
\end{definition}

The model assigns a lookup charge $b\sqrt{n_k}$ and a block-processing
charge $b^2$ to each query to a block encoding of $H_k$. The multiplier
$\max\{1,\beta_k\}$ represents the simulation work for $e^{iH_k}$,
with a minimum charge of one and linear growth for $\beta_k>1$,
motivated by the black-box query bounds of Ref.~\cite{low2019spectral}.
The interleaver budget allows at most $2^{B_W}$ sequences under
the fixed decoding rule. \Cref{thm:optimality} bounds the resulting
model cost in \Cref{eq:designcost}.

\begin{theorem}[Optimality within the sequential block-SELECT model]
\label{thm:optimality}
There are absolute constants $c,C$ with the following property. %
Let $n$ be
large enough and $\epsilon\in[N^{-3},c]$. Then, with probability at least
$1/2$ over a Haar-random $U\in U(N)$, the following holds simultaneously
for every power-of-two block size $b\le N$, with $D=N/b$.
Every design at block size $b$ with
$B_W\le\frac14N^2\log_2(1/\epsilon)$ that approximates $U$ to error $\epsilon$
satisfies
\begin{equation}
\mathrm{cost}(\mathcal D)\ \ge\ \frac{c}{n^{C}}\Big(\sqrt\epsilon\,ND^{1/4}+\epsilon\sqrt D\,b^2\Big).
\label{eq:lowerbound}
\end{equation}
Minimizing over $b$, every such design has cost
$\Omt(\epsilon^{4/7}N^{8/7})$.
\end{theorem}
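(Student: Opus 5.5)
The plan is a counting argument. The interleavers carry at most $B_W$ bits, so the information about $U$ not stored there must sit in the generators $H_k$. We will discretize the generators in a way whose bit count is controlled by $\mathrm{cost}(\mathcal D)$, and then compare against the metric entropy of $U(N)$.

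Fix $b$. A Haar-random $U$ is, with probability at least $1/2$, at distance more than $3\epsilon$ from every point of any fixed family of at most $2^{N^2\log_2(1/\epsilon)/2}$ unitaries. This follows from the standard volume estimate that an operator-norm $\epsilon$-ball has Haar measure at most $(C'\epsilon)^{N^2}$. A union bound over the $n+1$ block sizes keeps the probability at least $1/2$ after shrinking constants. It therefore suffices to show the following. If $\mathrm{cost}(\mathcal D)$ is below the right side of \Cref{eq:lowerbound}, then the product in \Cref{eq:design-approximation} lies within $2\epsilon$ of a point of a family of size $2^{B_W}\cdot 2^{o(N^2\log(1/\epsilon))}$.

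The core step is a compression lemma for the evolutions. Write $X=\mathrm{cost}(\mathcal D)$ and split the generators into two groups.

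For factors with $\beta_k\ge1$, we use $\norm{H_k}\le\beta_k$. This holds because block-sparse matrices satisfy $\norm{H_k}\le\sqrt{m_k}\,c_k$ by a Schur-test or Cauchy--Schwarz argument over the at most $m_k$ blocks per row and column. The key-step total $\sum_k\beta_k$ is at most $X/b^2$. Each such $H_k$ is specified by $n_k b^2$ complex entries, rounded to precision $\eta_k$. We allocate rounding errors with $\sum_k\norm{e^{iH_k}-e^{i\widetilde H_k}}\le\epsilon$ using $\norm{e^{iA}-e^{iB}}\le\norm{A-B}$. The bit cost is about $\sum_k n_kb^2\log(\cdot)$ plus the support descriptions $n_k\log D^2$. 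The constraint $b\sqrt{n_k}\le X$ gives $n_k b^2\le X^2$ per factor. Combined with $\sum_k\max\{1,\beta_k\}\le X/b^2$, this yields roughly $n_kb^2\le X\cdot b\sqrt{n_k}$, and hence a total of $\Ot(X\cdot\max_k b\sqrt{n_k})\le\Ot(X^2)$ bits.

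For factors with $\beta_k<1$, the naive bound is too weak. Instead, we group consecutive small factors into a product whose total generator weight is $O(1)$. Each group is approximated to first order (a Taylor or Magnus truncation) by $I+i\sum H_k$, and the interleavers are absorbed as fixed known unitaries.

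Matching the entropy $N^2\log(1/\epsilon)$ against $\Ot(X^2)$ would give $X\gtrsim N$. That is the trivial bound, so the real argument needs the finer two-term form.

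For the refinement we use flatness. Since $U$ is Haar-random, its dense description has Frobenius mass spread out. By an argument like the one following \Cref{lem:flattening}, each generator can only contribute $\sum_k\norm{H_k}_F^2\le b\,n_k c_k^2$-type amounts of "new" directions. Tracking $\epsilon$, this forces $\sum_k\beta_k\gtrsim\sqrt{\epsilon D}$; this is the analogue of the $\sqrt D$ subnormalization lower bound of \Cref{app:normalization}. We then argue in two pieces:
\begin{itemize}
\item The processing term gives $X\ge b^2\sum_k\max\{1,\beta_k\}\gtrsim\epsilon\sqrt D\,b^2$.
\item The lookup term follows from entropy. The total number of encoded block entries must reach about $\epsilon N^2$ after discounting $B_W$. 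Cauchy--Schwarz then gives $\sum_k\max\{1,\beta_k\}\,b\sqrt{n_k}\ge\bigl(\sum\beta_k\bigr)^{1/2}\bigl(\sum n_kb^2\bigr)^{1/2}\gtrsim(\epsilon D)^{1/4}\sqrt{\epsilon}N$, up to constants and the claimed $\sqrt\epsilon\,ND^{1/4}$.
\end{itemize}

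The final minimization over $b=2^j$ balances $N^{5/4}b^{-1/4}$ against $N^{1/2}b^{3/2}$ at $b\approx N^{3/7}$, giving $\Omt(\epsilon^{4/7}N^{8/7})$.

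The main obstacle is the lower bound $\sum_k\beta_k\gtrsim\sqrt{\epsilon D}$ that holds uniformly over all designs. The interleavers could in principle rotate bases so as to make blocks look "large" cheaply. We would handle this by including the interleavers in the $2^{B_W}$-element net. Conditional on them, the generators must then encode a residual of Frobenius distance about $\sqrt{\epsilon}N$ from a fixed family. Since $\norm{(H_k)}_F^2\le n_kb\,c_k^2$ and generic residuals are flat, this yields the needed weight. Making this rigorous, presumably via a packing argument in Frobenius norm rather than operator norm, is where most of the work lies.
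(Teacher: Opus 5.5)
Your proposal has a genuine gap at the step you yourself flag as the main obstacle, and the route you sketch cannot close it.

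\emph{Frobenius mass does not force large $\beta_k$.} A block-diagonal generator has $m_k=1$ and $\beta_k\le\norm{H_k}\le\pi$. Yet $e^{iH_k}$ can be any block-diagonal unitary, and it moves the product by $\Theta(\sqrt N)$ in Frobenius norm. That is already the largest possible distance between unitaries: since $\norm{U-V}_F\le2\sqrt N$, a residual ``of Frobenius distance about $\sqrt\epsilon N$'' does not exist once $\epsilon>4/N$. Telescoping with $\norm{e^{iH_k}-I}_F\le\norm{H_k}_F\le\sqrt{bn_k}\,c_k\le\sqrt N\beta_k$ can therefore certify only $\sum_k\beta_k=\Omega(1)$. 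The factor $\sqrt{m_k}$ in $\beta_k$, which is where $\sqrt D$ must come from, can only be forced by information counting; flatness of $U$ plays no role.

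\emph{The lookup step also fails.} The inequality $\sum_k\max\{1,\beta_k\}\,b\sqrt{n_k}\ge(\sum_k\beta_k)^{1/2}(\sum_kn_kb^2)^{1/2}$ is Cauchy--Schwarz in the wrong direction. With $\beta_1=\beta_2=1$, $n_1=M$, $n_2=1$, the left side is about $b\sqrt M$ and the right side about $b\sqrt{2M}$. Even if granted, your inputs give $\epsilon^{3/4}ND^{1/4}$ rather than $\sqrt\epsilon\,ND^{1/4}$, a loss of $N^{3/4}$ at $\epsilon=N^{-3}$. More fundamentally, no argument based on directly stored blocks can beat the trivial bound. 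A single generator with $\beta_k\approx1$ and $n_k=D^2$ satisfies any count of the form $\sum_kn_k\gtrsim D^2/\log N$, at lookup charge only $bD=N$.

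\emph{The grouping of small factors fails too.} Truncating to $I+i\sum_kH_k$ at total weight $O(1)$ leaves $O(1)$ error rather than $O(\epsilon)$. Moving interleavers through the group also conjugates the generators out of block-sparse form.

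The missing idea is a compression whose length is governed by $\beta_k$ rather than $n_k$. It shows that the cheap dense generator above carries only $\Ot(D/\epsilon^2)$ blocks of information, far below $D^2$ at constant $\epsilon$. The paper does this in four steps.
\begin{enumerate}
\item It samples $t_k\approx3D(1+\beta_k^2/\epsilon_k^2)\log(2N/\delta)$ adjoint block pairs with probabilities proportional to $\norm{(H_k)_{IJ}}$. Matrix Bernstein then gives an unbiased Hermitian estimator within $\epsilon_k$, using $\max_I\sum_J\norm{(H_k)_{IJ}}\le\beta_k$ (\Cref{lem:sparsification}).
\item The estimators are independent and, after conditioning, nearly unbiased. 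So the product errors form a martingale and add in quadrature by matrix Azuma. The error budget can therefore be allocated as $\epsilon_k^2=\nu\beta_k/a$, with $\sum_{k\in\mathcal O}\beta_k\le a$ and $\nu\approx\epsilon^2/\log N$ (\Cref{lem:sparsification-composition}).
\item Each generator is stored either directly or by its samples, at $O(b^2\log N)$ bits per block. Comparing with Szarek's bound yields $\sum_{k\notin\mathcal O}n_k+\sum_{k\in\mathcal O}\min\{n_k,t_k\}\ge I_0\approx D^2/\log N$ for every valid allocation (\Cref{lem:compression-information}).
\item Both cost terms follow from the threshold allocation $\mathcal O(a)=\{k\ \text{big}:n_k>2C'D\beta_ka/\nu\}$ at its crossing point $a_*$, plus a four-case analysis. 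Balancing over $a_*$ gives $bI_0^{3/4}(\nu/D)^{1/4}\approx\sqrt\epsilon\,ND^{1/4}$ and $b^2\sqrt{\nu I_0/D}\approx\epsilon\sqrt D\,b^2$.
\end{enumerate}
Thus the effective bound $\sum_k\max\{1,\beta_k\}\gtrsim\epsilon\sqrt D$ is an output of this sparsification tradeoff, not an input from flatness. Its factor $\epsilon$ comes from the quadrature budget $\nu$.
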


\begin{proof}[Proof sketch]
Szarek's covering bound~\cite{szarek1982nets} shows that at least half
of Haar-random targets require $\Omega(N^2\log(1/\epsilon))$ bits to be
approximated within error $O(\epsilon)$ under a fixed decoding
rule. The interleaver budget leaves a fixed fraction of this
information to the generators.

We compress a low-cost design by storing each generator's blocks or a
shorter sampled approximation (\Cref{lem:sparsification}). Independent
sampling controls the product error through the sum of squared error
tolerances (\Cref{lem:sparsification-composition}). After rounding,
each stored block uses $O(b^2\log N)$ bits, and the total target error
remains $O(\epsilon)$. The information bound requires
$\Omega(D^2/\log N)$ stored blocks (\Cref{lem:compression-information}).
Balancing direct storage against block sampling, whose required sample
count grows with $\beta_k$, gives the lookup and block-processing
contributions to \Cref{eq:lowerbound}. Recording $b$ in the same code
covers all allowed block sizes. The full proof is in \Cref{app:optimality}.
\end{proof}

At fixed target error $\epsilon$,
\Cref{thm:optimality} matches, up to factors polynomial in $n$,
the model cost of the qDRIFT and QSVT construction in
\Cref{thm:constant}. The $j$th model generator is
$\tau_jH_{K_j}$, with $\beta_j\le\tau_j\alpha_{K_j}$.
The model cost is $\Ot(N^{8/7})$ at
$b=N^{3/7}$, up to power-of-two rounding.
\Cref{app:mixed} extends the exponent bound to mixed block sizes
and larger interleaver budgets.

\section{Discussion}
\label{sec:discussion}

We have constructed Clifford+$T$ circuits for arbitrary $n$-qubit
unitaries using $\Ot(N^{8/7})$ $T$ gates and $\Ot(N^{6/7})$ clean
ancillas at constant operator-norm error, where $N=2^n$.
The construction combines random similarity flattening, a Hermitian
block-sparse split, qDRIFT, and QSVT to reduce the lookup cost per
query while controlling the total simulation work.
Residual refinement gives $\Ot(N^{8/7}L^{3/7}+NL)$ $T$ gates and
$\Ot(N)$ clean ancillas for target error $\epsilon\in(0,1/2)$ with
$L=n+\log_2(1/\epsilon)\le N$.
Within the sequential block-SELECT cost model, the exponent $8/7$
is optimal at constant precision under its interleaver information
restriction.

For synthesis using qDRIFT and QSVT at constant precision, the
$T$-gate and ancilla bounds have product $\Ot(N^2)$,
matching Tan's space--$T$-count tradeoff
$(n+m_{\rm anc})T=\Omega(N^2)$~\cite{tan2025unitary} up to factors
polynomial in $n$. Compared with Tan's construction, we use fewer
$T$ gates but more ancillas. This moves the costs closer to the
balanced scale $T\approx m_{\rm anc}\approx N$, though our upper
bound remains above the general counting lower bound.

Oracle-based synthesis uses a different cost accounting.
OpenAI's recent draft~\cite{openai2026unitary} establishes unitary
synthesis at constant diamond-norm error using polynomially many
elementary gates, qubits, and queries to a suitably chosen Boolean
oracle. The target-dependent Boolean function has unrestricted
truth-table size and circuit complexity, and the polynomial resource
bound excludes the cost of implementing its oracle calls. This
distinction matters when comparing with the Clifford+$T$ constructions
of Tan~\cite{tan2025unitary}, Yuan et al.~\cite{yuan2026general}, and
this work. These constructions include the cost of quantum data
access and achieve $T$-counts $\Ot(N^{4/3})$, $\Ot(N^{5/4})$, and
$\Ot(N^{8/7})$, respectively. In the OpenAI theorem,
$p(n)$ bounds all resources, whereas $m(n)\le p(n)$ denotes the oracle
input length. A generic implementation uses
$O(2^{m(n)/2})$ $T$ gates and ancillas per query, yielding a total
$T$-count bounded by $2^{m(n)/2}\poly(n)$. A polynomial bound on
$m(n)$ alone therefore does not determine an exponential scaling
comparable with these synthesis bounds. For example,
$m(n)\le 2n+O(\log n)$ would suffice to reach $N\poly(n)$ through
generic oracle compilation, but the draft does not establish this
bound or a comparably efficient implementation of its particular
oracle. A quantitative comparison consequently requires accounting
for the oracle's implementation cost and ancillary workspace.

Comparisons at finite size likewise require compiled resource
estimates, because our bounds are asymptotic.
Liu et al.~\cite{liu2026logical} provide such estimates for state
preparation, comparing rotation- and sampling-based methods after
compilation in both $T$-count and total gate count. For our
construction, compiled estimates must account for its large lookup
tables and substantial workspace. When the data have structure,
compressed tables can reduce data-loading costs, as compact alias
tables do for real-amplitude state preparation with polynomial
amplitudes~\cite{liu2026compressed}.

At constant target error, a polynomial gap remains between our
$\Ot(N^{8/7})$ upper bound on $T$-count and the general counting
lower bound $\Omt(N)$. It remains open whether general
Clifford+$T$ circuits can achieve a smaller exponent and ultimately
reach the counting scale.
Our matching $N^{8/7}$ lower bound relies on a counting argument
within the sequential block-SELECT cost model under its interleaver
information restriction. A complementary question is whether a
worst-case lower bound of $\Omega(N^{1+c})$ $T$ gates, for some fixed
$c>0$, can be proved for general circuits with an unrestricted number
of clean ancillas. Workspace raises a further question at higher
precision, where residual refinement uses $O(nN)$ clean ancillas
throughout $L\le N$, independently of $\epsilon$. How much of its
$T$-count improvement can be retained with fewer ancillas?

Similar questions arise beyond qubits. Approximate single-qutrit
Clifford+$R$ synthesis has been studied~\cite{gustafson2025qutrit}.
For arbitrary unitaries on many qudits, the optimal non-Clifford gate
count at constant error remains open.

\begin{acknowledgments}
The authors thank Nathan Baker, Daan Camps, and Yu Tong for valuable
discussions. D.L. and W.A.d.J. were supported by the U.S. Department of
Energy, Office of Science, Office of Advanced Scientific Computing
Research through the Accelerated Research in Quantum Computing
program MACH-Q Project. S.Z., D.F. and C.Y. were supported by the U.S. Department of
Energy, Office of Science, Office of Advanced Scientific Computing
Research through Quantum Utility through Advanced Computational
Quantum Algorithms and the Quantum System Accelerator projects.

We acknowledge the use of GPT-5.5 Pro, GPT 5.6 Sol, GPT-6 Astra to search for algorithm ideas, explore candidate
proof strategies and to assist with initial manuscript preparation.
\end{acknowledgments}

\appendix
\crefalias{section}{appendix}
\crefalias{subsection}{appendix}

\section{Simulation and subnormalization bounds}
\label{app:tools}

\subsection{Single-realization qDRIFT}
\label{app:qdrift}

\begin{proof}[Proof of \Cref{lem:qdrift}]
We bound the bias of the average product and the fluctuations around
that average separately. Choose $N_q$ as the least positive integer satisfying
\begin{equation}
N_q\ge\max\Big\{\frac{44\lambda^2}{\epsilon^2}\log\frac{2d}\delta,\ \frac{2\lambda^2}{\epsilon}\Big\}.
\label{eq:qdrift-concentration-count}
\end{equation}
The two terms control fluctuations and bias, respectively.
Draw independent indices $K_1,\dots,K_{N_q}$ with the same
distribution and define the segments by
\begin{align*}
\Pr(K_j=k)&=\norm{A_k}/\lambda,\\
V_j&=\exp\big(i\tfrac{\lambda}{N_q}A_{K_j}/\norm{A_{K_j}}\big).
\end{align*}
Set $V=V_{N_q}\cdots V_1$. Since $\epsilon\le4\lambda$,
the first term in the threshold is bounded below by a positive
constant. Rounding up therefore preserves the stated asymptotic bound.

We apply the estimates of Ref.~\cite{chen2021concentration} with
$H=-A$ and $t=1$, so their evolution $e^{-itH}$ is $e^{iA}$.
Proposition~3.2 bounds the bias:
\[
\norm{e^{iA}-\E V}\le\lambda^2/N_q\le\epsilon/2.
\]
The last inequality is the second condition on $N_q$.

Proposition~3.3 bounds fluctuations about the average:
\[
\Pr\big(\norm{V-\E V}\ge\epsilon/2\big)
\le2d\,e^{-N_q\epsilon^2/(44\lambda^2)}\le\delta.
\]
Here the last inequality is the first condition on $N_q$. Outside
this exceptional event, the triangle inequality gives
$\norm{V-e^{iA}}\le\epsilon$.
\end{proof}

\subsection{Subnormalization factors for block encodings}
\label{app:normalization}

We consider an exact block encoding of
$A\in\C^{N\times N}$ constructed in three steps.
The data registers $\mathsf B$ and $\mathsf T$ hold the block label
and the position within a block. The ancillary registers
$\mathsf R$ and $\mathsf F$ hold an additional block label and a
flag qubit. We write the registers in the order
$\mathsf R,\mathsf B,\mathsf F,\mathsf T$ and start with
\[
\ket0_{\mathsf R}\ket J_{\mathsf B}
\ket0_{\mathsf F}\ket\psi_{\mathsf T},
\qquad \ket\psi\in\C^b.
\]
We first describe preparations without padding.
\begin{enumerate}
\item A preparation $P_w$, controlled on the input block label $J$,
  prepares
  \[
  P_w^{(J)}\ket0_{\mathsf R}
  =\sum_I w_{IJ}\ket I_{\mathsf R}.
  \]
  The resulting state is
  \[
  \sum_I w_{IJ}\ket I_{\mathsf R}\ket J_{\mathsf B}
  \ket0_{\mathsf F}\ket\psi_{\mathsf T}.
  \]
\item A $\SEL$ oracle applies $C_{IJ}$ to $\mathsf F,\mathsf T$
  on address $(I,J)$, giving
  \[
  \sum_I w_{IJ}\ket I_{\mathsf R}\ket J_{\mathsf B}
  C_{IJ}\bigl(\ket0_{\mathsf F}\ket\psi_{\mathsf T}\bigr).
  \]
  Swapping the two label registers then gives
  \[
  \sum_I w_{IJ}\ket J_{\mathsf R}\ket I_{\mathsf B}
  C_{IJ}\bigl(\ket0_{\mathsf F}\ket\psi_{\mathsf T}\bigr).
  \]
\item A second preparation $P_v$, controlled on the output block
  label $I$, is defined by
  \[
  P_v^{(I)}\ket0_{\mathsf R}
  =\sum_{J'}v_{IJ'}\ket{J'}_{\mathsf R}.
  \]
  Applying its inverse gives the full output state
  \begin{equation*}
  \begin{split}
  \ket{\mathrm{out}_J}
  =\sum_I w_{IJ}&\bigl(P_v^{(I)\dagger}\ket J_{\mathsf R}\bigr)
  \ket I_{\mathsf B}\\
  &{}\otimes C_{IJ}\bigl(\ket0_{\mathsf F}\ket\psi_{\mathsf T}\bigr).
  \end{split}
  \end{equation*}
  Its overlap with the zero state of $\mathsf R$ is determined by
  \[
  \bra0_{\mathsf R}P_v^{(I)\dagger}\ket J_{\mathsf R}
  =\overline{v_{IJ}}.
  \]
\end{enumerate}
Define the encoded block of $C_{IJ}$ by
\[
(C_{IJ})_{00}
=(\bra0_{\mathsf F}\otimes I_{\mathsf T})C_{IJ}
(\ket0_{\mathsf F}\otimes I_{\mathsf T}).
\]
The component of the output with both ancillary registers zero is
\begin{equation}
\begin{split}
&(\bra0_{\mathsf R}\otimes I_{\mathsf B}
  \otimes\bra0_{\mathsf F}\otimes I_{\mathsf T})
  \ket{\mathrm{out}_J}\\
&\qquad=\sum_I\overline{v_{IJ}}w_{IJ}\ket I_{\mathsf B}
  (C_{IJ})_{00}\ket\psi_{\mathsf T}.
\end{split}
\label{eq:prepared-block-output}
\end{equation}
For a block encoding of $A$ with subnormalization factor $\alpha$,
this component must equal
$\sum_I\ket I_{\mathsf B}(A_{IJ}/\alpha)\ket\psi_{\mathsf T}$.
Comparing the coefficients for each $I$ and every $\ket\psi$ gives
the block identity below.
\begingroup
The full preparation states are normalized. Padding may add
orthogonal branches whose selected unitaries have zero encoded blocks.
These branches contribute nothing to
\Cref{eq:prepared-block-output}, so the block identity remains valid,
while the squared amplitudes on physical addresses may sum to less
than one. Thus
\begin{equation}
\begin{gathered}
\frac{A_{IJ}}{\alpha}
=\overline{v_{IJ}}\,w_{IJ}\,(C_{IJ})_{00},\\
\sum_I\abs{w_{IJ}}^2\le1,
\qquad\sum_J\abs{v_{IJ}}^2\le1.
\end{gathered}
\label{eq:prepared-block-encoding}
\end{equation}
The first amplitude sum is for each fixed $J$ and the second for
each fixed $I$. Since $C_{IJ}$ is unitary,
$\norm{(C_{IJ})_{00}}\le1$.
\par\endgroup
The amplitudes may depend on $A$ in an arbitrary way. The construction
of Ref.~\cite{yuan2026general} and \Cref{lem:encoding} are of this form.
In the latter, $A=K$. A nonpadding slot address $(J,u)$ identifies
the block pair $(I,J)$: slot $u$ in column $J$ contains $K_{IJ}$.
Let $u'$ be the slot in column $I$ containing its adjoint
$K_{JI}=K_{IJ}^\dagger$. The address map $q(J,u)=(I,u')$ then gives
\[
\Pi\ket u_{\mathsf S}\ket J_{\mathsf B}
=\ket{u'}_{\mathsf S}\ket I_{\mathsf B},
\]
where $\mathsf S$ is the slot register. Thus $J\mapsto I$ routes
$K_{IJ}$ to output block $I$, while $u\mapsto u'$ supplies the
slot address used by the output unpreparation. This is the role
of the explicit label swap above. Uniform slot preparations give
$w_{IJ}=v_{IJ}=1/\sqrt m$ on nonzero blocks. Together with
$(C_{IJ})_{00}=K_{IJ}/g$, \Cref{eq:prepared-block-encoding} therefore
gives the encoded block $K_{IJ}/(mg)$.

\begin{lemma}[Lower bound on the subnormalization factor]
\label{lem:normalization}
An exact block encoding of $A$ satisfying
\Cref{eq:prepared-block-encoding}, with subnormalization factor $\alpha$, satisfies
$\alpha\ge\frac1D\sum_{I,J}\norm{A_{IJ}}$.
\end{lemma}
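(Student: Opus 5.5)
From the block identity in \Cref{eq:prepared-block-encoding}, I will bound each block norm by a product of amplitude magnitudes and then apply Cauchy--Schwarz together with the two normalization constraints. Since $(C_{IJ})_{00}$ is the top-left block of a unitary, $\norm{(C_{IJ})_{00}}\le1$. Taking operator norms in $A_{IJ}/\alpha=\overline{v_{IJ}}w_{IJ}(C_{IJ})_{00}$ then gives
\[
\norm{A_{IJ}}\le\alpha\,\abs{v_{IJ}}\,\abs{w_{IJ}}
\]
for every pair $(I,J)$. Blocks with $A_{IJ}=0$ satisfy this trivially, and padding branches do not affect it.

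Next I sum over all $I,J\in[D]$ and apply Cauchy--Schwarz to the $D^2$-term sum:
\[
\sum_{I,J}\abs{v_{IJ}}\abs{w_{IJ}}
\le\Bigl(\sum_{I,J}\abs{v_{IJ}}^2\Bigr)^{1/2}\Bigl(\sum_{I,J}\abs{w_{IJ}}^2\Bigr)^{1/2}.
\]
The constraint $\sum_I\abs{w_{IJ}}^2\le1$ holds for each fixed $J$, so summing over the $D$ values of $J$ gives $\sum_{I,J}\abs{w_{IJ}}^2\le D$. In the same way, the constraint for each fixed $I$ gives $\sum_{I,J}\abs{v_{IJ}}^2\le D$. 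The right-hand side is therefore at most $D$.

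Combining the two steps yields $\sum_{I,J}\norm{A_{IJ}}\le\alpha D$, which is the claimed bound $\alpha\ge\frac1D\sum_{I,J}\norm{A_{IJ}}$.

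The only point needing care is justifying $\norm{(C_{IJ})_{00}}\le1$ and confirming that the block identity holds exactly at every address. Both are already established in the preceding discussion: the first holds because the block is a compression of a unitary, and the second because the encoding is exact. With these in place, no other step is an obstacle.
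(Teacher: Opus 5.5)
Your proposal is correct and matches the paper's proof step for step. Both bound each block by $\norm{A_{IJ}}\le\alpha\abs{v_{IJ}}\abs{w_{IJ}}$ using $\norm{(C_{IJ})_{00}}\le1$, apply Cauchy--Schwarz over the $D^2$ pairs, and sum each per-label normalization constraint to bound each squared-amplitude total by $D$.
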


\begin{proof}
\begingroup
The block identity in \Cref{eq:prepared-block-encoding} gives
$\norm{A_{IJ}}\le\alpha\abs{v_{IJ}}\abs{w_{IJ}}$.
Summing the preparation-amplitude bounds over the controlled
labels bounds each total squared-amplitude sum by $D$.
Cauchy--Schwarz therefore gives
\begin{equation}
\begin{split}
\sum_{IJ}\norm{A_{IJ}}&\le\alpha\sum_{IJ}\abs{v_{IJ}}\abs{w_{IJ}}\\
&\le\alpha\Big(\sum_{IJ}\abs{v_{IJ}}^2\Big)^{1/2}\Big(\sum_{IJ}\abs{w_{IJ}}^2\Big)^{1/2}\\
&\le\alpha D .
\end{split}
\label{eq:normalization-cauchy-schwarz}
\end{equation}
\endgroup
\end{proof}

\paragraph*{Cost of the uniform dense block encoding.}
To justify the cost in \Cref{sec:overview}, we derive the
subnormalization factor of the block encoding of the transformed
target unitary $U'$ and the resulting number of
queries needed for oblivious amplitude amplification. We then apply
\Cref{lem:normalization} to Haar-random targets to show that the
$\sqrt D$ scale is necessary for block encodings of the form
\Cref{eq:prepared-block-encoding}.

Yuan et al.~\cite{yuan2026general} use random input and output bases
to obtain $\norm{U'_{IJ}}\le g=\Ot(1/\sqrt D)$ with high probability.
Embed each contraction $U'_{IJ}/g$ as the encoded block of a
$2b$-dimensional unitary $C_{IJ}$. Uniform preparation and
unpreparation, $w_{IJ}=v_{IJ}=1/\sqrt D$, each supply one
factor $1/\sqrt D$, giving
\begin{equation}
\frac{1}{\sqrt D}\,\frac{U'_{IJ}}{g}\,\frac{1}{\sqrt D}
=\frac{U'_{IJ}}{Dg}.
\label{eq:dense-block-scaling}
\end{equation}
Thus the full encoded matrix is $U'/\alpha$, where
$\alpha=Dg=\Ot(\sqrt D)$. The subnormalization factor
determines how many queries to this full block encoding of $U'$
are needed for oblivious amplitude amplification. For any normalized input, the
zero-ancilla component has squared norm $p=1/\alpha^2$ because $U'$
is unitary. Oblivious amplitude amplification, implemented through
QSVT~\cite{gilyen2019quantum}, uses the block encoding of $U'$, its inverse, and zero-ancilla phase
shifts; $O(1/\sqrt p)=O(\alpha)=\Ot(\sqrt{N/b})$ queries suffice at
fixed accuracy. Selecting among $D^2$ unitaries of dimension $2b$
costs $\Ot(N+b^2)$ $T$ gates per query by \Cref{lem:tan}, giving
\Cref{eq:dense-encoding-cost}.

For a Haar-random $U$ and $b\ge C\sqrt{\log N}$, concentration of measure on
$U(N)$~\cite{meckes2019random} shows that, with high probability, every
block satisfies $\norm{U_{IJ}}_F\ge b/(2\sqrt N)$. Comparing the
Frobenius and operator norms gives
\[
\norm{U_{IJ}}\ge\norm{U_{IJ}}_F/\sqrt b\ge\frac12\sqrt{b/N}.
\]
There are $D^2$ blocks, so $\sum_{IJ}\norm{U_{IJ}}\ge D^{3/2}/2$.
\Cref{lem:normalization} then gives $\alpha\ge\sqrt D/2$.
This gives the scale of the subnormalization factor used in \Cref{sec:overview}.

For any additive split $H'=\sum_kH_k$ whose pieces $H_k$ are
block encoded in this way,
apply \Cref{lem:normalization} to each piece and then use the triangle
inequality: $\sum_k\alpha_k\ge\sum_{IJ}\norm{H'_{IJ}}/D$. Splitting
reduces the data addressed per query, and the sum of subnormalization factors obeys
the same lower bound.

\section{Matrix preprocessing}
\label{app:preprocessing}

\subsection{Similarity flattening}
\label{app:flattening}

We bound the diagonal contribution by Hanson--Wright and the
off-diagonal contribution by decoupling and matrix Rademacher bounds.
The required concentration inequalities are as follows.

\begin{fact}[Hanson--Wright~\cite{rudelson2013hanson}]
\label{fact:hw}
Let $\sigma$ be uniform on $\{\pm1\}^N$ and $A\in\R^{N\times N}$. For all $t\ge0$,
$\Pr(\abs{\sigma^TA\sigma-\tr A}>t)\le2\exp\big(-c\min\{t^2/\norm A_F^2,\,t/\norm A\}\big)$.
\end{fact}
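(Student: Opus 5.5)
Since $\sigma_i^2=1$, the diagonal terms of the quadratic form are deterministic. So I would first write $\sigma^TA\sigma-\tr A=\sum_{i\ne j}a_{ij}\sigma_i\sigma_j$, and it suffices to prove a two-sided tail bound for this off-diagonal chaos with $A$ replaced by its off-diagonal part $A_{\rm off}$. This replacement is harmless because $\norm{A_{\rm off}}_F\le\norm A_F$ and $\norm{A_{\rm off}}\le2\norm A$; the latter holds since the diagonal part has norm at most $\max_i\abs{a_{ii}}\le\norm A$. The plan is the standard moment-generating-function route. Decouple, reduce everything to Gaussian vectors by MGF comparison, compute a Gaussian chi-square MGF exactly, and finish with a Chernoff bound.

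\emph{Step 1 (decoupling).} Let $\sigma'$ be an independent copy of $\sigma$. For every convex $\Phi$, in particular $\Phi(x)=e^{\lambda x}$, the decoupling inequality (Bourgain; de la Pe\~na--Montgomery-Smith) gives
\[
\E\,\Phi\Big(\sum_{i\ne j}a_{ij}\sigma_i\sigma_j\Big)\le\E\,\Phi\big(4\,\sigma^TA_{\rm off}\sigma'\big).
\]
I would prove it directly with random selectors. Take $\delta_i\in\{0,1\}$ i.i.d.\ Bernoulli$(1/2)$ and write $\sum_{i\ne j}a_{ij}\sigma_i\sigma_j=4\,\E_\delta\sum_{i,j}\delta_i(1-\delta_j)a_{ij}\sigma_i\sigma_j$. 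Apply Jensen in $\delta$, fix a realization $\delta$, and note that $(\sigma_i)_{\delta_i=1}$ and $(\sigma_j)_{\delta_j=0}$ are independent. One may therefore replace the second family by the corresponding coordinates of $\sigma'$. Finally reinsert the missing cross terms by Jensen in the conditional mean-zero variables.

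\emph{Step 2 (Gaussian comparison and MGF).} Let $M=A_{\rm off}$. Conditional on $\sigma'$, the quantity $\sigma^TM\sigma'$ is a Rademacher linear form. Since $\cosh x\le e^{x^2/2}$,
\[
\E_\sigma e^{\lambda\sigma^TM\sigma'}\le e^{\lambda^2\norm{M\sigma'}^2/2}=\E_g e^{\lambda g^TM\sigma'},
\]
where $g$ is standard Gaussian. Repeating this for $\sigma'$ with $g$ held fixed replaces $\sigma'$ by an independent Gaussian $g'$. Conditioning on $g'$ then gives $\E e^{\lambda g^TMg'}=\E\exp(\lambda^2\norm{Mg'}^2/2)$. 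Let $s_i$ be the singular values of $M$. By rotation invariance, $\norm{Mg'}^2\overset{d}{=}\sum_i s_i^2\gamma_i^2$ with $\gamma_i$ i.i.d.\ $N(0,1)$. Hence
\[
\E\exp\big(\mu\norm{Mg'}^2\big)=\prod_i(1-2\mu s_i^2)^{-1/2}\le\exp\big(2\mu\norm M_F^2\big)\quad\text{whenever }\mu\le1/(4\norm M^2),
\]
using $-\tfrac12\log(1-x)\le x$ for $x\le1/2$. Combining with Step 1 yields
\[
\E e^{\lambda(\sigma^TA\sigma-\tr A)}\le\exp\big(C\lambda^2\norm A_F^2\big)\qquad\text{for }\abs\lambda\le c'/\norm A.
\]

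\emph{Step 3 (Chernoff).} Markov's inequality gives $\Pr(\sigma^TA\sigma-\tr A>t)\le\exp(-\lambda t+C\lambda^2\norm A_F^2)$. Choose $\lambda=\min\{t/(2C\norm A_F^2),\,c'/\norm A\}$. In the first regime the exponent is $-t^2/(4C\norm A_F^2)$. In the second, $C\lambda^2\norm A_F^2\le\lambda t/2$, so the exponent is at most $-c't/(2\norm A)$. Applying the same bound to $-A$ and taking a union bound gives the factor $2$ and the stated $\min\{t^2/\norm A_F^2,\,t/\norm A\}$ form.

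The main obstacle is Step 1. The Gaussian computation is routine, but the decoupling must be carried out at the level of the MGF, for a general convex function, and not merely for tail probabilities. One also has to track that the constant $4$ only rescales $\lambda$, and hence only changes the absolute constant $c$.
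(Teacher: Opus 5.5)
Your proposal is correct, and it is essentially the standard Rudelson--Vershynin argument: decouple at the level of the moment generating function, compare the Rademacher chaos to a Gaussian chaos by conditioning, bound the chi-square MGF for $\abs\lambda\lesssim1/\norm A$, and finish with a Chernoff bound. It simplifies because $\sigma_i^2=1$ makes the diagonal part deterministic. The paper gives no proof of its own and simply cites Ref.~[rudelson2013hanson], whose proof follows this route.
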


\begin{fact}[Matrix Rademacher series~\cite{tropp2015introduction}]
\label{fact:rademacher}
Let $B_y\in\C^{d_1\times d_2}$ be fixed and $\tau_y$ independent Rademacher
signs. Put
$v=\max\{\norm{\sum_yB_yB_y^\dagger},\norm{\sum_yB_y^\dagger B_y}\}$. Then
$\Pr(\norm{\sum_y\tau_yB_y}\ge t)\le(d_1+d_2)e^{-t^2/(2v)}$.
\end{fact}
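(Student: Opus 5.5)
The plan is to use the matrix Laplace-transform method of Ref.~\cite{tropp2015introduction}, after first reducing the rectangular sum to a Hermitian one by dilation. For each $y$ define the Hermitian dilation
\[
Y_y=\begin{pmatrix}0&B_y\\ B_y^\dagger&0\end{pmatrix}\in\C^{(d_1+d_2)\times(d_1+d_2)},
\qquad
Y_y^2=\begin{pmatrix}B_yB_y^\dagger&0\\0&B_y^\dagger B_y\end{pmatrix}.
\]
The dilation is linear, so $Z:=\sum_y\tau_yY_y$ is the dilation of $\sum_y\tau_yB_y$. The spectrum of a dilation is symmetric about zero, and its largest eigenvalue equals the operator norm of the off-diagonal block. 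Hence
\[
\Pr\Big(\Big\|\sum_y\tau_yB_y\Big\|\ge t\Big)=\Pr\big(\lambda_{\max}(Z)\ge t\big).
\]
Because $\sum_yY_y^2$ is block diagonal, $\norm{\sum_yY_y^2}$ is exactly the variance parameter $v$ in the statement. This step also accounts for the dimensional factor $d_1+d_2$.

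Next, for each $\theta>0$, apply Markov's inequality to the trace exponential:
\[
\Pr\big(\lambda_{\max}(Z)\ge t\big)\le e^{-\theta t}\,\E\tr e^{\theta Z}.
\]
To bound $\E\tr e^{\theta Z}$, I would use the subadditivity of matrix cumulant generating functions, which follows from Lieb's concavity theorem:
\[
\E\tr\exp\Big(\sum_y\theta\tau_yY_y\Big)\le\tr\exp\Big(\sum_y\log\E e^{\theta\tau_yY_y}\Big).
\]
For a single Rademacher sign, $\E e^{\theta\tau_yY_y}=\cosh(\theta Y_y)$. The scalar inequality $\cosh x\le e^{x^2/2}$, applied through the spectral decomposition of $Y_y$, gives $\cosh(\theta Y_y)\preceq e^{\theta^2Y_y^2/2}$. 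The matrix logarithm is operator monotone, so
\[
\log\E e^{\theta\tau_yY_y}\preceq\tfrac{\theta^2}{2}Y_y^2 .
\]
Monotonicity of $\tr\exp$ under the semidefinite order then yields
\[
\E\tr e^{\theta Z}\le\tr\exp\Big(\tfrac{\theta^2}{2}\sum_yY_y^2\Big)\le(d_1+d_2)\,e^{\theta^2v/2}.
\]

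Finally, combine the two displays and choose $\theta=t/v$. This gives $(d_1+d_2)e^{-t^2/(2v)}$, which is the claimed bound.

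The main obstacle is the non-commutativity in the second step. A naive factorization of $\E e^{\theta Z}$ into a product of individual exponentials fails, and this is where Lieb's theorem is essential. A Golden--Thompson iteration would avoid Lieb's theorem, but it produces $\sum_y\norm{Y_y}^2$ in place of $\norm{\sum_yY_y^2}$, which is too weak for the way $v$ is used later. So I would rely on Lieb's concavity and cite it from Ref.~\cite{tropp2015introduction}. The remaining steps are routine.
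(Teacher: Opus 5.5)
Your proof is correct, and it is essentially the standard proof of Theorem~4.1.1 in Ref.~\cite{tropp2015introduction}, which the paper cites without reproducing: Hermitian dilation, the Laplace-transform bound, Lieb-based subadditivity of matrix cumulant generating functions, the Rademacher estimate $\cosh(\theta Y_y)\preceq e^{\theta^2Y_y^2/2}$, and the choice $\theta=t/v$. The only omission is minor: that choice of $\theta$ requires $t,v>0$, and the degenerate cases $t=0$ or $v=0$ (where all $B_y=0$) are immediate.
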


\begin{fact}[Decoupling~\cite{delapena1995decoupling,delapena1999decoupling}]
\label{fact:decoupling}
Let $\tau$ be a vector of independent random variables and $\tau'$ an
independent copy of it. Let $f_{yz}$ be measurable maps into a separable
Banach space. There is an absolute constant $C_{\rm dec}$ such that for all $t>0$,
$\Pr\big(\norm{\sum_{y\ne z}f_{yz}(\tau_y,\tau_z)}>t\big)\le C_{\rm dec}\Pr\big(C_{\rm dec}\norm{\sum_{y\ne z}f_{yz}(\tau_y,\tau'_z)}>t\big)$.
\end{fact}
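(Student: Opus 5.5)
This is a cited result, and in the paper we would simply import it from de~la~Pe\~na and Montgomery-Smith. If we had to reprove it, we would follow their route: a random ``swapping'' construction that makes the coupled and decoupled chaoses appear as pieces of one Rademacher chaos, followed by a tail comparison between a Banach-valued chaos and its individual homogeneous parts.

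\textbf{Step 1 (swapping and Rademacher expansion).} Let $\varepsilon_y$ be independent Rademacher signs, independent of $(\tau,\tau')$. For each $y$, set $(X_y,X'_y)=(\tau_y,\tau'_y)$ if $\varepsilon_y=1$ and $(X_y,X'_y)=(\tau'_y,\tau_y)$ otherwise. Then $(X,X')$ has the same joint law as $(\tau,\tau')$. Hence $Z:=\sum_{y\ne z}f_{yz}(X_y,X'_z)$ has the same law as the decoupled sum $\sum_{y\ne z}f_{yz}(\tau_y,\tau'_z)$. Condition on $(\tau,\tau')$. Using $\mathbf 1\{\varepsilon_y=\pm1\}=(1\pm\varepsilon_y)/2$ and $y\ne z$, expand $Z$ as a Rademacher chaos of degree at most two:
\[
Z=A_0+\sum_y\varepsilon_yA_y+\sum_{y\ne z}\varepsilon_y\varepsilon_zA_{yz}.
\]
The coefficients are explicit combinations of the four sums built from $f_{yz}(\tau_y,\tau_z)$, $f_{yz}(\tau_y,\tau'_z)$, $f_{yz}(\tau'_y,\tau_z)$ and $f_{yz}(\tau'_y,\tau'_z)$. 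In particular, $4A_0$ is the sum of all four full sums.

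\textbf{Step 2 (isolating the coupled sum).} Apply the same expansion with the roles of $\tau$ and $\tau'$ arranged so that the target sum $S=\sum f_{yz}(\tau_y,\tau_z)$ is isolated. One way is to compare $Z$ with the analogous chaos built from $(\tau,\tau)$ and $(\tau',\tau')$. Another is to use a second independent copy and exploit that $S$ and $S'=\sum f_{yz}(\tau'_y,\tau'_z)$ are identically distributed. Either way, $S$ is bounded in tail by a constant number of terms, each distributed either as a decoupled sum or as a coefficient of a degree-$\le2$ chaos whose total is distributed as a decoupled sum. A final union bound over these constantly many terms then gives the stated inequality with some absolute $C_{\rm dec}$.

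\textbf{Step 3 (coefficient extraction, the main obstacle).} The key tail comparison is that for a Banach-valued chaos $Y=B_0+\sum\varepsilon_yB_y+\sum\varepsilon_y\varepsilon_zB_{yz}$ of degree at most two,
\[
\norm{B_0}\le C\,\text{(any fixed quantile of }\norm Y\text{)}.
\]
More precisely, $\Pr_\varepsilon(\norm Y\ge\norm{B_0}/C)\ge c$ for absolute constants $c,C$. This is where the argument has to pass from a statement about expectations to one about tails. We would prove it by Paley--Zygmund applied to $\norm Y$, using two ingredients. First, $\E\norm Y\ge\norm{\E Y}=\norm{B_0}$, by Jensen's inequality. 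Second, the hypercontractive moment comparison $(\E\norm Y^4)^{1/4}\le C'(\E\norm Y^2)^{1/2}$ for degree-two Rademacher chaos, due to Bonami and Borell, which holds in any normed space. Applying this comparison conditionally on $(\tau,\tau')$ and then integrating, we get $\Pr(\norm{A_0}>t)\le c^{-1}\Pr(C\norm Z>t)$. Homogeneous coefficients are extracted in the same way, after multiplying $Y$ by suitable sign patterns, or by symmetrization.

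We expect the bookkeeping in Step~2 to be routine. The delicate points are Step~3's passage from moments to tails, and making sure that the constants stay absolute, independent of the index set and of the Banach space. Separability is needed only for measurability of the norms.
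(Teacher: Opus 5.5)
Importing the result is exactly what the paper does: it cites de la Pe\~na--Montgomery-Smith and Theorem~3.4.1 of de la Pe\~na--Gin\'e and gives no proof. Your reproof follows the right skeleton, and Steps~1 and~3 are sound. The gap is Step~2, which you call routine but which contains the one non-obvious idea.

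Applied to the swap chaos, Step~3 controls $A_0=\frac14(S+S'+D+D^*)$. Here $D=\sum_{y\ne z}f_{yz}(\tau_y,\tau'_z)$ and $D^*=\sum_{y\ne z}f_{yz}(\tau'_y,\tau_z)$ both have the decoupled law. After a union bound you therefore control the tail of $S+S'$, not of $S$. Neither of your suggestions separates the two:
\begin{itemize}
\item A chaos whose two arguments come from the same randomly selected copy has the same $\varepsilon$-mean $A_0$, but it is distributed like $S$. It bounds $A_0$ by $S$, which is the wrong direction.
\item Equality in law of $S$ and $S'$ is not enough on its own: if $S'=-S$, the sum vanishes identically. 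You also need independence and a specific inequality.
\end{itemize}

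The missing step is a three-copy inequality. If $X,Y,W$ are i.i.d.\ Banach-valued, then $2X=(X+Y)+(X+W)-(Y+W)$. If $\norm X>t$, one of the three terms exceeds $2t/3$ in norm, and each has the law of $X+Y$. Hence
\[
\Pr(\norm X>t)\le3\Pr\bigl(\norm{X+Y}>2t/3\bigr).
\]
Since $S$ and $S'$ are built from the independent copies $\tau$ and $\tau'$, this gives $\Pr(\norm S>t)\le3\Pr(\norm{S+S'}>2t/3)$. Writing $S+S'=4A_0-D-D^*$, then applying Step~3 to $A_0$ and a union bound, finishes the proof with an absolute $C_{\rm dec}$. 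Only the constant coefficient $A_0$ is ever needed, so you need not extract other chaos coefficients.

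Step~3 is correct but heavier than necessary. For nonhomogeneous vector-valued chaos, the $L^4$--$L^2$ comparison needs an extra argument beyond homogeneous hypercontractivity, such as boundedness of homogeneous projections at fixed degree. You can avoid this. Take a real-linear norming functional $\phi$ with $\phi(B_0)=\norm{B_0}$ and $\abs{\phi(x)}\le\norm x$. Then $\norm Y\ge\norm{B_0}+\phi(Y-B_0)$. The term $\phi(Y-B_0)$ is a mean-zero real chaos of degree at most two. By Paley--Zygmund and scalar hypercontractivity it is nonnegative with probability at least an absolute $c$. This gives $\Pr_\varepsilon(\norm Y\ge\norm{B_0})\ge c$ directly.
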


\Cref{fact:rademacher} is Theorem~4.1.1 of
Ref.~\cite{tropp2015introduction}; for \Cref{fact:decoupling}, see
also Theorem~3.4.1 of Ref.~\cite{delapena1999decoupling}.

\begin{proof}[Proof of \Cref{lem:flattening}]
$P$ is real orthogonal, so $e^{iPHP^T}=Pe^{iH}P^T$.
The applications of $F=\mathsf H^{\otimes n}$
use only Clifford gates, and the two sign layers cost $O(\sqrt N)$
$T$ gates by \Cref{lem:boolean}, for either $P$ or $P^T$.
For the norm bound, the case $H=0$ is straightforward. Otherwise, assume $\norm H=1$ without loss of generality, as we can rescale the matrix $H$. Put $G=FHF$ and
$K=FS_\sigma GS_\sigma F$, so $PHP^T=FS_\tau KS_\tau F$.
We separate its diagonal part $\diag(K)$ from its remaining
entries; the second sign layer leaves $\diag(K)$ unchanged.

\emph{Diagonal part.} Each $K_{yy}$ has the distribution of
$N^{-1}\sigma^T\mathrm{Re}(G)\sigma$,
since every entry of $F$ is $\pm N^{-1/2}$.
The real symmetric matrix $\mathrm{Re}(G)$ has
trace zero, operator norm at most $1$, and Frobenius norm at most
$\sqrt N$. Set $\ell=\log(N/\delta)$. By \Cref{fact:hw} and
$\norm{\mathrm{Re}(G)}_F^2\le N$, with failure probability at most $\delta/2$,
\begin{equation}
\norm{\diag(K)}
\le C\left(\sqrt{\frac{\ell}{N}}+\frac{\ell}{N}\right)
\le \frac{2C\ell}{\sqrt N}.
\label{eq:diagonal-tail}
\end{equation}
Every block of $F\diag(K)F$ has norm at most this bound.

\emph{Off-diagonal part.} Condition on $\sigma$ and set
$A=K-\diag(K)$. Then $\norm A\le2$ and
$\sum_z\abs{A_{yz}}^2\le1$ for every $y$.
Let $R_I$ contain the $b$ rows of $F$ in block $I$, with columns
$r_y^I$ for $y\in[N]$. Thus $R_IR_I^T=I_b$ and $\norm{r_y^I}^2=b/N$.
Fix $I,J$ and write $X=R_IS_\tau AS_\tau R_J^T$.
To use \Cref{fact:decoupling}, introduce an independent copy
$\tau'$ and write the decoupled block as
\[
V=AS_{\tau'}R_J^T,\qquad
Y=R_IS_\tau V=\sum_y\tau_y B_y,
\]
where $v_y^\dagger$ is the $y$-th row of $V$ and
$B_y=r_y^I v_y^\dagger$.

Put $\eta=\delta/(2C_{\rm dec}D^2)$. Each $v_y$ is a vector
Rademacher series with variance parameter at most
$(b/N)\sum_z\abs{A_{yz}}^2\le b/N$.
By \Cref{fact:rademacher} and a union bound over $y$,
$\max_y\norm{v_y}^2\le C(b/N)\ell$ except with probability
$\eta/2$; here $\log(N(b+1)/\eta)=O(\ell)$.
For any such $\tau'$, the conditional matrix variance is controlled by
\begin{equation}
\begin{aligned}
\sum_y B_yB_y^\dagger
&\preceq \max_y\norm{v_y}^2\,I_b,\\
\sum_y B_y^\dagger B_y
&=\frac bN V^\dagger V\preceq\frac{4b}{N}I_b,
\end{aligned}
\label{eq:decoupled-variance}
\end{equation}
where the last inequality uses $\norm V\le\norm A\le2$.
A second application of \Cref{fact:rademacher}, now to $\tau$,
gives $\norm Y\le C\sqrt{b/N}\,\ell$ except with an additional
probability $\eta/2$. Including the bad choices of $\tau'$, the
failure probability is at most $\eta$.

Decoupling therefore gives
$\Pr(\norm X>C\sqrt{b/N}\,\ell)\le C_{\rm dec}\eta$,
absorbing its constant factor into $C$. A union bound over
the $D^2$ block pairs yields, with failure probability at most $\delta/2$,
\begin{equation}
\max_{I,J}\norm{(FS_\tau AS_\tau F)_{IJ}}
\le C\sqrt{\frac bN}\,\ell .
\label{eq:offdiagonal-bound}
\end{equation}
This estimate holds for every $\sigma$. Combining the two parts,
using $b\ge1$, and restoring $\norm H$ proves \Cref{eq:flatness}
with total failure probability at most $\delta$.
\end{proof}

\subsection{Random block-sparse split}
\label{app:split}

\begin{proof}[Proof of \Cref{lem:split}]
Let $B_p$ retain one adjoint block pair of $H'$ (one block for a
diagonal pair), with zero elsewhere. Then $H'=\sum_pB_p$,
each $B_p$ is Hermitian, and $\norm{B_p}\le g$.
For a fixed class $k$,
$H_k-H'/s=\sum_p(\xi_p-1/s)B_p$, where the assignment indicators
$\xi_p$ are independent Bernoulli$(1/s)$ variables.
Each summand has norm at most $g$, and
\[
\left \| \sum_p\E(\xi_p-1/s)^2B_p^2 \right \|
\le\frac{1}{s}\left \|\bigoplus_I(H'^2)_{II}\right \|
\le\frac{\norm{H'}^2}{s}.
\]
Here $\sum_pB_p^2=\bigoplus_I(H'^2)_{II}$, whose blocks are
compressions of $H'^2$.

Recall that the matrix Bernstein inequality~\cite{tropp2012user}
applies to independent, mean-zero Hermitian matrices
$Z_p\in\C^{N\times N}$ satisfying $\norm{Z_p}\le R$ almost surely.
With variance parameter $v=\norm{\sum_p\E Z_p^2}$, it gives, for
every $t>0$,
\[
\Pr\!\left(\left\|\sum_p Z_p\right\|>t\right)
\le 2N\exp\!\left(-\frac{t^2}{2(v+Rt/3)}\right).
\]
For $u>0$, the choice $t=\sqrt{2vu}+\tfrac23Ru$ satisfies
$t^2\ge2u(v+Rt/3)$, and hence
\[
\Pr\!\left(\left\|\sum_p Z_p\right\|>
\sqrt{2vu}+\tfrac23Ru\right)\le2Ne^{-u}.
\]
We apply this bound with $Z_p=(\xi_p-1/s)B_p$.
For the fixed matrix $H'$ and class $k$, these matrices are
independent and Hermitian, with $\E Z_p=0$ and
$\norm{Z_p}\le\norm{B_p}\le g$.
The variance estimate above allows us to use $R=g$ and
$v\le\norm{H'}^2/s$. Taking $u=\ell$ and using
$\sum_p Z_p=H_k-H'/s$ therefore yields
\[
\Pr\!\left(\norm{H_k-H'/s}>
\norm{H'}\sqrt{2\ell/s}+\tfrac23g\ell\right)\le2Ne^{-\ell}.
\]
The triangle inequality yields \Cref{eq:split-piece-norm}.

The $D$ potential blocks in a fixed row belong to distinct pairs,
so its occupancy is at most a Binomial$(D,1/s)$ variable.
The scalar Chernoff bound gives occupancy at most
$m_0=D/s+\sqrt{2D\ell/s}+\ell$ except with probability $e^{-\ell}$.
Hermiticity gives the same bound for columns. A union bound over
$s$ norm events and $sD$ row events costs at most
$(2Ns+sD)e^{-\ell}\le\delta$.

Finally, $\ell\ge1$ and the assumed bound on $g$ imply
$\norm{H'}/s,g\ell\le\norm{H'}\sqrt{\ell/s}$.
Thus \Cref{eq:split-piece-norm} is at most
$(1+\sqrt2+2/3)\norm{H'}\sqrt{\ell/s}<4\norm{H'}\sqrt{\ell/s}$.
\end{proof}

\section{Sparse and rounded block encodings}
\label{app:encoding}

We first construct a block encoding of a block-sparse Hermitian
matrix $K$, then an exact block encoding of its rounded approximation
$\widetilde K$, including inverse and controlled queries.

\subsection{Block encoding of $K$ and controlled queries}
\label{app:sparse-encoding}

\begin{proof}[Proof of \Cref{lem:encoding}]
\emph{Construction.} \begingroup
We trace the circuit in
\Cref{fig:sparse-encoding} to verify that it is a block encoding of
$K$ with subnormalization factor $mg$. The data register $\mathsf B$ holds one of the $D=N/b$
block labels and contains $\log_2D=\log_2(N/b)$ qubits. The data
register $\mathsf T$ holds the position within a $b\times b$ block
and contains $\log_2b$ qubits. Together, $\mathsf B,\mathsf T$
contain the $n=\log_2N$ data qubits. The ancillas requires post-selection are the $\log_2m$-qubit slot register $\mathsf S$
and the flag qubit $\mathsf F$. List each column's nonzero blocks
in $m$ slots, adding padding where needed, and let $\mathcal N(J)$
be the set of nonpadding slots in column $J$. For
$u\in\mathcal N(J)$, let $I=I(J,u)$ be the recorded row and set
$A_{J,u}=K_{IJ}/g$; for padding, set $A_{J,u}=0$.

Since $\norm{A_{J,u}}\le1$, the Halmos
dilation~\cite{halmos1950normal}
\[
V_{J,u}=
\begin{pmatrix}
A_{J,u} & \sqrt{I-A_{J,u}A_{J,u}^\dagger}\\[2pt]
\sqrt{I-A_{J,u}^\dagger A_{J,u}} & -A_{J,u}^\dagger
\end{pmatrix}
\]
is a unitary on $\mathsf F,\mathsf T$, with the block rows and
columns indexed by the flag. In particular,
\[
\begin{aligned}
V_{J,u}\ket0_{\mathsf F}\ket\psi_{\mathsf T}
={}&\ket0_{\mathsf F}A_{J,u}\ket\psi_{\mathsf T}\\
&+\ket1_{\mathsf F}
\sqrt{I-A_{J,u}^\dagger A_{J,u}}\ket\psi_{\mathsf T}.
\end{aligned}
\]
Thus the encoded block applies $K_{IJ}/g$ for a nonpadding
slot. For padding, $V_{J,u}=X_{\mathsf F}\otimes I_{\mathsf T}$,
so this component vanishes. These unitaries are constructed during
classical preprocessing, and $\SEL$ applies $V_{J,u}$ controlled
by $(J,u)$, leaving the address registers unchanged.

Let $\Pi$ be the permutation defined in
\Cref{lem:encoding}, which acts on the slot and block-label
registers $\mathsf S,\mathsf B$ according to the address map $q$.
For a nonpadding address $(J,u)$ recording row $I$, let $u'$
be the slot of column $I$ recording row $J$. Hermiticity gives
$K_{JI}=K_{IJ}^\dagger$, so this partner exists, and
\[
\Pi\ket u_{\mathsf S}\ket J_{\mathsf B}
=\ket{u'}_{\mathsf S}\ket I_{\mathsf B}.
\]
The partner maps back to $(J,u)$; padding and diagonal-block
addresses are fixed. Hence $q^2=\mathrm{id}$ and $\Pi$ is unitary.
Updating the slot together with the block label makes this routing
reversible. The registers $\mathsf F,\mathsf T$ are unchanged.
\endgroup
With $F_{\mathsf S}=\mathsf H^{\otimes\log_2m}$
acting on the slot register, set
\begin{equation}
U_K=F_{\mathsf S}\,\Pi\,\SEL\,F_{\mathsf S}.
\label{eq:sparse-encoding-circuit}
\end{equation}

\begingroup
To compute its encoded block, take an arbitrary input
$\ket J_{\mathsf B}\ket\psi_{\mathsf T}$ and initialize both
ancillary registers to zero. The first Hadamard layer prepares
\[
\frac1{\sqrt m}\sum_u
\ket u_{\mathsf S}\ket0_{\mathsf F}
\ket J_{\mathsf B}\ket\psi_{\mathsf T}.
\]
After $\SEL$, the component with flag $\mathsf F=0$ is
\[
\frac1{\sqrt m}\sum_{u\in\mathcal N(J)}
\ket u_{\mathsf S}\ket0_{\mathsf F}
\ket J_{\mathsf B}\,\frac{K_{IJ}}g\ket\psi_{\mathsf T},
\]
where $I=I(J,u)$ in each summand. The remaining component has
flag $\mathsf F=1$ and cannot contribute to the zero-ancilla
corner, since neither $\Pi$ nor the final Hadamard layer changes
the flag. Applying $\Pi$ routes the displayed component to
\[
\frac1{\sqrt m}\sum_{u\in\mathcal N(J)}
\ket{u'}_{\mathsf S}\ket0_{\mathsf F}
\ket I_{\mathsf B}\,\frac{K_{IJ}}g\ket\psi_{\mathsf T},
\]
where $u'$ is the partner slot associated with $(J,u)$.
For every output slot,
\[
\bra{0^{\log_2m}}_{\mathsf S}F_{\mathsf S}
\ket{u'}_{\mathsf S}=\frac1{\sqrt m}.
\]
Consequently, after the final Hadamard layer, the component with
both ancillary registers zero is
\[
\begin{aligned}
&\ket{0^{\log_2m}}_{\mathsf S}\ket0_{\mathsf F}\otimes
\frac1{mg}\sum_{u\in\mathcal N(J)}
\ket I_{\mathsf B} K_{IJ}\ket\psi_{\mathsf T}\\
&\qquad=\ket{0^{\log_2m}}_{\mathsf S}\ket0_{\mathsf F}\otimes
\frac K{mg}\bigl(\ket J_{\mathsf B}\ket\psi_{\mathsf T}\bigr).
\end{aligned}
\]
Here we used that each nonzero block in column $J$ occurs in
exactly one slot and that
$K=\sum_{I,J}\ket I_{\mathsf B}\!\bra J_{\mathsf B}\otimes K_{IJ}$.
The preparation and unpreparation each contribute
a factor $1/\sqrt m$. The final layer need not return the slot
register to zero with certainty; the calculation extracts its
zero-output component. By linearity,
\begin{equation}
\begin{aligned}
&\bigl(\bra{0^{\log_2m}}_{\mathsf S}\bra0_{\mathsf F}
\otimes I_{\mathsf B\mathsf T}\bigr)\\
&\quad\times U_K\bigl(\ket{0^{\log_2m}}_{\mathsf S}\ket0_{\mathsf F}
\otimes I_{\mathsf B\mathsf T}\bigr)=\frac K{mg}.
\end{aligned}
\label{eq:sparse-encoding-corner}
\end{equation}
Here $I_{\mathsf B\mathsf T}$ is the identity on the data registers.
This proves that $U_K$ is an exact $(mg,\log_2m+1,0)$-block
encoding of $K$.
\par\endgroup

\begingroup
\emph{Compilation.} We first bound the cost of $\SEL$ and the
address permutation $\Pi$, then explain how to implement the
controlled and inverse queries required by QSVT.
There are $Dm$ addresses $(J,u)$, one for each block column and
slot. For each address, the selected unitary $V_{J,u}$ acts on
the flag qubit $\mathsf F$ and the $b$-dimensional register
$\mathsf T$, so its dimension is $2b$. Substituting $M=Dm$ and
$d=2b$ into \Cref{lem:tan} gives
$O(b\sqrt{Dm\Lambda}+b^2\Lambda)$ $T$ gates and ancilla qubits,
where $\Lambda=\log_2(2b)+\log_2(1/\eta)$. The error $\eta$
bounds the implementation of the entire $\SEL$ oracle.

To implement $\Pi$, write $x=(J,u)$ and let
$r=\log_2(Dm)$ be the number of address qubits. Introduce a
second $r$-qubit register initialized to $\ket{0^r}$. Compute
$q(x)$ into this register, swap the two registers, and use a
second Boolean lookup to erase the old address:
\[
\begin{aligned}
\ket x\ket{0^r}
&\longmapsto\ket x\ket{q(x)}\\
&\longmapsto\ket{q(x)}\ket x\\
&\longmapsto\ket{q(x)}\ket{x\oplus q(q(x))}\\
&=\ket{q(x)}\ket{0^r}.
\end{aligned}
\]
Here $\oplus$ denotes bitwise XOR, and the last equality uses
$q^2=\mathrm{id}$. This computation also applies to superpositions
of addresses by linearity. Each lookup has $Dm$ entries with
$r$ output bits, so the two lookups together cost
$O(\sqrt{Dm\log(2Dm)})$ $T$ gates and ancilla qubits by
\Cref{lem:boolean}. They are exact and introduce no additional
error. The Hadamard layers and the register swap use only
Clifford gates. Adding the costs proves \Cref{eq:encodingcost}.

For a controlled query, introduce a control qubit. We describe
the circuit for each computational-basis state $\ket c$, where
$c\in\{0,1\}$; its action on superpositions follows by linearity.
For every address $(J,u)$, select the
identity $I_{\mathsf F\mathsf T}$ when $c=0$ and $V_{J,u}$ when
$c=1$. Denote this operation by $\SEL_c$. Likewise, let $\Pi_c$
leave $(J,u)$ unchanged when $c=0$ and apply $q$ when $c=1$.
Both operations leave the control state $\ket c$ unchanged.
Applying both Hadamard layers gives, when the control is in $\ket c$,
\[
F_{\mathsf S}\Pi_c\SEL_c F_{\mathsf S}
=\begin{cases}
F_{\mathsf S}^2=I,&c=0,\\
U_K,&c=1.
\end{cases}
\]
Thus, when $c=0$, the middle operations leave all registers
unchanged, and the two unconditional
applications of $F_{\mathsf S}$ then cancel. This cancellation
holds for every state of the slot register. Including $c$ in
the lookup address doubles the table sizes from $Dm$ to $2Dm$,
which changes the cost by only a constant factor. The resulting
compiled circuit approximates this controlled unitary to error
$\eta$. Reversing the gate order and replacing each gate by its
adjoint implements the inverse of either query with the same
$T$-gate and ancilla costs. Unitary invariance of the norm also
preserves the implementation error in \Cref{def:implementation}.

To use the block encoding $U_K$ of $K$ in QSVT, we distinguish
its block-encoding ancillas from the additional ancilla qubits used
to compile the circuit. The block-encoding ancillas are the slot register $\mathsf S$
and flag qubit $\mathsf F$, totaling $\log_2m+1$ qubits. They
are part of the input to $U_K$ and may be nonzero, in
superposition, or entangled with the data registers during QSVT.
The implementation guarantee therefore applies to every input
state on $\mathsf S,\mathsf F,\mathsf B,\mathsf T$.
Let $\mathsf W$ denote the compiler workspace, a register of $w$
additional ancilla qubits initialized to $\ket{0^w}_{\mathsf W}$.
For every normalized joint state
$\ket\Phi_{\mathsf S\mathsf F\mathsf B\mathsf T}$, the compiled
circuit $\widetilde U_K$ satisfies
\[
\norm{\widetilde U_K\bigl(\ket{0^w}_{\mathsf W}
\ket\Phi_{\mathsf S\mathsf F\mathsf B\mathsf T}\bigr)
-\ket{0^w}_{\mathsf W}U_K
\ket\Phi_{\mathsf S\mathsf F\mathsf B\mathsf T}}\le\eta.
\]
The controlled and inverse queries satisfy the corresponding
guarantees.

QSVT reflections apply a phase flip conditioned on the
$\log_2m+1$ qubits in $\mathsf S,\mathsf F$ being zero, with $O(1)$ additional
control qubits. Computing this condition with Toffoli gates,
applying the phase flip, and uncomputing the condition uses
$O(1+\log m)$ $T$ gates and temporary ancilla qubits.
Single-qubit rotations are counted separately, as in
\Cref{lem:hamsim}.

The ideal queries and reflections return their temporary
ancillas to zero. Approximate queries may leave residual errors
in these ancillas, which are included in the full error analysis above. The composition rule in \Cref{sec:prelim} therefore
permits these ancillas to be reused without resetting them,
with the $Q_t$ queries contributing at most $Q_t\eta$ to the
total error.
\par\endgroup
\end{proof}

\begin{proof}[Proof of \Cref{cor:normsum}]
Since $s\ell\le D$ and $m\le2m_0$,
\[
sm\le2(D+\sqrt{2Ds\ell}+s\ell)\le(4+2\sqrt2)D<7D.
\]
There are at most $s$ nonzero pieces, so $\sum_k\alpha_k\le smg\le7Dg$.
Substitute $Dm\le7D^2/s$ and $bD=N$ into \Cref{eq:encodingcost};
the assumed $b^2\Lambda\ge\log(2Dm)$ absorbs the permutation term.
\end{proof}

\subsection{Exact rounded $\SEL$ oracles}
\label{app:exactmux}

\begin{lemma}[Exact rounded $\SEL$ oracles]
\label{lem:exactmux}
Let $M=2^\mu$, $d=2^\kappa$, and $B\ge\kappa+2$, and let
$W=\sum_x\ket x\!\bra x\otimes V_x$ with $V_x\in U(d)$. There are unitaries
$\widetilde V_x$ with $\norm{\widetilde V_x-V_x}\le2^{2-B}\pi d$, a fixed
$(B+1)$-qubit state $\ket\Psi$, and a Clifford+$T$ circuit $C$ with
$O(d\sqrt{MB}+d^2B)$ $T$ gates and ancillas such that
\begin{equation}
C\big(\ket x\ket\psi\ket{0^w}\ket\Psi\big)=\ket x\,\widetilde V_x\ket\psi\,\ket{0^w}\ket\Psi
\label{eq:exact-rounded-multiplexer}
\end{equation}
for all $x$ and $\psi$. The same holds, at the same cost, for the inverse,
for the singly controlled version, and when some branches carry
$\widetilde V_y^\dagger$ in place of $\widetilde V_x$.
\end{lemma}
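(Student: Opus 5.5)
We follow the structure behind \Cref{lem:tan}, in which Tan decomposes each $V_x$ into a fixed Clifford+$T$ skeleton plus a list of variable single-qubit rotations. The modification is that every rotation angle is rounded to a multiple of $2\pi/2^B$. Each rounded rotation is then applied exactly, by phase kickback from a Fourier-basis catalyst, rather than synthesized approximately.

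\emph{Step 1: decomposition with angle data.} Write each $V_x$ through a standard exact decomposition into two-level (Givens) rotations, or equivalently by the quantum Shannon/cosine--sine decomposition used by Tan. This gives $O(d^2)$ elementary factors $R_z(\theta)$ and $R_y(\theta)$ sandwiched by fixed Clifford and Toffoli structure independent of $x$. The angles $\theta_{x,j}$ are arbitrary reals, and an $R_y$ reduces to an $R_z$ conjugated by Cliffords. Round each $\theta_{x,j}$ to the nearest multiple of $2\pi/2^B$; this moves each factor by at most $\pi 2^{-B}$ in operator norm. If there are at most $4d$ angle-bearing layers per branch, as in a structured cosine--sine decomposition with uniformly controlled rotations, the errors add to give $\norm{\widetilde V_x-V_x}\le 2^{2-B}\pi d$. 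The decomposition must be organized so that the bound is linear in $d$ rather than $d^2$. I would do this by grouping each uniformly controlled rotation layer into a single block-diagonal unitary with error at most $\pi2^{-B}$, then counting $O(d)$ such layers.

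\emph{Step 2: exact phase kickback.} Take $\ket\Psi=2^{-(B+1)/2}\sum_k e^{-2\pi ik/2^{B+1}}\ket k$, the Fourier state with product form as displayed. This state is an eigenvector of modular addition: adding $a$ modulo $2^{B+1}$ multiplies it by $e^{2\pi ia/2^{B+1}}$. A rounded rotation $e^{\pm i\theta\mathsf Z/2}$ with $\theta=2\pi a/2^B$ is then implemented exactly by controlled addition of $\pm a$ into the catalyst. The integer $a$ can depend on $x$ and on the target qubit's value. The catalyst comes back unchanged, and any global phases are absorbed into $\widetilde V_x$.

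\emph{Step 3: cost and variants.} For each rotation layer, load the $B$-bit integers $a_{x,j}$ with a Boolean lookup from \Cref{lem:boolean}, perform an adder costing $O(B)$ $T$ gates, and unload the integers. Tan's counting, where lookup tables are batched across layers so that the total lookup cost is $O(d\sqrt{MB})$, carries over verbatim with $r$ output bits replaced by $B$. The adders contribute $O(d^2B)$. The Boolean lookups are exact and the adders are exact Clifford+Toffoli circuits, so \Cref{eq:exact-rounded-multiplexer} holds exactly.

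The variants follow directly. The inverse is the reversed circuit, which returns $\ket\Psi$ since each step preserves it. For the singly controlled version, include the control bit in the lookup address, so that $c=0$ loads $a=0$. Branches carrying $\widetilde V_y^\dagger$ are handled by tabulating negated angles in reversed order; since rounding to the grid commutes with negation, the resulting operator is exactly $\widetilde V_y^\dagger$.

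\textbf{Main obstacle.} The hardest part is ensuring that the rounding-error count is linear in $d$ and that Tan's lookup batching still applies when every angle is replaced by exact kickback. I would first try to reproduce Tan's cosine--sine recursion explicitly, checking that the number of multiplexed-rotation layers per branch is at most $4d$ and that the condition $B\ge\kappa+2$ absorbs the constant-factor overheads.
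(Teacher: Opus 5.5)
Your route is the paper's: Tan's $d-1$ uniformly controlled single-qubit gates, rounding each per-branch angle with the error charged once per uniformly controlled layer, and exact kickback from the Fourier state by modular addition into a $(B+1)$-qubit register. It also shares the single batched lookup of all $O(d^2B)$ angle bits and the control bit folded into the lookup address. Two steps, however, do not go through as written.

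First, the phases. Step 2 only realizes $e^{\pm i\theta\mathsf Z/2}$, and you say that any global phases are absorbed into $\widetilde V_x$. Inside $W$, a phase that is global for a single $V_x$ depends on $x$, so it is a relative phase between address branches. Likewise, the factors $e^{i\omega_0}$ in the branches of each uniformly controlled gate are relative phases between control values. If you do not implement them, $\widetilde V_x$ agrees with $V_x$ only up to an $x$-dependent phase, and $\norm{\widetilde V_x-V_x}\le2^{2-B}\pi d$ fails. For $d=1$, where $V_x$ is itself a phase, your circuit implements nothing. The repair uses your own mechanism: round $\omega=2\pi q/2^B$ and add $2q$ to the catalyst without conditioning on the target qubit, which multiplies by $e^{i\omega}$ exactly. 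The paper's proof includes exactly this step and treats $d=1$ separately.

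Second, the branches carrying $\widetilde V_y^\dagger$. In a multiplexer the gate skeleton is common to all addresses: which qubit each layer targets, whether the layer is a phase, an $R_z$ or an $R_y$, and the surrounding Clifford and Toffoli gates. Only the loaded angles vary with the address. Loading the negated angles in reversed order into that same skeleton yields $\widetilde V_y^\dagger$ only if the skeleton reads the same backwards. You neither state nor check this property, and it need not hold, for example for a Givens sequence. The paper avoids the issue by appending the reversed gate sequence to the original one and using the angle lists $(\boldsymbol\theta_x,\mathbf 0)$ and $(\mathbf 0,-\boldsymbol\theta_y^{\mathrm{rev}})$. Half of each branch is then the identity, at the cost of a factor of two. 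Also negate the already rounded angles of $\widetilde V_y$ rather than rounding $-\theta$, since nearest-point rounding need not commute with negation at ties.
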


\begin{proof}[Proof of \Cref{lem:exactmux}]
\begingroup
For $d\ge2$, decompose each $V_x$ using the same sequence of
$d-1$ uniformly controlled single-qubit gates~\cite{tan2025unitary}.
For each setting of the control qubits, the selected single-qubit
unitary has the form
$e^{i\omega_0}R_z(\omega_1)R_y(\omega_2)R_z(\omega_3)$,
with angles depending on $x$ and the control values.
Round each angle to the nearest multiple of $2\pi2^{-B}$.
This changes each uniformly controlled gate by at most
$4\pi2^{-B}$ in operator norm. Summing the errors over the
$d-1$ gates gives
$\norm{\widetilde V_x-V_x}\le4\pi(d-1)2^{-B}$.

The fixed catalyst state
\[
\ket\Psi=2^{-(B+1)/2}\sum_{k=0}^{2^{B+1}-1}e^{-2\pi ik/2^{B+1}}\ket k
\]
implements each rounded angle exactly by phase
kickback~\cite{gidney2018halving}. Adding an integer $a$ modulo
$2^{B+1}$ to the catalyst register multiplies the state by
$e^{2\pi ia/2^{B+1}}$ and leaves the catalyst unchanged.
For a rounded angle $\theta=2\pi q/2^B$, with $q\in\mathbb Z$,
first add $-q$, then add $2q$ conditioned on the target qubit
being $\ket1$. The target states $\ket0$ and $\ket1$ acquire
phases $e^{-i\theta/2}$ and $e^{i\theta/2}$, respectively,
which implements $R_z(\theta)$.
Clifford conjugation gives
$R_y(\theta)=S\mathsf H R_z(\theta)\mathsf H S^\dagger$,
where $S=\diag(1,i)$. Adding $2q$ without the target control implements
$e^{i\theta}$. For $d=1$, only this scalar phase is needed;
rounding it gives error at most $\pi2^{-B}$ and cost
$O(\sqrt{MB}+B)$.

For each address $x$, the angle table contains $O(d)$ angles
for each of $O(d)$ uniformly controlled phase and rotation
gates, hence $O(d^2B)$ bits. \Cref{lem:boolean} loads and
erases this table at cost $O(d\sqrt{MB})$, following the lookup
scheme in~\cite[Lemma~4.1]{tan2025unitary}.
For each uniformly controlled gate, reversibly selecting the
appropriate angle from its table costs $O(dB)$ Toffoli gates,
and the modular additions cost $O(B)$~\cite{cuccaro2004new}.
Uncomputing the selected angle has the same selection cost.
Summing over the $O(d)$ gates gives total
cost $O(d\sqrt{MB}+d^2B)$.

Reversing the circuit and adjointing its gates implements the
inverse. For a controlled query, include the control bit in
the lookup address and select either the original angle table
or an all-zero table, which implements the identity. This
doubles the number of addresses and preserves the cost bound.
For adjoint pairs, let $\boldsymbol{\theta}_x$ denote the ordered
list of rounded angles in the elementary phase and rotation
gates implementing $\widetilde V_x$. Append the same elementary
gate sequence in reverse order. Using the concatenated angle lists
$(\boldsymbol{\theta}_x,\mathbf0)$ and
$(\mathbf0,-\boldsymbol{\theta}_x^{\mathrm{rev}})$ for the two
branches implements $\widetilde V_x$ and $\widetilde V_x^\dagger$,
respectively. Here $\mathrm{rev}$ reverses the angle list, and
$\mathbf0$ is a zero list of the same length. The zero angles
make one half the identity; reversing the gate order and
negating the angles in the other half gives the adjoint.
This doubles the sequence length and preserves the asymptotic cost.
\par\endgroup
\end{proof}

\subsection{Block encodings of rounded Hermitian matrices}
\label{app:roundedencoding}

\begin{proof}[Proof of \Cref{lem:roundedencoding}]
If $K=0$, set $\widetilde K=0$. Applying a Pauli $X$ gate to the
flag qubit $\mathsf F$ gives an exact block encoding of the zero matrix,
since $\bra0 X\ket0=0$. For $K\ne0$, modify the lookup tables in
\Cref{lem:encoding} as follows. For each off-diagonal
pair, round the dilation of $K_{IJ}/g$ using \Cref{lem:exactmux}
and place its exact adjoint at the adjoint block's address. Hence
$\widetilde K_{JI}=\widetilde K_{IJ}^\dagger$.

For a nonzero diagonal block, use two fixed slots carrying a
rounded dilation $\widetilde V$ of $K_{JJ}/(2g)$ and its exact adjoint.
Let $\widetilde A$ denote the encoded $b\times b$ block of $\widetilde V$. The two slots then contribute
\[
\widetilde K_{JJ}=g(\widetilde A+\widetilde A^\dagger),
\]
which is Hermitian. Each block column contains at most one diagonal
block, so replacing its single slot by two slots uses at most
$m_0+1\le m$ slots in that column. At every unused slot, select
$X_{\mathsf F}\otimes I_{\mathsf T}$ and let $\Pi$ fix its address.
This operation maps the flag from $\ket0$ to $\ket1$, so the slot
contributes zero to the encoded matrix. To implement these branches
exactly, use zero angle words at unused addresses in the rounded
multiplexer, making it act as the identity there. Compute a bit
indicating whether the address is unused, apply $X_{\mathsf F}$
conditioned on this bit, and uncompute the bit. By \Cref{lem:boolean},
this additional lookup costs $O(\sqrt{Dm})$ $T$ gates and ancilla qubits.

By \Cref{lem:exactmux} with $d=2b$, rounding changes each dilation
by at most $8\pi b2^{-B}$ in operator norm. Taking its block with
flag $\mathsf F=0$ on input and output cannot increase this error.
After multiplying by $g$, each off-diagonal block therefore has error
at most $8\pi gb2^{-B}$. For a diagonal block, write
$A=K_{JJ}/(2g)=A^\dagger$. Its error satisfies
\[
\begin{aligned}
\norm{\widetilde K_{JJ}-K_{JJ}}
&=g\norm{(\widetilde A-A)+(\widetilde A-A)^\dagger}\\
&\le 2g\norm{\widetilde A-A}
\le16\pi gb2^{-B}.
\end{aligned}
\]

Let $E=\widetilde K-K$. The unused slots contribute exactly zero,
so $E_{IJ}=0$ whenever $K_{IJ}=0$. Since $E$ is Hermitian, it has
at most $m_0\le m$ nonzero blocks in each block row as well as in
each block column. For a Hermitian block matrix, the operator norm
is bounded by the largest sum of block norms in a row. Hence
\[
\norm E\le\max_I\sum_J\norm{E_{IJ}}
\le16\pi mg b2^{-B},
\]
which proves \Cref{eq:rounded-encoding-error}. To see the first
inequality, write a vector $x$ in blocks $x_I$ and bound
$|\langle x,Ex\rangle|$ by
$\sum_{I,J}\norm{E_{IJ}}\norm{x_I}\norm{x_J}$.
Using $2\norm{x_I}\norm{x_J}\le\norm{x_I}^2+\norm{x_J}^2$
and $\norm{E_{IJ}}=\norm{E_{JI}}$ gives the stated bound.

Use these slot unitaries in the circuit
$U_{\widetilde K}=F_{\mathsf S}\Pi\widetilde{\SEL} F_{\mathsf S}$,
where $\widetilde{\SEL}$ denotes the resulting selection oracle.
Given the catalyst $\ket\Psi$, \Cref{lem:exactmux} implements the
rounded multiplexer exactly at cost $O(b\sqrt{DmB}+b^2B)$.
The additional lookup for unused slots is absorbed in this cost.
The Hadamard layers and the address permutation are exact; the
latter costs $O(\sqrt{Dm\log(2Dm)})$ as shown in the proof of
\Cref{lem:encoding}. Together these circuits implement the block
encoding $U_{\widetilde K}$ of $\widetilde K$ exactly with the $T$-gate and ancilla costs in
\Cref{eq:encodingcost} at $\Lambda=B$. They return the catalyst
unchanged and all additional ancilla qubits to $\ket0$.

Reversing the circuit and replacing each gate by its adjoint
implements $U_{\widetilde K}^\dagger$. For a singly controlled
query, condition both $\widetilde{\SEL}$ and $\Pi$ on the control
qubit, leaving the two Hadamard layers unconditional. When the
control is $\ket0$, the middle operations are the identity and
$F_{\mathsf S}^2=I$; when it is $\ket1$, the circuit applies
$U_{\widetilde K}$. Including the control bit in the lookup addresses
at most doubles the table sizes, preserving the asymptotic cost.
Reversing this circuit gives the singly controlled inverse.
\end{proof}

\section{Unitary synthesis using qDRIFT and QSVT}
\label{app:constant}

We prove the shared selection lemma, then establish the parameter
and resource bounds for the compiler in \Cref{alg:constant-full},
summarized in \Cref{alg:constant}.

\begin{algorithm*}[t]
\caption[Detailed qDRIFT and QSVT compiler]{\raggedright Detailed classical
preprocessing and circuit assembly for \Cref{thm:constant}.}
\label{alg:constant-full}
\begin{algorithmic}[1]
\Require $U\in U(N)$ and $\epsilon$ in the range of \Cref{thm:constant}.
\Ensure A fixed Clifford+$T$ circuit implementing $U$ to error $\epsilon$.
\item[\textbf{Setup:}] From $N$ and $\epsilon$, choose the block size $b$,
set $D=N/b$, and choose the number $s$ of partition classes and the
split parameter $\ell$ as in \Cref{app:constant}.
\Statex \textbf{Stage I: Preprocessing.}
\State Compute the principal Hermitian logarithm of $U$ and separate its scalar part $\varphi I$ and traceless part $H_0$.
\If{$H_0=0$}
  \State \Return a phase circuit implementing $e^{i\varphi}I$ with error $\epsilon/8$.
\EndIf
\Repeat \Comment{Success per trial $\ge3/4$.}
  \State Draw a fresh similarity $P$ and form $H'=PH_0P^T$.
  \State Sample a fresh partition of $H'$ into $s$ Hermitian pieces $H_k$. \Comment{Preserve Hermitian block pairs.}
\Until{all block-norm, piece-norm, and sparsity bounds hold at failure parameter $1/8$ (\Cref{lem:flattening,lem:split}).}
\State Discard zero pieces and set $\lambda=\sum_k\norm{H_k}$.
\If{$\lambda<\epsilon/4$} \Comment{$\norm{e^{iH'}-I}\le\lambda$.}
  \State \Return a phase circuit implementing $e^{i\varphi}I$ with error $\epsilon/8$.
\EndIf
\State Build a block encoding $U_{H_k}$ of each remaining piece $H_k$, with subnormalization factor $\alpha_k$ (\Cref{cor:normsum}).
\Statex \textbf{Stage II: qDRIFT certification.}
\State Use \Cref{lem:constant-selection} with $\varepsilon=\epsilon/4$, $p=1/2$, and $N_q$ from \Cref{eq:qdrift-segments}.
\Repeat \Comment{Success per trial $\ge1/2$.}
  \State Sample fresh piece indices $K_j$ and durations $\tau_j$; form their ideal product $V$.
\Until{the accuracy and weighted-time tests pass (\Cref{lem:constant-selection}).}
\Statex \textbf{Stage III: Circuit assembly.}
\State Implement the sampled evolutions with QSVT.
\State Compile the queries to $U_{H_k}$, their inverses
and singly controlled versions, and the single-qubit rotations
to the tolerances specified in the error analysis of \Cref{app:constant}.
\State Assemble the following circuit, using exact $P,P^T$ and reusable clean workspace:
\begin{equation}
e^{i\varphi}P^T
\Big(\prod_{j=N_q}^{1}\mathcal S_{\tau_j}(U_{H_{K_j}})\Big)P .
\label{eq:constant-compiled-product}
\end{equation}
\State \Return the compiled circuit.
\end{algorithmic}
\end{algorithm*}

\begin{proof}[Proof of \Cref{lem:constant-selection}]
We first bound the weighted simulation time of the sampled product.
For each segment $j$, \Cref{eq:qdrift-sampling} gives
\[
\begin{aligned}
\E[\alpha_{K_j}\tau_j]
&=\sum_{k=1}^r\frac{\norm{A_k}}{\lambda}
  \frac{\alpha_k\lambda}{N_q\norm{A_k}}\\
&=\frac1{N_q}\sum_{k=1}^r\alpha_k.
\end{aligned}
\]
Summing over the $N_q$ segments yields
\begin{equation}
\E\mathcal T
=\sum_{j=1}^{N_q}\E[\alpha_{K_j}\tau_j]
=\sum_{k=1}^r\alpha_k.
\label{eq:weightedtime}
\end{equation}
Since $\mathcal T\ge0$, Markov's inequality gives
\[
\Pr\!\left(\mathcal T>\frac2p\sum_k\alpha_k\right)
\le\frac{\E\mathcal T}{(2/p)\sum_k\alpha_k}
=\frac p2.
\]
The choice of $N_q$ in \Cref{lem:qdrift} also ensures
\[
\Pr\!\left(\norm{V-e^{iA}}>\varepsilon\right)\le\frac p2.
\]
By the union bound, the probability that either requirement fails
is at most $p$. Thus the same sampled product satisfies both the
accuracy bound and the bound on $\mathcal T$ with probability at
least $1-p$.

Now fix any sample satisfying both bounds.
Apply \Cref{lem:hamsim}
to segment $e^{i\tau_jA_{K_j}}$ using queries to $U_{A_{K_j}}$
at simulation error $\xi/N_q$. This error lies in $(0,1/2)$,
as required by the lemma. Use a common clean ancilla register for
all segments, adding unused ancillas where necessary.
The composition rule in \Cref{sec:prelim} bounds the implementation
error of the product by $\sum_{j=1}^{N_q}\xi/N_q=\xi$, including
the error on the ancillas.

For segment $j$, the query count is
\[
Q_j=O\!\left(\alpha_{K_j}\tau_j+\log\frac{N_q}{\xi}\right).
\]
Summing over all segments and using the accepted bound on
$\mathcal T$ gives
\[
\begin{aligned}
Q=\sum_{j=1}^{N_q}Q_j
&=O\!\left(\mathcal T+N_q\log\frac{N_q}{\xi}\right)\\
&=O\!\left(\frac1p\sum_k\alpha_k
           +N_q\log\frac{N_q}{\xi}\right).
\end{aligned}
\]
These counts include queries to the block encodings $U_{A_k}$ of $A_k$, their
inverses, and their singly controlled versions, as in \Cref{lem:hamsim}.
\end{proof}

\begin{proof}[Proof of \Cref{thm:constant}]
\emph{Parameters.} Set $L=n+\log_2(1/\epsilon)$.
Choose $b$, a power of 2, satisfying
\[
\frac12 N^{3/7}\epsilon^{-4/7}\le b\le 2N^{3/7}\epsilon^{-4/7}.
\]
Put $D=N/b$ and set
\begin{equation}
s=\Big\lfloor\frac{c_s\sqrt D\,\epsilon^2}{n^2L}\Big\rfloor,
\qquad \ell=\log(64Ns),
\label{eq:sconst}
\end{equation}
with a sufficiently small absolute constant $c_s>0$.
For sufficiently large $C_0$ and $N$, the stated range of $\epsilon$
ensures $s\ge1$ and $s\ell\log^2(8N)\le D/C$.

Write $U=e^{iH}$, choosing the Hermitian logarithm $H$ with eigenvalues
in $(-\pi,\pi]$; this is our convention for the principal Hermitian
logarithm. Set $\varphi=\tr H/N$ and $H_0=H-\varphi I$.
Since $\norm{H}\le\pi$ and $\abs{\varphi}\le\pi$,
\[
\norm{H_0}\le\norm{H}+\abs{\varphi}\le2\pi.
\]
If $H_0=0$, then $U=e^{i\varphi}I$, so only the scalar-phase circuit
is needed. Otherwise, for an accepted choice of $P$,
\Cref{lem:flattening} with $\delta=1/8$ bounds the operator norm of
every $b\times b$ block of $H'=PH_0P^T$ by
\[
\norm{H'_{IJ}}\le g:=\frac{C_A\norm{H_0}\log(8N)}{\sqrt D},
\qquad I,J\in[D].
\]
The parameter condition $s\ell\log^2(8N)\le D/C$, with $C\ge C_A^2$,
and the identity $\norm{H'}=\norm{H_0}$ imply
$g\le\norm{H'}/\sqrt{s\ell}$, as required by \Cref{lem:split}.
Applying \Cref{lem:split} with $\delta=1/8$ and then
\Cref{cor:normsum} gives
\begin{equation}
\lambda\le8\pi\sqrt{s\ell},\qquad
\sum_k\alpha_k\le7Dg=O(\sqrt D\,n).
\label{eq:constant-piece-norm-sum}
\end{equation}
Here $s$ includes classes whose pieces were discarded as zero.
The similarity and split pass jointly with probability at least
$3/4$, giving a constant expected trial count.
If $\lambda<\epsilon/4$, the phase alone
suffices since $\norm{e^{iH'}-I}\le\lambda$.

\emph{Selection and queries.} For $\lambda\ge\epsilon/4$, choose
\begin{equation}
N_q=\Big\lceil\frac{704\lambda^2}{\epsilon^2}\log(8N)+\frac{8\lambda^2}{\epsilon}\Big\rceil.
\label{eq:qdrift-segments}
\end{equation}
Apply \Cref{lem:constant-selection} with $A_k=H_k$,
$\varepsilon=\xi=\epsilon/4$, and $p=1/2$.
Each trial passes the accuracy and weighted-time tests with probability at
least $1/2$, and the accepted product has QSVT implementation error
$\epsilon/4$.
Now $N_q=O(sn^2/\epsilon^2)$ and
$\log(N_q/\epsilon)=O(L)$. Substituting \Cref{eq:sconst} gives
\begin{equation}
Q=O(\sqrt D\,n+sn^2L/\epsilon^2)=O(\sqrt D\,n).
\label{eq:total-calls}
\end{equation}
Implement each query to $U_{H_k}$, its inverse,
or a singly controlled version with error at most
$\eta=\epsilon/(8Q)$, so $\Lambda=O(L)$.
Since $b=\Omega(N^{3/7})$ and $m\le7D$, the domination condition
in \Cref{cor:normsum} holds, giving per-query cost
$c_{\rm query}=O(N\sqrt{L/s}+b^2L)$ $T$ gates and reusable clean ancillas.

\emph{Error and resources.} The basis changes and reflections are
exact. Allocate $\epsilon/8$ each to the combined
errors from implementing the queries to $U_{H_k}$, all
$O(Q)$ rotations, and the scalar phase, synthesizing the rotations and
the phase by the single-qubit synthesis bound in
\Cref{eq:single-qubit-synthesis}.
Together with the qDRIFT
and QSVT errors, the composition rule gives
\[
\epsilon/4+\epsilon/4+\epsilon/8+\epsilon/8+\epsilon/8
=7\epsilon/8.
\]
The $O(\sqrt N)$ basis-change cost, $O(nQ)$ reflection cost,
$O(QL)$ rotation cost, and $O(L)$ phase cost are all lower order.
Thus
\begin{equation}
T=O\Big(n^2L\Big(\frac{ND^{1/4}}{\epsilon}+\sqrt D\,b^2\Big)\Big).
\label{eq:constant-total-cost}
\end{equation}
Substituting $b=\Theta(N^{3/7}\epsilon^{-4/7})$ and $D=N/b$
makes both terms $\Theta(N^{8/7}\epsilon^{-6/7})$.
Sequential queries reuse workspace, so the ancilla count is
$O(c_{\rm query})=O(nL\,N^{6/7}\epsilon^{-8/7})$, including the
smaller logical registers and sign-oracle workspace.
\end{proof}

\begin{remark}[Deterministic product formulas]
\label{rem:strang}
Replacing qDRIFT by second-order Strang splitting uses a deterministic
product formula, but gives the weaker exponent $7/6$. The similarity and
partition still involve classical randomness; both compilers output
fixed circuits. The commutator
bounds of Ref.~\cite{childs2021theory} bound the error after $r$ steps by
$\Ot(\sqrt s/r^2)$. This uses $\norm{H_j}=\Ot(s^{-1/2})$ from
\Cref{lem:split}, and $\norm{\sum_{k>j}H_k}=\Ot(1)$ from the same
Bernstein argument applied to unions of classes. The resulting product
formula is fixed once the classical similarity and partition are chosen.
At constant accuracy, the error bound requires $r=\Ot(s^{1/4})$,
so its $2rs$ segments number $\Ot(s^{5/4})$.

Keeping this number of order $\sqrt D$, as for qDRIFT, gives
$s\approx D^{2/5}$. Balancing the resulting lookup and processing
costs gives $b\approx N^{4/9}$ and $T=\Ot(N^{7/6})$. The
smaller qDRIFT exponent follows because its segment count depends on the
pieces only through the bound $\lambda^2=\Ot(s)$.
\end{remark}

\section{Proof of unitary synthesis using residual refinement}
\label{app:refinement}

We prove \Cref{thm:refined} for \Cref{alg:refinement-full},
summarized in \Cref{alg:refinement}.
\Cref{app:refinement-parameters} specifies the parameters, residual
data, and acceptance tests.
\Cref{sec:refinement-errors} proves residual contraction, termination,
and the final error bound. These estimates give the $T$-count and
ancilla bounds in \Cref{sec:refinement-cost}, followed by a comparison
of precision dependence.
Finally, \Cref{sec:effective-compilation} gives a terminating classical
compiler for computably specified targets with the quantum resource
bounds of either \Cref{thm:constant} or \Cref{thm:refined}.

\subsection{Parameters and residual data}
\label{app:refinement-parameters}

Constants $C,c$ are absolute, and $n\ge n_0$ is sufficiently large.

For the pieces of \Cref{lem:split}, $sm\le s(2m_0+2)\le9D$ when
$s\ell\le D$. So the subnormalization factors satisfy $\sum_k\alpha_k\le9Dg$, and
each query to the block encoding of a rounded piece
costs $O(N\sqrt{B/s}+b^2B)$, as in \Cref{cor:normsum}, whenever
$b^2B\ge\log(2Dm)$. The word length below satisfies this condition.

\paragraph*{Parameters.} Use the word length
\begin{equation}
B=3n+\big\lceil\log_2\big(9216\pi^2C_A\log(8N)\big)\big\rceil .
\label{eq:wordlength}
\end{equation}
The block encodings of the rounded pieces and full rounded residual
generators use \Cref{lem:roundedencoding}, with words relative to
the current block scale, and share one state $\ket\Psi$. Since
$b\sqrt D=\sqrt{Nb}\le N$, this choice gives
$9216\pi^2C_A\,b\sqrt D\log(8N)\,2^{-B}\le N^{-2}$ with $B=O(n)$.
We use this inequality in both stages.

Let $L_0=c_0N^{1/4}n^{-7/8}$, with $c_0$ small enough for the level
stage to be nonempty when $L\le L_0$. Choose the block size once:
if $L\le L_0$, take $b$ to be a power of two in the range
$[\frac12,2]\cdot N^{3/7}L^{2/7}$ and put $D=N/b$.
Run the qDRIFT levels, then correct the
remaining residual by block-encoding and simulating its full
rounded generator. If $L>L_0$, take $D$ to be
the least power of two at least $\min\{N,n^2L^2\}$ and put $b=N/D$;
in this regime, run only the corrections.

For $L\le L_0$, \Cref{sec:refinement-cost} bounds the total lookup
cost of the qDRIFT levels by $O(n^2ND^{1/4}\sqrt L)$ and their
total block-processing cost by $O(n^2\sqrt D\,b^2)$.
Substituting $D=N/b$ and balancing these contributions gives
$b^{7/4}=\Theta(N^{3/4}\sqrt L)$, hence the choice
$b=\Theta(N^{3/7}L^{2/7})$. Both contributions are then
$O(n^2N^{8/7}L^{3/7})$.

For $L>L_0$, the first correction starts from a traceless
generator of norm at most $2\pi$. By
\Cref{lem:flattening,lem:roundedencoding}, the block encoding of
its full rounded generator has subnormalization factor
$O(n\sqrt D)$. The chosen $D$ satisfies $D\le2n^2L^2$, so this
factor is $O(n^2L)$. Moreover, if $D<N$, then $D\ge n^2L^2$ gives
\[
b^2B\le\frac{N^2B}{n^4L^4}
=O\!\left(\frac{NB}{\sqrt n}\right)
=O(N\sqrt B),
\]
where we used $L>L_0$ and $B=\Theta(n)$. If $D=N$, then $b=1$
and the same bound holds. Thus the block-processing cost
$O(b^2B)$ is at most the order of the lookup cost $O(N\sqrt B)$,
and each query to the block encoding of a full rounded correction
generator costs $O(N\sqrt B)$ in this regime.

For the level stage, put $\epsilon_l=2^{-l}/4$, $a_0=2\pi$, and
$a_l=4\epsilon_{l-1}$ for $l\ge1$, and define
\begin{equation}
s_l=\Big\lfloor\frac{c_s\sqrt D\,a_l}{nL}\Big\rfloor .
\label{eq:sl}
\end{equation}
$a_l$ bounds the level's generator norm, with $a_l\le8\pi\epsilon_l$.
The tolerances halve for $l\ge1$, so qDRIFT needs only constant
relative precision.
When $L\le L_0$, let $l_*$ be the first index with $s_{l_*}=0$; then
$1\le l_*=O(n)$. When $L>L_0$, set $l_*=0$, so the level loop is empty.
Indeed, at $l=0$ the argument of the floor in \Cref{eq:sl} is
$\Theta(c_sN^{2/7}L^{-8/7}/n)$, at least one for $L\le L_0$
when $c_0$ is sufficiently small.

\paragraph*{Residual data.}
For a residual unitary $R$, the routine
$\operatorname{Residual}(R)=(H,\varphi,H_0)$ computes
\[
H=-i\log R,\qquad \varphi=\frac{\tr H}{N},\qquad
H_0=H-\varphi I.
\]
We choose the eigenvalues of $H$ in $(-\pi,\pi]$, including
$\pi$ at eigenvalue $-1$ of $R$, as in \Cref{eq:generator}.
Thus $R=e^{i\varphi}e^{iH_0}$ and $H_0$ is traceless.

At level $l$, the current ideal approximation is $W_{l-1}$, with
$W_{-1}=I$, and the residual is $R_l=UW_{l-1}^{-1}$.
Apply the routine above to obtain
$(H^{(l)},\varphi_l,H_0^{(l)})$.
If $H_0^{(l)}=0$, set $P_l=V_l=I$, $H'^{(l)}=0$, and
$\lambda_l=0$; the phase update handles this level exactly.
For $H_0^{(l)}\ne0$, choose a real orthogonal $P_l$ and a split
satisfying the bounds of \Cref{lem:flattening,lem:split}, each
with failure parameter $1/8$, as in \Cref{alg:refinement-full}.
Let $H'^{(l)}=P_lH_0^{(l)}P_l^T$ and write its split as
$H'^{(l)}=\sum_{k\in\mathcal I_l}H_k^{(l)}$, where $\mathcal I_l$
indexes only nonzero pieces. The parameter $s_l$ counts all pieces
in the original random split, including those that are zero, so
$\lvert\mathcal I_l\rvert\le s_l$. The largest block norm is
$g_l=\max_{I,J}\norm{(H'^{(l)})_{IJ}}$, and the
total piece norm is $\lambda_l=\sum_{k\in\mathcal I_l}\norm{H_k^{(l)}}$.
Each subnormalization factor is $\alpha_k^{(l)}=m_lg_l$, where $m_l$ is the
least power of two at least $m_{0,l}+1$ and $m_{0,l}$ is the sparsity
bound of \Cref{lem:split}. Let $\widetilde K_k^{(l)}$ be the
rounded pieces supplied by \Cref{lem:roundedencoding}.
Since $m_l\ge m_{0,l}$, \Cref{lem:encoding}
also supplies exact block encodings of the unrounded pieces
$H_k^{(l)}$ with the same subnormalization factors $\alpha_k^{(l)}$.
Thus the sampling and weighted-time bounds in
\Cref{lem:constant-selection} apply to these pieces. The
simulation circuits use the rounded pieces
$\widetilde K_k^{(l)}$ instead.

When $\lambda_l\ge\epsilon_l/4$, the qDRIFT sampling law and
durations are
\begin{equation}
\begin{aligned}
\Pr(K_{l,u}=k)&=\norm{H_k^{(l)}}/\lambda_l,\\
\tau_{l,u}&=\frac{\lambda_l}{N_q^{(l)}\norm{H_{K_{l,u}}^{(l)}}}.
\end{aligned}
\label{eq:level-sampling}
\end{equation}
Keep the unrounded and rounded products distinct:
\begin{equation}
\begin{aligned}
V_l^{\rm true}&=\prod_{u=N_q^{(l)}}^1 e^{i\tau_{l,u}H_{K_{l,u}}^{(l)}},\\
V_l&=\prod_{u=N_q^{(l)}}^1 e^{i\tau_{l,u}\widetilde K_{K_{l,u}}^{(l)}}.
\end{aligned}
\label{eq:level-products}
\end{equation}
Products act from right to left. The classical qDRIFT acceptance test
uses $V_l^{\rm true}$; the next residual and compiled segments use
$V_l$. When $\lambda_l\ge\epsilon_l/4$, we sample until both
of the following tests pass, using \Cref{lem:constant-selection} with
$\varepsilon=\epsilon_l/2$ and $p=1/2$:
\begin{equation}
\begin{aligned}
\norm{V_l^{\rm true}-e^{iH'^{(l)}}}&\le\epsilon_l/2,\\
\sum_u\alpha_{K_{l,u}}^{(l)}\tau_{l,u}&\le4\sum_k\alpha_k^{(l)}.
\end{aligned}
\label{eq:level-acceptance}
\end{equation}
The first test bounds the qDRIFT approximation error. The second
bounds the sum of simulation times weighted by the block-encoding
subnormalization factors. This sum controls both the query count
and the error introduced when the sampled generators are rounded
(\Cref{eq:level-rounded-error}). If $\lambda_l<\epsilon_l/4$,
the algorithm sets $V_l=I$ and skips sampling at that level.

The correction stage starts from the rounded ideal product
$W^{(0)}=W_{l_*-1}$ produced by the levels; when no levels are
used, $W^{(0)}=I$. At correction $j\ge1$, the transformation
needed to turn the current approximation $W^{(j-1)}$ into $U$ is
\[
R^{(j)}=U(W^{(j-1)})^{-1},
\]
so $R^{(j)}W^{(j-1)}=U$.
Apply the same residual decomposition to define
\[
\begin{aligned}
G_j&=-i\log R^{(j)},\qquad \chi_j=\frac{\tr G_j}{N},\\
G_{0,j}&=G_j-\chi_j I.
\end{aligned}
\]
Thus $R^{(j)}=e^{i\chi_j}e^{iG_{0,j}}$: $\chi_j$ is the scalar
phase, and $G_{0,j}$ is the traceless generator to be simulated.
The notation
$\operatorname{Residual}(R^{(j)})=(G_j,\chi_j,G_{0,j})$
in \Cref{alg:refinement-full} refers to these definitions.

For $G_{0,j}\ne0$, choose a real orthogonal matrix $P_j^{\rm c}$
satisfying the flattening bound of \Cref{lem:flattening}, by
resampling with failure parameter $1/4$ until the bound holds.
The superscript $\rm c$ identifies a change of basis used in
the correction stage. Set
\[
\begin{aligned}
G'_j&=P_j^{\rm c}G_{0,j}(P_j^{\rm c})^T,\\
g_j&=\max_{I,J}\norm{(G'_j)_{IJ}}.
\end{aligned}
\]
This choice bounds the norm of every block of $G'_j$ by
$C_A\norm{G_{0,j}}\log(4N)/\sqrt D$.
Apply \Cref{lem:roundedencoding} to the full matrix $G'_j$,
using word length $B$ and $m=2D$, to obtain a Hermitian matrix
$\widetilde G_j$ and an exact block encoding of $\widetilde G_j$.
Here $m=2D$ accommodates the at most $D$ nonzero blocks in
each block column and the additional slot for diagonal-block
symmetrization. Define the subnormalization factor of this block
encoding and the error from rounding the generator by
\begin{equation}
\alpha_j=2Dg_j,\qquad \rho_j=\norm{\widetilde G_j-G'_j}.
\label{eq:correction-data}
\end{equation}
The correction first applies $P_j^{\rm c}$, simulates
$\widetilde G_j$ for time one, and then applies
$(P_j^{\rm c})^T$ and the scalar phase. The resulting ideal update is
\begin{equation}
W^{(j)}=e^{i\chi_j}(P_j^{\rm c})^Te^{i\widetilde G_j}
P_j^{\rm c}W^{(j-1)}.
\label{eq:correction-ideal-update}
\end{equation}
We stop when $\rho_j\le\epsilon/8$; the next subsection proves
that $\norm{U-W^{(j)}}\le\rho_j$.
If $G_{0,j}=0$, set $P_j^{\rm c}=I$, $\widetilde G_j=0$,
and $\alpha_j=\rho_j=0$. Then the scalar phase alone gives
$W^{(j)}=U$, and no Hamiltonian simulation is needed.

\begin{algorithm*}[t]
\caption[Detailed residual-refinement compiler]{\raggedright Detailed
classical preprocessing and circuit assembly for \Cref{thm:refined}.}
\label{alg:refinement-full}
\begin{algorithmic}[1]
\Require $U\in U(N)$ and $\epsilon$ in the range of \Cref{thm:refined}.
\Ensure A fixed Clifford+$T$ circuit implementing $U$ to error $\epsilon$.
\item[\textbf{Setup:}] From $N$ and $\epsilon$, choose the block size $b$,
word length $B$, number $l_*$ of block-sparse levels, level tolerances
$\epsilon_l$, generator-norm bounds $a_l$, and piece counts $s_l$;
set $D=N/b$ (\Cref{app:refinement-parameters}).
\Statex \textbf{Stage I: Refinement levels.}
\State Set $W_{-1}=I$.
\For{$l=0,\ldots,l_*-1$}
  \State Form $R_l=UW_{l-1}^{-1}$ and $(H^{(l)},\varphi_l,H_0^{(l)})=\Call{Residual}{R_l}$; set $P_l=V_l=I$.
  \If{$H_0^{(l)}\ne0$}
    \Repeat \Comment{Success per trial $\ge3/4$.}
      \State Draw a fresh similarity $P_l$ and split $H'^{(l)}=P_lH_0^{(l)}P_l^T$ into $s_l$ Hermitian pieces.
    \Until{the block-norm, piece-norm, and sparsity bounds hold at failure parameter $1/8$ (\Cref{lem:flattening,lem:split}).}
    \State Discard zero pieces; compute their norm sum $\lambda_l$ and construct block encodings of the rounded pieces $\widetilde K_k^{(l)}$ with subnormalization factors $\alpha_k^{(l)}$ (\Cref{lem:roundedencoding}).
    \If{$\lambda_l\ge\epsilon_l/4$}
      \State Use \Cref{lem:constant-selection} with $\varepsilon=\epsilon_l/2$ and $p=1/2$, choosing the least sufficient $N_q^{(l)}$ from \Cref{eq:qdrift-concentration-count}.
      \Repeat \Comment{Success per trial $\ge1/2$.}
        \State Sample a fresh unrounded product $V_l^{\rm true}$ (\Cref{eq:level-sampling,eq:level-products}).
      \Until{the accuracy and weighted-time tests pass (\Cref{eq:level-acceptance}).}
      \State Replace its generators by the rounded ones to form $V_l$ (\Cref{eq:level-products}). \Comment{Include rounding in the residual.}
    \EndIf
  \EndIf
  \State Set $W_l=e^{i\varphi_l}P_l^TV_lP_lW_{l-1}$.
\EndFor
\Statex \textbf{Stage II: Corrections using the full residual generator.}
\State Set $W^{(0)}=W_{l_*-1}$ and $j=0$.
\Repeat \Comment{$O(1+L/n)$ corrections suffice.}
  \State Increase $j$ by one; form $R^{(j)}=U(W^{(j-1)})^{-1}$ and $(G_j,\chi_j,G_{0,j})=\Call{Residual}{R^{(j)}}$.
  \If{$G_{0,j}=0$}
    \State Set $P_j^{\rm c}=I$, $\widetilde G_j=0$, and $\alpha_j=\rho_j=0$. \Comment{Only the scalar phase remains.}
  \Else
    \State Resample $P_j^{\rm c}$ until the flattening bound holds at failure parameter $1/4$ (\Cref{lem:flattening}). \Comment{Success per trial $\ge3/4$.}
    \State Form the full rounded generator $\widetilde G_j$ and construct its block encoding with subnormalization factor $\alpha_j$; compute its rounding error $\rho_j$ (\Cref{lem:roundedencoding,eq:correction-data}).
  \EndIf
  \State Set $W^{(j)}=e^{i\chi_j}(P_j^{\rm c})^Te^{i\widetilde G_j}P_j^{\rm c}W^{(j-1)}$.
\Until{$\rho_j\le\epsilon/8$.}
\State Set $J=j$.
\Statex \textbf{Stage III: Circuit assembly.}
\State Set the tolerances for simulation, rotations, catalyst preparation and unpreparation, and the phase as specified in the error analysis of \Cref{sec:refinement-errors}.
\State Prepare $\ket\Psi$; compile $P_l^TV_lP_l$, then $(P_j^{\rm c})^Te^{i\widetilde G_j}P_j^{\rm c}$, in increasing index order (\Cref{lem:exactmux,lem:hamsim}).
\State Apply the phase $\Phi=\sum_{l=0}^{l_*-1}\varphi_l+\sum_{j=1}^{J}\chi_j$, unprepare $\ket\Psi$, and \Return the circuit.
\end{algorithmic}
\end{algorithm*}

\subsection{Accuracy and termination}
\label{sec:refinement-errors}

\begin{proof}[Proof of \Cref{thm:refined}: accuracy]
The products $W_l$ and $W^{(j)}$ use exact exponentials of rounded
generators, so their error includes qDRIFT and generator rounding.
After bounding this error and proving termination, we add the
circuit-implementation errors.

\emph{Level contraction.} We prove by induction that level $l$
produces $\norm{U-W_l}=\norm{R_{l+1}-I}\le\epsilon_l$.
Initially $\norm{H_0^{(0)}}\le2\pi=a_0$; for $l\ge1$, the
induction hypothesis $\norm{R_l-I}\le\epsilon_{l-1}\le1/4$
bounds every principal eigenphase in absolute value by
$\pi\epsilon_{l-1}/2$, giving
$\norm{H_0^{(l)}}\le2\norm{H^{(l)}}\le\pi\epsilon_{l-1}<a_l$.
If $H_0^{(l)}=0$, the phase update gives $W_l=U$.
If $H_0^{(l)}\ne0$ but $\lambda_l<\epsilon_l/4$, setting $V_l=I$
gives
\[
\norm{V_l-e^{iH'^{(l)}}}
\le\norm{H'^{(l)}}\le\lambda_l<\epsilon_l/4.
\]

For sampled levels, $\lambda_l\ge\epsilon_l/4$ and $s_l\le N$,
so $\zeta=\log(64N^2)=O(n)$ bounds the logarithmic factor
in \Cref{lem:split} at failure parameter $1/8$.
The bounds $s_l=O(\sqrt D/(nL))$ and $\sqrt D=\Omega(N^{1/4})$
give $s_l\zeta\log^2(8N)\le D/C_A^2$ and $s_l\zeta\le D$,
so the accepted flattening and splitting bounds yield
$\lambda_l\le4a_l\sqrt{s_l\zeta}$ and
\begin{equation}
\sum_k\alpha_k^{(l)}\le9C_Aa_l\sqrt D\log(8N).
\label{eq:level-normalization-bound}
\end{equation}
Since $\epsilon_l/2\le4\lambda_l$ and $a_l/\epsilon_l\le8\pi$,
\Cref{lem:constant-selection} with $A_k=H_k^{(l)}$,
subnormalization factors $\alpha_k^{(l)}$, $\varepsilon=\epsilon_l/2$,
and $p=1/2$ uses $N_q^{(l)}=O(s_ln^2)$ segments, with the
accepted product satisfying
$\norm{V_l^{\rm true}-e^{iH'^{(l)}}}\le\epsilon_l/2$.
The generator-rounding bound from \Cref{lem:roundedencoding} is
\[
\norm{\widetilde K_k^{(l)}-H_k^{(l)}}
\le16\pi\alpha_k^{(l)}b2^{-B}.
\]
Duhamel's formula and product telescoping therefore give
\begin{equation}
\begin{split}
\norm{V_l-e^{iH'^{(l)}}}
&\le\frac{\epsilon_l}{2}
 +16\pi b2^{-B}\sum_u\alpha_{K_{l,u}}^{(l)}\tau_{l,u}\\
&\le\epsilon_l.
\end{split}
\label{eq:level-rounded-error}
\end{equation}
The last inequality follows by bounding the weighted-time sum
by $4\sum_k\alpha_k^{(l)}$ and applying
\Cref{eq:level-normalization-bound,eq:wordlength} with
$a_l\le8\pi\epsilon_l$.

In every branch, substituting
$W_l=e^{i\varphi_l}P_l^TV_lP_lW_{l-1}$ and
$R_l=e^{i\varphi_l}P_l^Te^{iH'^{(l)}}P_l$ into
$R_{l+1}=UW_l^{-1}$ gives
\begin{equation}
R_{l+1}=P_l^Te^{iH'^{(l)}}V_l^{-1}P_l.
\label{eq:level-residual-update}
\end{equation}
Hence
$\norm{U-W_l}=\norm{R_{l+1}-I}
=\norm{e^{iH'^{(l)}}-V_l}\le\epsilon_l$, closing the induction.
This bounds the entire residual, including the errors of
preceding levels.

\emph{Corrections and stopping.}
If $G_{0,j}=0$, the phase update gives $W^{(j)}=U$ and the
algorithm stops. Otherwise \Cref{lem:flattening,lem:roundedencoding}
give
\[
\begin{aligned}
\alpha_j&\le2C_A\sqrt D\norm{G_{0,j}}\log(8N),\\
\rho_j&\le16\pi\alpha_jb2^{-B}.
\end{aligned}
\]
Using \Cref{eq:correction-ideal-update} and
$R^{(j)}=e^{i\chi_j}(P_j^{\rm c})^Te^{iG'_j}P_j^{\rm c}$,
the next residual is
\begin{equation}
\begin{aligned}
R^{(j+1)}&=U(W^{(j)})^{-1}\\
&=(P_j^{\rm c})^Te^{iG'_j}e^{-i\widetilde G_j}P_j^{\rm c}.
\end{aligned}
\label{eq:correction-residual-update}
\end{equation}
Unitary invariance and Duhamel's formula then imply
\[
\begin{aligned}
\norm{U-W^{(j)}}&=\norm{R^{(j+1)}-I}\\
&=\norm{e^{iG'_j}-e^{i\widetilde G_j}}\le\rho_j.
\end{aligned}
\]
Thus $\rho_j\le\epsilon/8$ guarantees
$\norm{U-W^{(j)}}\le\epsilon/8$.

If levels were used, $s_{l_*}=0$ gives
$\norm{G_{0,1}}\le a_{l_*}<nL/(c_s\sqrt D)$; otherwise
$\norm{G_{0,1}}\le2\pi$ and $\sqrt D\le\sqrt2\,nL$.
In both cases, the flattening bound gives $\alpha_1=O(n^2L)$.
By \Cref{eq:wordlength}, $\rho_j\le N^{-2}\alpha_j$, so
$\rho_1<1$ for sufficiently large $n$ because $L\le N$.

Whenever $\rho_j<1$, the eigenphase bound for
$\norm{R^{(j+1)}-I}\le\rho_j$ gives
$\norm{G_{0,j+1}}\le\pi\rho_j$; combining this with the
flattening and rounding bounds yields
\begin{equation}
\begin{split}
\alpha_{j+1}&\le32\pi^2C_A\,b\sqrt D\log(8N)\,2^{-B}\alpha_j\\
&\le N^{-2}\alpha_j.
\end{split}
\label{eq:correction-contraction}
\end{equation}
The last inequality uses \Cref{eq:wordlength}, and induction gives
$\alpha_j\le N^{-2(j-1)}\alpha_1$ and
$\rho_j\le N^{-2j}\alpha_1<1$ until the algorithm stops.
For $\alpha_1>0$, the stopping test is satisfied whenever
$j\ge1$ and $2j\log N\ge\log(8\alpha_1/\epsilon)$.
Together with $\alpha_1\le Cn^2N$ and $L=n+\log_2(1/\epsilon)$,
this gives $J=O(1+L/n)$; a zero traceless generator stops
the algorithm immediately.

\emph{Error analysis.} The ideal-product error
is at most $\epsilon/8$; allocate $\epsilon/8$ each to simulation
truncation, rotation synthesis, catalyst preparation and
unpreparation, and phase synthesis.

Let $N_{\rm sim}$ count the sampled level evolutions and correction
evolutions that are not skipped.
For $N_{\rm sim}>0$, applying \Cref{lem:hamsim} with
$\delta=\epsilon/(8N_{\rm sim})$ per call gives total full-output
error at most $\epsilon/8$, including ancilla leakage.
The bounds $l_*=O(n)$, $N_q^{(l)}=O(s_ln^2)$, and
$J=O(1+L/n)$ make $N_{\rm sim}$ polynomial in $N$, so
$\log(1/\delta)=O(L)$ whenever simulations occur.

If the QSVT circuits use $N_{\rm rot}>0$ variable-angle rotations,
synthesize each to operator-norm error $\epsilon/(8N_{\rm rot})$, giving total
error at most $\epsilon/8$ by gate telescoping.
The query bounds in \Cref{lem:hamsim} make $N_{\rm rot}$
polynomial in $N$, so each rotation costs $O(L)$ by
\Cref{eq:single-qubit-synthesis}.
Synthesize the accumulated phase
$\Phi=\sum_{l=0}^{l_*-1}\varphi_l+\sum_{j=1}^J\chi_j$
to error $\epsilon/8$ using \Cref{eq:single-qubit-synthesis}.
The changes of basis, address permutations, and zero-ancilla
reflections have exact Clifford+$T$ implementations.

With the exact catalyst $\ket\Psi$, queries to the block encodings
of $\widetilde K_k^{(l)}$ and $\widetilde G_j$ are exact and
preserve $\ket\Psi$.
Preparing $\ket\Psi$ to error $\delta_c=\epsilon/16$ and reversing
the preparation after all queries contributes at most
$2\delta_c=\epsilon/8$, by comparison with the exact-catalyst
circuit, independently of the number of queries.
Together with the ideal-product error, these contributions give
total full-output error at most $5\epsilon/8<\epsilon$ by the
composition rule following \Cref{def:implementation}.

\emph{Classical trials.} At a level requiring random choices,
the joint flattening and splitting test succeeds with probability
at least $3/4$, by a union bound for two failure parameters $1/8$.
Conditional on acceptance, the qDRIFT accuracy and weighted-time
tests succeed together with probability at least $1/2$
(\Cref{lem:constant-selection}); each correction's flattening
test succeeds with probability at least $3/4$.
Thus each test takes a constant expected number of fresh trials,
giving $O(n+L/n)$ trials over $O(n)$ levels and $O(1+L/n)$
corrections.
\end{proof}

\subsection{Resource bounds}
\label{sec:refinement-cost}

\begin{proof}[Proof of \Cref{thm:refined}: resources]
\emph{Levels.} At each sampled level $0\le l<l_*$,
\Cref{lem:hamsim} implements the $N_q^{(l)}$ evolutions
$e^{i\tau_{l,u}\widetilde K_{K_{l,u}}^{(l)}}$, each to error
$\delta=\epsilon/(8N_{\rm sim})$, using a total of
\[
Q_l=O\!\left(
\sum_{u=1}^{N_q^{(l)}}\alpha_{K_{l,u}}^{(l)}\tau_{l,u}
+N_q^{(l)}\log(1/\delta)\right)
\]
queries to the block encodings of $\widetilde K_k^{(l)}$.
The weighted-time test in \Cref{eq:level-acceptance},
$\log(1/\delta)=O(L)$, $N_q^{(l)}=O(s_ln^2)$, and
\Cref{eq:level-normalization-bound,eq:sl} give
\[
Q_l=O\Big(\sum_{k\in\mathcal I_l}\alpha_k^{(l)}+N_q^{(l)}L\Big)
=O(a_l\sqrt D\,n),
\]
while identity levels use no queries.
Since $s_l\ge1$ implies $s_l\ge c_s\sqrt Da_l/(2nL)$,
the per-query cost $O(N\sqrt{B/s_l}+b^2B)$ with $B=O(n)$
gives the level cost
\begin{equation}
O\big(n^2ND^{1/4}\sqrt{a_lL}+n^2a_l\sqrt D\,b^2\big).
\label{eq:refinement-level-cost}
\end{equation}
Summing over $0\le l<l_*$, with
$\sum_l\sqrt{a_l},\sum_la_l=O(1)$, gives
\[
O\big(n^2(ND^{1/4}\sqrt L+\sqrt D\,b^2)\big).
\]
Balancing these terms with $b=\Theta(N^{3/7}L^{2/7})$
gives $O(n^2N^{8/7}L^{3/7})$.

\emph{Corrections.} Simulating $e^{i\widetilde G_1}$ requires
$O(\alpha_1+L)=O(n^2L)$ queries to the block encoding of
$\widetilde G_1$. For $2\le j\le J$ with $G_{0,j}\ne0$,
\Cref{eq:correction-contraction} gives
$e\alpha_j/2\le Cn^2N^{3-2j}<1$ and
$\log(2/(e\alpha_j))=\Omega((j-1)n)$.
The short-time simulation bound in \Cref{lem:hamsim} therefore
gives $O(1+L/((j-1)n))$ queries to the block encoding of
$\widetilde G_j$, totaling $O(1+(L/n)\log L)$ over
$2\le j\le J=O(1+L/n)$; zero generators require no queries.
At $O(N\sqrt B+b^2B)$ per query, all corrections cost
$O(n^{5/2}NL+n^3b^2L)$.

For $L\le L_0$, the chosen block size gives
\[
b^2L\le4N^{6/7}L^{11/7}\le4N^{8/7}L^{3/7},
\]
using $L\le N^{1/4}$. For $L>L_0$, either $D=N$ and $b=1$, or
$b^2B\le N^2B/(n^4L^4)=O(N\sqrt n)$ by the definition of $L_0$.
Thus in both regimes the correction cost is
$O(n^{5/2}NL+n^3N^{8/7}L^{3/7})$.

\emph{Remaining resources.} QSVT uses $O(1)$ variable-angle
rotations per query to the block encoding of a rounded piece
$\widetilde K_k^{(l)}$ or correction generator $\widetilde G_j$,
each costing $O(L)$ by
\Cref{eq:single-qubit-synthesis}; their total cost is
\[
O(n\sqrt D\,L+n^2L^2)=O(n^3NL),
\]
using $L\le N$. The changes of basis cost
$O((n+L/n)\sqrt N)$, reflections $O(n)$ per query,
catalyst preparation and unpreparation $O(nL)$, and the accumulated
phase $O(L)$, all within the $T$-count in \Cref{eq:refinement-total-cost}.
Sequential queries reuse $O(N\sqrt B+b^2B)=O(nN)$ ancilla qubits,
which also suffice for the catalyst, logical registers, and sign oracle.
\end{proof}

\paragraph*{Comparison of precision dependence.}
For $L=N^{\gamma+o(1)}\le N$ with fixed $\gamma\ge0$,
\Cref{thm:refined} has a smaller leading power of $N$ than Tan's
bound~\cite{tan2025unitary} when $\gamma<1$ and that of
Ref.~\cite{yuan2026general} when $\gamma<2/3$, ignoring polynomial
factors in $n$.
Including these factors, $L=o(N^{2/3}/n^{16/3})$ suffices for
improvement over the latter bound, which is smaller among those
compared here for $N^{2/3}<L\le N$.
\Cref{thm:refined} covers $L\le N$, and Tan's bound also applies
for $L>N$.

For $N^{1/4}\le L\le N$, our bound is $O(n^3NL)$, with the
$\Ot(NL)$ term arising from the first correction's $O(n^2L)$ queries
to the block encoding of $\widetilde G_1$.
The gap to the counting scale $\Omega(N\sqrt L)$ is a factor
$\sqrt L$, up to polynomial factors in $n$; here
$L=\Theta(\log_2(1/\epsilon))$.

\subsection{Effective classical compilation}
\label{sec:effective-compilation}

The constructions underlying \Cref{thm:constant,thm:refined} treat
matrix logarithms, exponentials, norms, comparisons, rounding, and
sampling from exact finite distributions as oracle operations.
Their probability estimates bound the expected number of resampling
trials by a constant for the constant-precision construction and by
$O(n+L/n)$ in total for the refined construction.
These trial counts do not bound classical running time: the oracles
need not have unit cost on a rational-arithmetic RAM, and certified
entry approximations alone may not decide zero tests or rounding ties.

Finite circuit search avoids these decisions by using the strict
error margins in the existence proofs. It yields the following
computability guarantee, with no polynomial bound on classical
compilation time.

\begin{proposition}[Effective compilation]
\label{prop:effective-compiler}
Suppose that $n\ge1$, the entries of $U\in U(N)$ are supplied by
algorithms returning certified rational approximations to any requested
accuracy, and $\epsilon>0$ is rational or supplied by such an algorithm.
Given a choice of \Cref{thm:constant,thm:refined} and a promise that
$(n,\epsilon)$ lies in its parameter range, a deterministic algorithm
terminates and outputs a finite Clifford+$T$ circuit with full-output
implementation error strictly less than $\epsilon$ and the selected
theorem's asymptotic $T$- and ancilla bounds.
The compiler relies on the parameter-range promise and requires only
computability, with no bound on the time to approximate the entries.
\end{proposition}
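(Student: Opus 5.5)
The plan is to turn the existence proofs of \Cref{thm:constant,thm:refined} into a terminating finite search, and never decide exact equalities. First compute from $n$ and $\epsilon$ (or a rational lower approximation $\epsilon'\in(\epsilon/2,\epsilon)$ for computable $\epsilon$) the explicit parameters of the selected theorem: $b,s,\ell,N_q$ for \Cref{thm:constant}, or $b,B,l_*,s_l$ and the correction budget $J_{\max}=O(1+L/n)$ for \Cref{thm:refined}. The existence proofs use only finitely many circuit templates under these parameters. Each template fixes the gate structure, lookup table sizes, and numbers of queries and rotations. It leaves free only a finite list of discrete data: sign vectors $\sigma,\tau$, partitions, sampled indices, rounded angle words, and Clifford+$T$ rotation approximants of bounded $T$-count. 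Durations and block data enter only through these rounded words, so every candidate is a finite Clifford+$T$ circuit whose $T$-count and ancilla count satisfy the theorem's bounds by construction.

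The compiler then enumerates all candidate circuits in a fixed order. For each candidate it computes the unitary $\widetilde U$ exactly, since Clifford+$T$ matrices have entries in $\mathbb Z[1/\sqrt2,i]$. It then tests the strict inequality $\norm{\widetilde U(\ket{0^m}\otimes I)-\ket{0^m}\otimes U}<\epsilon$ by a semi-decision procedure. That procedure refines rational approximations of $U$ (and of $\epsilon$) and bounds the operator norm from above by a certified Frobenius or interval estimate. A candidate is accepted once some refinement certifies the strict inequality. To guarantee termination even though tests on bad candidates may never halt, the tests are dovetailed: at stage $t$, run the first $t$ candidates to precision $2^{-t}$.

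Termination needs a candidate whose true error is strictly below $\epsilon$, because then its test eventually succeeds. The accuracy proofs already give this. For \Cref{thm:constant} the error is at most $7\epsilon/8$, and for \Cref{thm:refined} it is at most $5\epsilon/8$. Those proofs establish that good choices of random data exist with positive probability, and the compiled circuit they produce lies in the enumerated family. The one remaining point is that the proofs compute rounding with exact real data, for example the logarithm, $g$, and the residuals. I would check that the margin of at least $\epsilon/8$ absorbs replacing that exact data with any fixed nearby rational data. Alternatively, I would observe that the specific circuit from the existence proof, however it was defined, is one member of the finite family with error $\le 5\epsilon/8$ or $\le 7\epsilon/8$.

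The main obstacle is making the candidate family genuinely finite and independent of $U$ while keeping the resource bounds. In particular, the refined construction's number of corrections and the rotation approximants must be bounded a priori. I would fix $J_{\max}$ from the termination bound $J=O(1+L/n)$, pad with identity corrections, and restrict rotation approximants to $T$-count $O(L)$. Once the family is finite, the search is plainly deterministic and terminating, and no polynomial time bound is claimed.
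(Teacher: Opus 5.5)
Your overall architecture matches the paper's: a witness with strict error margin from the existence proofs, a finite candidate family, and a certified test. The gap is in the step that termination depends on, namely that the witness lies in your enumerated family, and the way you build that family works against it.

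Your templates are indexed by ``explicit parameters'' that are not functions of $(n,\epsilon)$ alone:
\begin{itemize}
\item $N_q$ in \Cref{eq:qdrift-segments} depends on $\lambda=\sum_k\norm{H_k}$.
\item The per-segment QSVT query counts depend on $\alpha_{K_j}\tau_j$.
\item The early exits ($H_0=0$, $\lambda<\epsilon/4$), the skipped levels, the value of $J$, and the stopping test $\rho_j\le\epsilon/8$ all depend on exact comparisons of real data derived from $U$.
\item For \Cref{thm:constant} the queries are compiled by \Cref{lem:tan}, whose Clifford+$T$ circuit is a black box, so no template for it has been fixed.
\end{itemize}

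Your first fallback, perturbing the exact data within the $\epsilon/8$ margin, fails. These decisions, together with the floors, power-of-two roundings, and angle-rounding ties, are discontinuous in the data. That is exactly why the exact-data construction is not itself computable.

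Your second fallback is right in spirit, but it only holds for a family guaranteed to contain every possible output of the construction. The paper gets this for free. For $k=1,2,\ldots$ it enumerates \emph{all} circuits with at most $kt$ $T$ gates on $n+ka$ wires, where $t,a$ are rational budgets within constant factors of $F_T,F_A$. Each stage is finite because the Clifford group on $q$ wires is a finite matrix group (phases retained). Hence every such circuit equals some $C_hT_{i_h}\cdots T_{i_1}C_0$ with $h\le kt$. The witness lies in stage $K\ge\max\{1,c_T,c_A\}$ whatever its internal structure. The compiler therefore never needs $N_q$, $\lambda$, $C_A$, $c_s$, or the hidden constants.

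Two further points need repair.
\begin{itemize}
\item Computing the parameters from $\epsilon'<\epsilon$ leaves the theorem's range when $\epsilon$ sits at the lower endpoint of \Cref{thm:constant} or at $L=N$ in \Cref{thm:refined}. For those parameters no witness is guaranteed. The paper uses $r\in(\tfrac{31}{32}\epsilon,\epsilon)$ and $\overline L$ only to set budgets, keeping the witness at the original $\epsilon$.
\item A Frobenius bound on the error map $E=\widetilde U(\ket{0^m}\otimes I)-\ket{0^m}\otimes U$ can exceed its operator norm by a factor up to $\sqrt N$. It may therefore never certify a witness whose operator-norm error is $7\epsilon/8$. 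Use the Frobenius norm only for $\norm{E-Q}$, with $Q$ a rational approximation, and certify $\norm Q$ itself. The paper does this by testing positive semidefiniteness of $s^2I-Q^\dagger Q$.
\end{itemize}
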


\begin{proof}
We use $\mathsf H,S,\mathrm{CNOT},T$ on specified wires, retaining
their matrix phases and replacing $T^\dagger$ by $S^\dagger T$
without changing the $T$-count.
All auxiliary registers start in $\ket0$, as in
\Cref{def:implementation}.

\emph{Strict witnesses at the original precision.}
The proof in \Cref{app:constant} allocates errors
$\epsilon/4,\epsilon/4,\epsilon/8,\epsilon/8,\epsilon/8$,
whose sum is $7\epsilon/8$; its early exits have no larger error.
The proof in \Cref{sec:refinement-errors} allocates five contributions
of at most $\epsilon/8$, whose sum is $5\epsilon/8$.
Thus either proof establishes the existence of a circuit with full-output
error at most $7\epsilon/8$ and the stated quantum resources, even if
its exact-data choices are not computable.
These witnesses use the original $\epsilon$, preserving both the lower
endpoint in \Cref{thm:constant} and the endpoint $L=N$ in
\Cref{thm:refined}.

\emph{Computable resource scales.}
With $L=n+\log_2(1/\epsilon)$, define the comparison scales
$(F_T,F_A)$ for \Cref{thm:constant,thm:refined} by
\[
\begin{array}{ll}
F_T=n^2L N^{8/7}\epsilon^{-6/7},
&F_A=nL N^{6/7}\epsilon^{-8/7},\\[2pt]
F_T=n^3\bigl(N^{8/7}L^{3/7}+NL\bigr),
&F_A=nN,
\end{array}
\]
respectively. These scales are at least one on the stated parameter
ranges, and the witnesses satisfy $T\le c_TF_T$ and $m\le c_AF_A$
for fixed absolute constants $c_T,c_A$.
The compiler uses rationally computed integer budgets and increases
them to accommodate these constants.

First choose a positive rational $r$ such that
\[
\frac{31}{32}\epsilon<r<\epsilon.
\]
For computable real $\epsilon$, refine certified rational bounds
$0<l\le\epsilon\le u$ until
$(31/32)u<l$, and choose $r$ strictly between $(31/32)u$ and $l$.
Let $p$ be the least nonnegative integer such that $r2^p\ge1$,
found by rational comparisons, and put $\overline L=n+p$.
Since $\epsilon/2<r<\epsilon\le1$, minimality gives
\[
\log_2(1/\epsilon)\le p<\log_2(1/\epsilon)+2,
\qquad L\le\overline L\le3L.
\]
The search uses $\overline L$ only to set resource budgets; it may
exceed $N$ at $L=N$, since the witness still uses the original
$\epsilon$ and $L$.

For nonnegative rational $x$, let $R_7(x)$ be the least nonnegative
integer $j$ with $x\le j^7$, found by rational comparisons; then
\[
x^{1/7}\le R_7(x)<x^{1/7}+1.
\]
For \Cref{thm:constant}, define the integer budgets by
\[
\begin{aligned}
 t&=n^2\overline L\,R_7(N^8/r^6),\\
 a&=n\overline L\,R_7(N^6/r^8).
\end{aligned}
\]
For \Cref{thm:refined}, instead use
\[
\begin{aligned}
 t&=n^3\bigl(R_7(N^8\overline L^3)+N\overline L\bigr),\\
 a&=nN.
\end{aligned}
\]
All arguments of $R_7$ are rational. Using
$r>\epsilon/2$, $L\le\overline L\le3L$, and the fact that the relevant
root factors are at least one gives
\[
F_T\le t\le9F_T,\qquad F_A\le a\le15F_A
\quad\hbox{for \Cref{thm:constant}},
\]
and
\[
F_T\le t\le3F_T,\qquad a=F_A
\quad\hbox{for \Cref{thm:refined}}.
\]
Thus the rational-arithmetic budgets satisfy
$F_T\le t\le15F_T$ and $F_A\le a\le15F_A$ in either regime.

\emph{A finite family at each resource budget.}
For each fixed number $q$ of wires, the group $\mathcal C_q$
generated by the elementary Clifford gates $\mathsf H,S,\mathrm{CNOT}$ is
finite as a group of matrices with their scalar phases retained.
Conjugation permutes the finite set of signed Hermitian Pauli
matrices. A matrix in the kernel commutes with all single-wire
$X$ and $Z$ matrices and is therefore $zI$.
Each elementary Clifford generator has determinant in
$\{1,-1,i,-i\}$, so this scalar kernel satisfies
$z^{4\cdot2^q}=1$; the finite image and kernel imply that
$\mathcal C_q$ is finite.
Its entries lie in $\mathbb Q(e^{i\pi/4})$, where exact arithmetic and
equality are decidable, so breadth-first closure under the generators
terminates with a finite gate word for each element of $\mathcal C_q$.

For integer budgets $A,B$, set $q=n+A$ and enumerate the finite set
of matrix products
\[
C_h T_{i_h} C_{h-1}\cdots T_{i_1}C_0,
\]
where $0\le h\le B$, $C_j\in\mathcal C_q$, and $1\le i_j\le q$.
Grouping consecutive Clifford gates puts every circuit with at most
$B$ occurrences of $T$ in this finite family, regardless of its
Clifford-word length. Circuits with fewer than $A$ ancillas are included
by adding unused clean ancillas, preserving their $T$-count and
full-output error.

\emph{A terminating test for every candidate.}
Write $d=2^{n+A}$, let
$J_A=\ket{0^A}\otimes I_N:\mathbb C^N\longrightarrow\mathbb C^d$,
and for a candidate matrix $C$ put
\[
E_C=CJ_A-J_AU,\qquad e_C=\norm{E_C}.
\]
This full-output error map includes leakage out of the clean-ancilla
subspace. For any positive rational $\eta$, compute a rational upper
bound $u_C$ satisfying
\[
e_C\le u_C\le e_C+\eta
\]
in finite time as follows. Compute a rational complex matrix $Q$
approximating every entry of $E_C$ to within $\eta/(4dN)$. Then
$\norm{E_C-Q}\le\norm{E_C-Q}_{\mathrm F}\le\eta/4$.
Starting with $j=0$, test the rational numbers $s_j=j\eta/2$
until $s_j^2I_N-Q^\dagger Q$ is positive semidefinite.
Each test checks that all principal minors of this rational Hermitian
matrix are nonnegative. The loop terminates at a number $s$ satisfying
$\norm Q\le s<\norm Q+\eta/2$, except that $s=\norm Q=0$
is also possible. Hence $u_C=s+\eta/4$ satisfies
the stated error certificate using only rational comparisons.

Use this test with $\eta=r/64$, and accept a candidate precisely when
the resulting rational upper bound satisfies $u_C<r$.
Every candidate receives a finite accept-or-reject test; acceptance
certifies $e_C<r<\epsilon$, and every witness with
$e_C\le7\epsilon/8$ is accepted because
\[
u_C\le\frac78\epsilon+\frac r{64}
 <\frac{57}{64}\epsilon
 <\frac{31}{32}\epsilon<r.
\]

\emph{Search, termination, and resources.}
For $k=1,2,\ldots$, set $A=ka$ and $B=kt$, enumerate
the finite normal-form family above, and apply the finite test to every
candidate, returning the first one accepted and otherwise exhausting
the stage before proceeding.
Fix any integer $K\ge\max\{1,c_T,c_A\}$.
Each stage terminates, and stage $K$ contains a witness with unused
clean ancillas added if needed, so the search returns a circuit at
some $k\le K$ with
\[
T\le Kt=O(F_T),\qquad m\le Ka=O(F_A).
\]
The constants are uniform in $U,n,\epsilon$, although the compiler
need not know $K$; its output uses finitely many Clifford gates, whose
number is unrestricted by the theorems.
The full matrix norm retains scalar phase, and the ancilla bound
includes the phase-synthesis ancilla from the existence proofs.
The compiler uses certified input approximations, finite circuit
enumeration, and rational error certificates.
\end{proof}

\section{Lower bounds in the sequential block-SELECT model}
\label{app:optimality}

The proof of \Cref{thm:optimality} uses three ingredients.
\Cref{lem:sparsification} compresses each generator by sampling blocks.
\Cref{lem:sparsification-composition} controls the resulting product error.
Together they supply the compression used in
\Cref{lem:compression-information}, which bounds the description length
required by a typical target. In \Cref{app:allocation}, we optimize
the allocation to turn this information bound into a cost lower bound.

\subsection{Unbiased block sparsification}
\label{app:sparsification}

\begin{lemma}[Unbiased block sparsification]
\label{lem:sparsification}
Let $K=H_k\ne0$ be a Hermitian generator with the block parameters
of \Cref{def:design}, and put $w_k=\sum_{I,J}\norm{K_{IJ}}$.
For $\epsilon_k>0$, $\delta\in(0,1)$, and any integer
\begin{equation}
t\ \ge\ t_k(\epsilon_k,\delta):=3D\Big(1+\frac{\beta_k^2}{\epsilon_k^2}\Big)\log\frac{2N}\delta ,
\label{eq:sparsify}
\end{equation}
there is a Hermitian estimator $\widehat K$ formed from $t$
independent sampled adjoint block pairs, each of norm at most $w_k/t$.
It satisfies $\E\widehat K=K$, $\norm{\widehat K}\le w_k\le D\beta_k$,
and $\Pr(\norm{\widehat K-K}\le\epsilon_k)\ge1-\delta$.
\end{lemma}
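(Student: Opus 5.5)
We will use the standard importance-sampling estimator for matrix sparsification and control its deviation with the matrix Bernstein inequality, in the form already used in the proof of \Cref{lem:split}. Index the adjoint block pairs of $K$ by $p$, letting $B_p$ be the Hermitian matrix that retains the pair $K_{IJ},K_{JI}$ (or the single diagonal block) and is zero elsewhere. Set the weight $w_p=\norm{K_{IJ}}+\norm{K_{JI}}$ for an off-diagonal pair and $w_p=\norm{K_{II}}$ for a diagonal block, so that $\sum_pw_p=w_k$. Sample $p_1,\ldots,p_t$ independently with $\Pr(p_i=p)=w_p/w_k$, restricted to nonzero pairs, and put
\[
\widehat K=\frac1t\sum_{i=1}^tX_i,\qquad X_i=\frac{w_k}{w_{p_i}}B_{p_i}.
\]
Each summand is Hermitian and $\E X_i=K$. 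Because $\norm{B_p}=\max\{\norm{K_{IJ}},\norm{K_{JI}}\}\le w_p$, every sampled pair, scaled by $1/t$, has norm at most $w_k/t$. The triangle inequality then gives $\norm{\widehat K}\le w_k$. For the bound $w_k\le D\beta_k$, observe that column $J$ contains at most $m_k$ nonzero blocks. Cauchy--Schwarz gives $\sum_I\norm{K_{IJ}}\le\sqrt{m_k}\,c_k=\beta_k$, and summing over the $D$ columns gives $w_k\le D\beta_k$.

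For concentration, apply matrix Bernstein to $Z_i=(X_i-K)/t$. The uniform bound is $\norm{Z_i}\le 2w_k/t\le 2D\beta_k/t$. The key step is the variance estimate. We have $\E X_i^2\preceq\sum_p (w_k/w_p)B_p^2$. Since $B_p^2$ is block diagonal with blocks $K_{IJ}K_{IJ}^\dagger$ and $K_{JI}K_{JI}^\dagger$, we get $B_p^2\preceq w_p\,\bigoplus_I(\cdot)$ with entries of norm at most $w_p\norm{K_{IJ}}$. This gives
\[
\Bigl\|\sum_p\frac{w_k}{w_p}B_p^2\Bigr\|\le w_k\max_I\sum_J\norm{K_{IJ}}\le w_k\beta_k\le D\beta_k^2,
\]
where the row sum is bounded by $\beta_k$ through the same Cauchy--Schwarz argument, using Hermiticity to exchange rows and columns. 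The variance parameter is therefore $v\le D\beta_k^2/t$ and $R\le2D\beta_k/t$. Bernstein gives failure probability $2N\exp(-\epsilon_k^2/(2v+\tfrac43R\epsilon_k))$. This is at most $\delta$ once $t\ge(2D\beta_k^2/\epsilon_k^2+\tfrac43D\beta_k/\epsilon_k)\log(2N/\delta)$, and since $\beta_k/\epsilon_k\le\tfrac12(1+\beta_k^2/\epsilon_k^2)$, this requirement is implied by $t\ge3D(1+\beta_k^2/\epsilon_k^2)\log(2N/\delta)$.

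The main obstacle is the semidefinite variance bound. The cross terms $K_{IJ}K_{JI}$ in $B_p^2$ vanish only on the off-diagonal block positions, so I need to check carefully that $B_p^2$ is block diagonal with the stated blocks. I also need to check that $\norm{K_{IJ}}^2/w_p\le\norm{K_{IJ}}$ is the correct per-block inequality. If the bound through $\max_I\sum_J$ turns out too lossy, the fallback is to bound $\norm{\sum_J K_{IJ}K_{IJ}^\dagger\cdot w_k/w_p}$ directly by $w_k\beta_k$. The weaker term $w_k\le D\beta_k$ leaves slack, and the constant $3$ in \Cref{eq:sparsify} absorbs it.
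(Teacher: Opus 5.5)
Your proposal is correct and follows the paper's proof essentially step for step. It uses the same importance-sampling estimator over adjoint block pairs (your $\frac{w_k}{t w_p}B_p$ is exactly the paper's $\frac{w_k}{2t}(E_{IJ}\otimes V_{IJ}+E_{JI}\otimes V_{IJ}^\dagger)$), the same Cauchy--Schwarz bounds $\max_I\sum_J\norm{K_{IJ}}\le\beta_k$ and $w_k\le D\beta_k$, the same block-diagonal second-moment bound giving variance $w_k\beta_k/t$, and matrix Bernstein with $2\beta_k/\epsilon_k\le1+\beta_k^2/\epsilon_k^2$. The only slip is harmless: the Bernstein denominator is $2v+\tfrac23R\epsilon_k$, which is what your stated threshold actually uses, and the constant $3$ in \Cref{eq:sparsify} would also cover the weaker form $2v+\tfrac43R\epsilon_k$ that you wrote.
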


\begin{proof}
Rowwise Cauchy--Schwarz and the block Schur test give
\begin{equation}
\max_I\sum_J\norm{K_{IJ}}\le\beta_k,\qquad \norm K\le\beta_k,\qquad w_k\le D\beta_k .
\label{eq:rowsum}
\end{equation}
Sample $t$ unordered block pairs independently. Pair $\{I,J\}$ with $I<J$
has probability $2\norm{K_{IJ}}/w_k$, and the diagonal pair $\{I,I\}$ has
probability $\norm{K_{II}}/w_k$. Here $E_{IJ}$ is the $D\times D$
matrix with a $1$ in position $(I,J)$ and zeros elsewhere.
A sampled pair contributes
\begin{equation}
\frac {w_k}{2t}\big(E_{IJ}\otimes V_{IJ}+E_{JI}\otimes V_{IJ}^\dagger\big)
\label{eq:sampled-block-pair}
\end{equation}
for $I<J$, or $\frac{w_k}tE_{II}\otimes V_{II}$ for a diagonal pair,
with $V_{IJ}=K_{IJ}/\norm{K_{IJ}}$ for a sampled nonzero block.
Their sum $\widehat K$ is Hermitian, unbiased, and has
$\norm{\widehat K}\le w_k$. Centered sample norms are at most $2w_k/t$.
The $I$-th block of a sample's second moment is at most
$(w_k/t^2)\sum_J\norm{K_{IJ}}$, giving total variance at most
$w_k\beta_k/t$. Matrix Bernstein~\cite{tropp2012user}, with
$w_k\le D\beta_k$ and $2\beta_k/\epsilon_k\le1+\beta_k^2/\epsilon_k^2$,
gives the claimed tail bound for \Cref{eq:sparsify}.
\end{proof}

\subsection{Composition of sparsification errors}
\label{app:composition}

For a design, an \emph{allocation} consists of a subset
$\mathcal O\subseteq\{k:H_k\ne0\}$ and a parameter $a>0$.
Each selected generator receives the positive error tolerance
\begin{equation}
\epsilon_k^2=\nu\beta_k/a,\qquad \nu:=\frac{\epsilon^2}{72\log(4N)} .
\label{eq:sparsification-error-allocation}
\end{equation}
The allocation is \emph{valid} if $\sum_{k\in\mathcal O}\beta_k\le a$, so that
$\sum_{k\in\mathcal O}\epsilon_k^2\le\nu$.

\begin{lemma}[Composition of sparsification errors]
\label{lem:sparsification-composition}
Fix a design with product $X=W_se^{iH_s}\cdots W_1e^{iH_1}W_0$,
an error $0<\epsilon\le1/2$, and a valid allocation $(\mathcal O,a)$.
For $k\in\mathcal O$, draw independent estimators from
\Cref{lem:sparsification} with $\delta_k=\min\{1/2,\nu/(4Da)\}$
and $\lceil t_k(\epsilon_k,\delta_k)\rceil$ samples; denote this
integer count by $t_k$. Condition each estimator separately on
$G_k=\{\norm{\widehat H_k-H_k}\le\epsilon_k\}$.
Keeping $H_k$ for $k\notin\mathcal O$, let $\widehat X$ be the resulting
product. Under these conditioned laws,
\[
\Pr(\norm{\widehat X-X}\le2\epsilon)\ge1/2.
\]
\end{lemma}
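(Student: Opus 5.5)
Write $E_k=\widehat H_k-H_k$ for $k\in\mathcal O$ and $E_k=0$ otherwise. The plan is to expand $\widehat X-X$ to first order in the $E_k$, so that the leading error is a sum of independent, nearly mean-zero operators. We then apply a matrix martingale or Bernstein-type bound to this sum, and absorb the bias and second-order remainder using $\sum_k\epsilon_k^2\le\nu$. The key point is that independence lets the errors add in quadrature, giving $\sqrt{\nu}\sim\epsilon/\sqrt{\log N}$, rather than linearly, which would give $\sum_k\epsilon_k$.

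\emph{Step 1: telescoping.} Define the hybrid products $X_j$ in which the first $j$ generators are sampled and the rest are exact. Then
\[
\widehat X-X=\sum_k A_k\bigl(e^{i\widehat H_k}-e^{iH_k}\bigr)B_k,
\]
where $A_k$ is a product of exact factors and $B_k$ a product of sampled factors, all unitary. Since the $B_k$ depend on earlier samples, we treat the sum as a martingale in the sampling order, revealing $\widehat H_1,\widehat H_2,\dots$ in turn. The increment $Y_k=A_k(e^{i\widehat H_k}-e^{iH_k})B_k$ is then a martingale difference plus a bias term, with $\norm{Y_k}\le\epsilon_k$ on $G_k$.

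\emph{Step 2: bias from Duhamel.} Expanding $e^{i\widehat H_k}-e^{iH_k}$ gives a first-order term $\int_0^1e^{i(1-u)H_k}iE_ke^{iuH_k}\,du$, linear in $E_k$, plus a remainder of norm at most $\epsilon_k^2/2$. Before conditioning, $\E E_k=0$ by \Cref{lem:sparsification}. After conditioning on $G_k$, the mean changes by at most $\E[\norm{E_k}\mathbf 1_{G_k^c}]/\Pr(G_k)$. This is bounded by $2\delta_k\sup\norm{E_k}\le 2\delta_k(w_k+\norm{H_k})\le 4\delta_kD\beta_k$ because $\norm{\widehat H_k}\le w_k\le D\beta_k$. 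With $\delta_k\le\nu/(4Da)$, the summed conditional bias is therefore at most $\sum_k\nu\beta_k/a\le\nu$. The second-order remainders sum to at most $\nu/2$. Both quantities are far below $\epsilon$.

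\emph{Step 3: concentration.} Once the predictable biases are removed, the centered increments $Z_k$ satisfy $\norm{Z_k}\le2\epsilon_k$, and $\sum_k\norm{Z_k}^2\le4\nu$ deterministically. Embedding $Z_k$ in a Hermitian dilation of dimension $2N$, matrix Azuma (Tropp) gives
\[
\Pr\Bigl(\norm{\textstyle\sum_kZ_k}\ge t\Bigr)\le 2N\,e^{-t^2/(8\cdot 4\nu)}.
\]
Taking $t=\epsilon$ and $\nu=\epsilon^2/(72\log4N)$ makes the right-hand side at most $2N(4N)^{-72/32}<1/2$. Adding the contributions, $\norm{\widehat X-X}\le\epsilon+\nu+\nu/2+\text{(bias)}\le2\epsilon$ with probability at least $1/2$.

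\emph{Main obstacle.} The main obstacle is Step 3 together with the conditioning. We must check that the conditioned estimators remain independent across $k$, which holds because each is conditioned separately. We must also check that the martingale structure survives the fact that $B_k$ depends on earlier samples. If the bias bookkeeping is too lossy, the fallback is the cruder direct bound $\norm{\widehat X-X}\le\sum_k\epsilon_k$, but that yields only $\sqrt{|\mathcal O|\nu}$ and is not enough. So the Azuma route must work, and it is where the constant $72$ has to be verified.
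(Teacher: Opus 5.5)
Your proposal is correct and follows essentially the same route as the paper. The shared steps are: telescoping over hybrid products treated as a martingale in sampling order, the conditional-bias bound $\norm{\E[E_k\mid G_k]}\le4\delta_kD\beta_k\le\epsilon_k^2$, a second-order Duhamel remainder of $\epsilon_k^2/2$, and matrix Azuma on $2N$-dimensional Hermitian dilations, with the drift absorbed because $\nu\ll\epsilon$. The only difference is bookkeeping: you get $\norm{Z_k}\le2\epsilon_k$ (valid since $\epsilon_k\le\sqrt\nu\ll1$) where the paper uses the cruder $3\epsilon_k$, so your Azuma constant is $32$ instead of $72$, which just leaves slack in the choice of $\nu$.
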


\begin{proof}
Write $\Delta_k=\widehat H_k-H_k$ for $k\in\mathcal O$; expectations use the
independent conditioned laws. Before conditioning, each error has
mean zero and norm at most $2D\beta_k$ by \Cref{lem:sparsification}.
The acceptance probability is at least $1-\delta_k\ge1/2$, so
\[
\norm{\E\Delta_k}
\le\frac{2\delta_kD\beta_k}{1-\delta_k}
\le4\delta_kD\beta_k\le\epsilon_k^2.
\]

The second-order Duhamel expansion has a linear term bounded by
$\norm{\E\Delta_k}$ and a remainder bounded by
$\frac12\E\norm{\Delta_k}^2$, so
\begin{equation}
\begin{split}
\norm{\E e^{i(H_k+\Delta_k)}-e^{iH_k}}
&\le\norm{\E\Delta_k}+\tfrac12\E\norm{\Delta_k}^2\\
&\le2\epsilon_k^2 .
\end{split}
\label{eq:sparsification-exponential-bias}
\end{equation}
The unitary factors make this estimate uniform in $H_k$.

List the sampled generator indices as $k_1,\ldots,k_{\abs{\mathcal O}}$
in product order, and let $X_j$ replace the first $j$ of them.
Write $X_j-X_{j-1}=Z_j+d_j$, where $d_j$ is its conditional mean
given previous samples. Unitary prefixes and suffixes preserve
the bounds $\norm{d_j}\le2\epsilon_{k_j}^2$ and
$\norm{X_j-X_{j-1}}\le\epsilon_{k_j}$, hence
$\norm{Z_j}\le3\epsilon_{k_j}$ since $\epsilon_{k_j}<1$.
Matrix Azuma~\cite{tropp2012user}, applied to the Hermitian dilations
of the martingale differences $Z_j$, gives with probability at least $1/2$
\[
\norm{\sum_jZ_j}\le\sqrt{72\sum_j\epsilon_{k_j}^2\log(4N)}\le\epsilon.
\]

The drift is at most $\sum_j\norm{d_j}\le2\nu$, so on the same event
$\norm{\widehat X-X}\le\epsilon+2\nu\le2\epsilon$.
\end{proof}

\subsection{A counting argument}
\label{app:counting}

\begin{lemma}[Information bound for compressed designs]
\label{lem:compression-information}
Fix the interleaver decoding rule of \Cref{def:design}.
There are absolute constants $c,C>0$ such that, for sufficiently
large $n$ and $\epsilon\in[N^{-3},c]$, with probability at least
$1/2$ over a Haar-random $U\in U(N)$, the following holds
simultaneously for every power-of-two block size $b\le N$.
Every design at block size $b$ approximating $U$ to error $\epsilon$ with
$\mathrm{cost}(\mathcal D)\le N^3$, $s\log N\le N^2/C$, and
$B_W\le\frac14N^2\log_2(1/\epsilon)$ satisfies
\begin{equation}
\sum_{k\notin\mathcal O}n_k+\sum_{k\in\mathcal O}\min\{n_k,t_k\}\ \ge\ I_0:=\frac{cD^2}{\log N}
\label{eq:information}
\end{equation}
for every valid allocation $(\mathcal O,a)$ with $a\le N^3$.
Here $t_k$ is the integer sample count in
\Cref{lem:sparsification-composition}.
\end{lemma}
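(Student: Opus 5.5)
The plan is a counting (encoding) argument. From any low-cost design we build a short code word that decodes to a unitary close to $U$, and we show that Haar-typical $U$ admit no such short code.

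\emph{Covering lower bound.} First, by Szarek's covering bound~\cite{szarek1982nets}, an $\eta$-net of $U(N)$ in operator norm has at least $(c'/\eta)^{N^2}$ elements. Hence any fixed decoding rule with at most $2^{\mathcal B}$ code words covers, within distance $5\epsilon$, a set of Haar measure at most $2^{\mathcal B}(C'\epsilon)^{N^2}$. Taking $\mathcal B\le\frac12N^2\log_2(1/\epsilon)$ and $\epsilon\le c$ small, this measure is below $1/2$. So with probability at least $1/2$ the target $U$ needs more than $\frac12N^2\log_2(1/\epsilon)$ bits. The good event is defined once, independently of $b$, by placing $b$ inside the code word (costing $O(\log n)$ bits). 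This gives simultaneity over all block sizes.

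\emph{Compression.} Fix such a $U$, a design, and a valid allocation $(\mathcal O,a)$ with $a\le N^3$, and suppose \Cref{eq:information} fails. We encode in five parts.
\begin{enumerate}
\item Encode $b$, $s$, and $\theta$ rounded to precision $\epsilon$.
\item Encode the $B_W$ interleaver bits.
\item For each $k$, encode a flag saying whether $H_k$ is stored directly or sampled.
\item For $k\notin\mathcal O$, or $k\in\mathcal O$ with $n_k\le t_k$, store the positions and entries of the $n_k$ nonzero blocks of $H_k$, with entries rounded to precision $\epsilon/(\mathrm{poly}(N)\cdot s)$ relative to a norm scale.
\item For $k\in\mathcal O$ with $t_k<n_k$, use \Cref{lem:sparsification-composition} to fix one realization of the estimators on the success event, so that $\norm{\widehat X-X}\le2\epsilon$. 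Store its $t_k$ sampled adjoint pairs: addresses, the common scale $w_k$, and the rounded unitary parts $V_{IJ}$.
\end{enumerate}
We use one common realization for all sampled $k$; \Cref{lem:sparsification-composition} guarantees one exists.

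\emph{Bit count.} Each stored block costs $O(b^2\log N)$ bits. The precision loss is controlled since $\norm{H_k}\le\beta_k\le\mathrm{cost}(\mathcal D)\le N^3$, and rounding perturbs each exponential by $\epsilon/s$, so the total is $O(\epsilon)$. The bookkeeping overhead is $O(s\log N)\le N^2/C$. The total length is therefore at most
\[
\tfrac14N^2\log_2\tfrac1\epsilon+O\big(b^2\log N\big)\,I_0+\frac{N^2}{C}+O(\log N),
\]
and the decoded product is within $O(\epsilon)$ of $U$ after telescoping.

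With $I_0=cD^2/\log N$ and $b^2D^2=N^2$, the middle term is $O(c\,N^2)$. Since $\epsilon\le c$, we have $\log_2(1/\epsilon)\ge\log_2(1/c)$, so for small $c$ and large $C$ the total is below $\frac12N^2\log_2(1/\epsilon)$. This contradicts the covering bound.

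\emph{Main obstacle.} The delicate step is precision and norm bookkeeping. Rounded entries must keep the per-block bit cost at $O(b^2\log N)$ while the accumulated error, including sampling error $2\epsilon$ and rounding, stays at a constant multiple of $\epsilon$ compatible with the covering radius. The lower endpoint $\epsilon\ge N^{-3}$ makes $\log(1/\epsilon)=O(\log N)$, and the hypotheses $\mathrm{cost}\le N^3$ and $a\le N^3$ bound $\beta_k$ and $\delta_k$ polynomially. Both are needed to keep all logarithms at $O(\log N)$. I would handle this first by fixing a uniform grid of spacing $N^{-10}\epsilon$ on block entries and the scales $w_k$, and checking that it suffices.
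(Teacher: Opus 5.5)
Your proposal is correct and follows essentially the same route as the paper: one fixed code (recording $b$, $s$, the stored/sampled split, the block counts, and $\theta$) lets a single Szarek-based Haar event cover all block sizes, designs, and allocations. Each generator is then compressed either by direct storage or by the sparsified estimators of \Cref{lem:sparsification-composition}, applied to the subset $\{k\in\mathcal O: t_k<n_k\}$ (still a valid allocation), at $O(b^2\log N)$ bits per block. The differences are cosmetic: you round entrywise on a uniform grid where the paper uses an operator-norm net per block, and you should keep exact adjoint partners and Hermitian diagonal blocks (as the paper does) so the decoded product stays unitary, although doubling the covering radius would also absorb a non-unitary decoded center.
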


\begin{proof}
Fix a target, a design, and an allocation satisfying the hypotheses.
Among the selected generators, store those with $n_k\le t_k$ directly.
The remaining sampled set is a valid allocation with the same $a$,
so \Cref{lem:sparsification-composition} gives a product within
$2\epsilon$ of the design.
The hypotheses imply $s,\sum_k\beta_k\le N^3$,
$\log(1/\delta_k)=O(\log N)$, and $t_k\le N^{15}$ for large $n$.

All stored blocks have norm at most $N^4$: sampled blocks are bounded
by $w_k\le D\beta_k$ from \Cref{lem:sparsification}, and directly
stored blocks by $\beta_k\le N^3$.
Use an absolute operator-norm net of mesh $\epsilon''=\epsilon/N^{20}$
in this ball, with Hermitian diagonal blocks and exact adjoint partners.
Rounding preserves Hermiticity, so the resulting product remains unitary.
The net has at most $(3N^4/\epsilon'')^{2b^2}$ points, costing
$O(b^2\log N)$ bits per block since $\epsilon\ge N^{-3}$.

Fix one target-independent prefix-free decoder. Its header records
$\log_2b$, $s$, $\mathcal O$, block counts, and $\theta$ to precision
$\epsilon$, using $O(\log n+s\log N+\log(1/\epsilon))$ bits.
Append the $B_W$ interleaver bits and each block's position
($2\log_2D$ bits) and rounded value. Directly stored generators need
only their upper triangle; sampled blocks include their scale.
Block counts and values capture the dependence on the allocation parameter $a$.

There are at most $N^{15}$ stored blocks per generator and $N^3$ generators,
so rounding changes the product by at most
$2N^{18}\epsilon''\le\epsilon$. The code length satisfies
\begin{equation}
\begin{split}
\mathcal L&\le C b^2\log N\Big(\sum_{k\notin\mathcal O}n_k+\sum_{k\in\mathcal O}\min\{n_k,t_k\}\Big)\\
&\quad+B_W+O\big(\log n+s\log N+\log(1/\epsilon)\big) .
\end{split}
\label{eq:description-length}
\end{equation}
Design error, sampling, rounding, and phase error total
$\epsilon+2\epsilon+\epsilon+\epsilon=5\epsilon$.

\emph{Szarek's bound.} We next bound the description length required
to approximate a typical target. For an absolute $c_1>0$, the covering numbers of
$U(N)$ in operator norm are at least $(c_1/r)^{N^2}$~\cite{szarek1982nets}.
A maximal family of disjoint
radius-$r$ balls has centers forming a $2r$-net, so it contains at
least $(c_1/(2r))^{N^2}$ balls. By unitary invariance, these balls have
equal Haar measure; disjointness therefore bounds the measure of
each by $(C_1r)^{N^2}$ for an absolute $C_1\ge1$.

Fewer than $2^{\mathcal L+1}$ words of this fixed code have length
at most $\mathcal L$, and each decodes uniquely~\cite{cover2006elements}.
Their $5\epsilon$ neighborhoods have total measure at most
$2^{\mathcal L+1}(5C_1\epsilon)^{N^2}$. Thus at least half of Haar
targets require $N^2\log_2(1/(5C_1\epsilon))-2$ bits.
Because the header records the block size, subset, and counts,
this single Haar event covers all target-dependent choices of
designs, block sizes, and valid allocations.

Choose $c\le1/2$ so that
$\log_2(1/(5C_1\epsilon))\ge\frac23\log_2(1/\epsilon)$, and $C$
so that the header term in \Cref{eq:description-length} is at most
$\frac1{12}N^2\log_2(1/\epsilon)$ for large $n$.
Subtracting this term and the interleaver budget from the required
description length gives \Cref{eq:information}, after reducing $c$.
\end{proof}

\subsection{Optimizing the error allocation}
\label{app:allocation}

\begin{proof}[Proof of \Cref{thm:optimality}]
Use the constant $C$ and the Haar event of probability at least $1/2$
from \Cref{lem:compression-information}.
Designs with $\mathrm{cost}(\mathcal D)>N^3$ or $s\log N>N^2/C$
have cost $\Omega(N^2/\log N)$, which implies \Cref{eq:lowerbound}
since its target scales sum to at most $2N^2$.
For the remaining designs, the lemma gives \Cref{eq:information}
for every valid allocation. We choose an allocation for which the
selected generators have sampled descriptions with fewer blocks than
direct storage. This connects the information bound to the model cost.

Let $C'=4\log(8DN^4/\nu)=O(\log N)$.
By \Cref{lem:sparsification,lem:sparsification-composition}, the
rounded sample counts satisfy $t_k\le C'D(1+a\beta_k/\nu)$.
Call a generator \emph{small} if $n_k\le2C'D$ and
\emph{big} otherwise. For $a>0$ let
\begin{equation}
\mathcal O(a)=\{k\ \text{big}:\ n_k>2C'D\beta_ka/\nu\},
\label{eq:optimized-sampling-set}
\end{equation}
and put $A(a)=\sum_{k\in\mathcal O(a)}\beta_k$.
Selected generators satisfy $t_k<n_k/2+n_k/2=n_k$.
If all generators are small, use $\mathcal O=\emptyset$ and any $0<a\le N^3$;
\Cref{eq:information} gives case~(a) below.
Otherwise the non-increasing, right-continuous function $A$ crosses
the line $a$ at $a_*=\inf\{a>0:A(a)\le a\}\in(0,N^3]$.
Thus $A(a_*)\le a_*$ gives a valid allocation, while
$A(a_*^-)\ge a_*$ says that the big generators with
$n_k\ge2C'D\beta_ka_*/\nu$ have total weight at least $a_*$.
We call these the closed-threshold group.

Inserting the allocation $(\mathcal O(a_*),a_*)$ into~\Cref{eq:information},
\begin{multline}
\sum_{\rm small}n_k+\#\{\text{big}\}\cdot C'D\\
+\sum_{\text{big}\setminus\mathcal O(a_*)}n_k+\frac{C'Da_*A(a_*)}\nu\ \ge\ I_0 .
\label{eq:information-four-terms}
\end{multline}
The second term bounds the additive sampling allowance using the
number of big generators. One of the four terms must be at least $I_0/4$.
Split the cost into
\begin{equation}
\begin{aligned}
\mathrm{cost}_{\rm look}&=\sum_k\max\{1,\beta_k\}\,b\sqrt{n_k},\\
\mathrm{cost}_{\rm proc}&=\sum_k\max\{1,\beta_k\}\,b^2,
\end{aligned}
\label{eq:cost-parts}
\end{equation}
so $\mathrm{cost}=\mathrm{cost}_{\rm look}+\mathrm{cost}_{\rm proc}$.
The target scales are
\begin{equation}
\Gamma_{\rm look}:=\sqrt\epsilon\,ND^{1/4},\qquad
\Gamma_{\rm proc}:=\epsilon\sqrt D\,b^2 .
\label{eq:target-scales}
\end{equation}
We prove both bounds for whichever term of
\Cref{eq:information-four-terms} is large.

\emph{Case (a): small generators.} If $\sum_{\rm small}n_k\ge I_0/4$,
there are at least $I_0/(8C'D)=\Omt(D)$ such generators.
Their block-processing cost is $\Omt(Db^2)$, and
\[
\mathrm{cost}_{\rm look}\ge b\sum_{\rm small}n_k/\sqrt{2C'D}
=\Omt(bD^{3/2})\ge\Omt(\Gamma_{\rm look}).
\]

\emph{Case (b): many big generators.} At least $I_0/(4C'D)$ big generators,
each with $n_k>2C'D$, give the same two bounds.
Since $N=bD$, $D\ge1$, and $\epsilon\le1$, these dominate
$\Gamma_{\rm look}$ and $\Gamma_{\rm proc}$.

\emph{Case (c): directly stored big generators.} Suppose
$\sum_{\text{big}\setminus\mathcal O(a_*)}n_k\ge I_0/4$. These generators have $\beta_k\ge n_k\nu/(2C'Da_*)$.
Their lookup cost is at least $bI_0/4$ times
\begin{equation}
\min_{x>0}\frac{\max\{1,x\nu/(2C'Da_*)\}}{\sqrt x}
=\sqrt{\frac \nu{2C'Da_*}},
\label{eq:lookup-cost-minimum}
\end{equation}
and their block-processing cost is at least $b^2\nu I_0/(8C'Da_*)$.

The closed-threshold group has weight at least $a_*$ and
$\sqrt{n_k}\ge\sqrt{2C'Da_*/\nu}\sqrt{\beta_k}$.
Since $\max\{1,\beta_k\}\sqrt{\beta_k}\ge\beta_k$, it gives lookup
cost at least $b\sqrt{2C'D/\nu}\,a_*^{3/2}$ and block-processing cost
at least $b^2a_*$.
The groups may overlap at the threshold, so combine their bounds
by taking the maximum. Minimizing each maximum over $a_*>0$ gives
\begin{align*}
\mathrm{cost}_{\rm look}
&\ge c\,b I_0^{3/4}\left(\frac{\nu}{C'D}\right)^{1/4},\\
\mathrm{cost}_{\rm proc}
&\ge c\,b^2\sqrt{\frac{\nu I_0}{C'D}}.
\end{align*}
Substituting $I_0$ and $\nu$ gives
$\Omt(bD^{5/4}\nu^{1/4})=\Omt(\Gamma_{\rm look})$ and
$\Omt(b^2\sqrt{\nu D})=\Omt(\Gamma_{\rm proc})$, respectively.

\emph{Case (d): sampled big generators.} Suppose $C'Da_*A(a_*)/\nu\ge I_0/4$.
Since $A(a_*)\le a_*$, we get
$a_*\ge\sqrt{\nu I_0/(4C'D)}=\Omt(\epsilon\sqrt D)$.
Use the same closed-threshold group as in case (c). Its block-processing
and lookup bounds now give directly
$\mathrm{cost}_{\rm proc}\ge b^2a_*=\Omt(\Gamma_{\rm proc})$ and
$\mathrm{cost}_{\rm look}\ge b\sqrt{2C'D/\nu}\,a_*^{3/2}=\Omt(bD^{5/4}\nu^{1/4})=\Omt(\Gamma_{\rm look})$.

In every case
$\mathrm{cost}=\Omt(\Gamma_{\rm look}+\Gamma_{\rm proc})$, where all
hidden factors are powers of $\log N$. This proves~\Cref{eq:lowerbound}.

\emph{Minimizing over $b$.} Substituting $D=N/b$, the sum
$\sqrt\epsilon N^{5/4}b^{-1/4}+\epsilon N^{1/2}b^{3/2}$
has unrestricted minimum $\Theta(\epsilon^{4/7}N^{8/7})$ at
$b=\Theta(\epsilon^{-2/7}N^{3/7})$. The same lower bound holds
when $b$ is restricted to the allowed powers of two.
\end{proof}

\subsection{Mixed block sizes and larger interleaver budgets}
\label{app:mixed}

We extend \Cref{thm:optimality} to mixed block sizes and larger
interleaver budgets. A \emph{mixed-block design} is a sequence
$\mathcal D=(W_0,H_1,W_1,\dots,H_s,W_s)$ as in
\Cref{def:design}, except that each generator $H_k$ carries its
own power-of-two block size $b_k\le N$, which may be chosen after
seeing $U$. The parameters $n_k$, $m_k$, $c_k$, and $\beta_k$ of generator
$k$ are computed in the partition of \Cref{sec:prelim} into blocks
of size $b_k$, and
\begin{equation}
\mathrm{cost}(\mathcal D)=\sum_{k=1}^s\max\{1,\beta_k\}\big(b_k\sqrt{n_k}+b_k^2\big).
\label{eq:mixedcost}
\end{equation}
A design at block size $b$ is the special case in which $b_k=b$ for all $k$.

\begin{theorem}[Mixed block sizes and larger interleaver budgets]
\label{thm:mixed}
There are absolute constants $c,c_0,C>0$ with the following property.
Let $n$ be large enough, $\gamma\in(0,3/4]$, and
$\epsilon\in[N^{-3},c_0^{1/\gamma}]$. Then, with probability at least
$1/2$ over a Haar-random $U\in U(N)$, every mixed-block design with
$B_W\le(1-\gamma)N^2\log_2(1/\epsilon)$ that approximates $U$ to error
$\epsilon$ satisfies
\begin{equation}
\mathrm{cost}(\mathcal D)\ \ge\ \frac{c\,\gamma}{n^{C}}\,\epsilon^{4/7}N^{8/7}.
\label{eq:mixedbound}
\end{equation}
\end{theorem}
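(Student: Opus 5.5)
The plan is to rerun the proof of \Cref{thm:optimality}, grouping generators by block size and replacing the budget $\frac14N^2\log_2(1/\epsilon)$ by $(1-\gamma)N^2\log_2(1/\epsilon)$. First redo \Cref{lem:compression-information}. Szarek's bound says that half of Haar targets need $N^2\log_2(1/(5C_1\epsilon))-2$ bits. If $\epsilon\le c_0^{1/\gamma}$ with $c_0$ small, then $\log_2(5C_1)\le\frac{\gamma}{4}\log_2(1/\epsilon)$, so at least $(1-\gamma/4)N^2\log_2(1/\epsilon)$ bits are needed. Subtract the interleaver budget and a header of at most $\frac{\gamma}{4}N^2\log_2(1/\epsilon)$; the header is smaller for large $n$ since $s\log N\le N^2/C$. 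This leaves $\frac{\gamma}{2}N^2\log_2(1/\epsilon)$ bits for the generators. The header now also records every $b_k$, costing $O(s\log n)$ bits, which is absorbed. Generator $k$ stores blocks at $O(b_k^2\log N)$ bits each, so the information inequality becomes
\[
\sum_k b_k^2\,x_k\ \ge\ \frac{c\gamma N^2}{\log N},
\]
where $x_k$ is $n_k$ for directly stored generators and $\min\{n_k,t_k\}$ for sampled ones. The sampling lemmas \Cref{lem:sparsification,lem:sparsification-composition} apply per generator with $D_k=N/b_k$ in place of $D$. The allocation rule $\epsilon_k^2=\nu\beta_k/a$ does not depend on block size, so the composition lemma still holds.

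Next, split the generators into $O(n)$ classes by the value of $b_k$. Pigeonhole gives a block size $b$ whose generators carry a $\Omega(1/n)$ fraction of the left side. In units of blocks, that class must store $I_0(b)=c\gamma D^2/(n\log N)$ blocks with $D=N/b$. Then rerun the argument of \Cref{app:allocation} on this class alone. The crossing threshold $a_*$, the definition of small and big generators, and the four-term split are all defined using only the class's generators. Generators from other classes are simply kept as they are in the product: they are either stored directly or sampled with $a$ shared across all classes. I will handle this by choosing $\mathcal O$ only inside the class and storing every other generator directly. That is wasteful for the information count, but it is harmless because of how we pigeonholed. Concretely, we pigeonhole on the quantity that appears after the allocation is chosen: we pick the class maximizing $\sum_{k\in\text{class}}b_k^2\min\{n_k,t_k\}$. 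This needs some care, and I would fix it by running the allocation argument separately for each class and applying pigeonhole to the resulting inequality.

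Within the chosen class, cases (a)–(d) give cost
\[
\Omt\bigl(\gamma^{3/4}\sqrt\epsilon\,ND^{1/4}+\gamma^{1/2}\epsilon\sqrt D\,b^2\bigr).
\]
The powers of $\gamma$ come from $I_0^{3/4}$ and $I_0^{1/2}$, and both are at least $\gamma$ since $\gamma\le1$. The mixed cost is at least the cost of this class. Minimizing over $b$ gives $\Omega(\gamma\,\epsilon^{4/7}N^{8/7}/n^C)$, as in \Cref{eq:mixedbound}. Designs with cost above $N^3$ or with too many generators are excluded up front, as in the original proof.

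The main obstacle is the interaction between classes inside the error-composition step. The allocation $(\mathcal O,a)$ mixes generators of different block sizes, so the information inequality must stay separable across classes. I expect the per-class version to go through as follows. Treat all out-of-class generators as directly stored, which keeps the description length bounded. Then for each class $b$ let $Y_b$ be its contribution to the code length at that class's optimal allocation. If every class had $Y_b<I_0(b)$ in the appropriate units, the resulting code would be too short. This contradiction forces some class to reach $I_0(b)$, which is what the per-class allocation argument then needs.
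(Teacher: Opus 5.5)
Your architecture matches the paper's. You group generators by $b_k$, record the $b_k$ in the header, derive $\sum_k b_k^2x_k\ge c\gamma N^2/\log N$, pigeonhole to a class storing $I_0(b)=c\gamma D^2/(n\log N)$ blocks, and rerun cases (a)--(d) there. Your Szarek and header bookkeeping is also fine. The header bound holds because $\epsilon\le c_0^{1/\gamma}$ forces $\gamma\log_2(1/\epsilon)\ge\log_2(1/c_0)$; the paper instead assumes $s\log N\le\gamma N^2/C$ and uses $\mathrm{cost}\ge s$.

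The gap is at the step you flag, and the device you propose there fails. Suppose you compress only the class of interest and store every out-of-class generator directly. The information inequality then reads $\sum_{k\notin\text{class}}b_k^2n_k+b^2\sum_{k\in\text{class}}x_k\ge c\gamma N^2/\log N$. The first sum alone can satisfy this, so it says nothing about the chosen class. A pigeonhole taken with respect to a different code does not transfer to this one. Your closing contradiction argument needs a single code in which every class is compressed simultaneously by its own threshold allocation $(\mathcal O_r,a_{r,*})$, so that the code length is the sum of the class contributions.

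That joint code is where the missing idea lies. A threshold allocation run on one class as in \Cref{app:allocation} spends the whole squared-error budget, $\sum_{k\in\mathcal O_r}\epsilon_k^2\le\nu$. Sampling all $n+1$ classes at once then gives $\sum_k\epsilon_k^2$ up to $(n+1)\nu$. The matrix-Azuma term in \Cref{lem:sparsification-composition} becomes $\sqrt{n+1}\,\epsilon$, so the compressed product is no longer within $O(\epsilon)$ of the design. The Szarek count at scale $\epsilon$ then fails, fatally so for constant $\epsilon$.

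Your remark that $\epsilon_k^2=\nu\beta_k/a$ does not depend on block size covers only a single allocation with one shared $a$. With a shared threshold, the closed-threshold group in cases (c) and (d) can consist of generators from other classes. Its lookup and processing costs are then not expressed in the selected class's $b$.

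The paper's fix is to give each class the share $\nu_r=\nu/(n+1)$, with $\epsilon_k^2=\nu_r\beta_k/a_r$, $\delta_k=\min\{1/2,\nu_r/(4D_ra_r)\}$, and $C'_r=4\log(8D_rN^4/\nu_r)$. It then runs the martingale argument once across all classes, which is valid since $\sum_k\epsilon_k^2\le\sum_r\nu_r=\nu$. This yields $\sum_r(b^{(r)})^2q_r\ge c\gamma N^2/\log N$ for the joint code. Pigeonhole gives a class with $q_r\ge I_r$, and cases (a)--(d) apply to it with $\nu_r$ in place of $\nu$, losing only polynomial factors in $n$.
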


For $\gamma=3/4$ the interleaver budget is that of
\Cref{thm:optimality}. The condition $\epsilon\le c_0^{1/\gamma}$ reads
$\gamma\ge\log(1/c_0)/\log(1/\epsilon)$. Within the model, a
worst-case cost guarantee smaller than $\epsilon^{4/7}N^{8/7}$ by
a fixed power of $N$ therefore requires an interleaver budget
allowing all but a fraction $O(1/\log(1/\epsilon))$ of the
$N^2\log_2(1/\epsilon)$ description bits. At $\epsilon=N^{-3}$,
this fraction is $O(1/n)$.

\begin{proof}
Group generators by $G_r=\{k:b_k=2^r\}$, with $b^{(r)}=2^r$ and $D_r=N/b^{(r)}$,
keeping their original product order. As before assume
$\mathrm{cost}\le N^3$, so $s,\sum_k\beta_k\le N^3$.
Here $\gamma$ is the interleaver-budget fraction.

\emph{Joint compression.} Give each class an allocation
$(\mathcal O_r,a_r)$ satisfying $a_r>0$, $\mathcal O_r\subseteq\{k\in G_r:H_k\ne0\}$, and
$\sum_{k\in\mathcal O_r}\beta_k\le a_r\le N^3$. Assign it the share
$\nu_r=\nu/(n+1)$ of the total squared-error tolerance $\nu$.
Zero generators contribute identity factors and are kept exact.
Every sampled generator has $\beta_k>0$ and $\epsilon_k>0$. Apply
\Cref{lem:sparsification} using
$\epsilon_k^2=\nu_r\beta_k/a_r$ and
$\delta_k=\min\{1/2,\nu_r/(4D_ra_r)\}$.
The conditioning estimate in the proof of
\Cref{lem:sparsification-composition} gives
$\norm{\E\Delta_k}\le\epsilon_k^2$. Its martingale argument applies
across all classes because $\sum_k\epsilon_k^2\le\sum_r\nu_r=\nu$,
giving a jointly sparsified product within $2\epsilon$ of the original.

The fixed code in the proof of \Cref{lem:compression-information}
adds $O(s\log n)$ header
bits for the $b_k$. Each class-$r$ block costs $O((b^{(r)})^2\log N)$
bits including its position. The counts and values capture the
class's allocation choices.
Thus the same counting event covers every mixed design and allocation.

Put $c_0=(5C_1)^{-2}$. Then
$\epsilon\le c_0^{1/\gamma}$ gives
\[
\log_2(1/(5C_1\epsilon))\ge(1-\gamma/2)\log_2(1/\epsilon).
\]
At least half of Haar targets therefore require
$(1-\gamma/2)N^2\log_2(1/\epsilon)-2$ bits.
Designs with $s\log N>\gamma N^2/C$ satisfy \Cref{eq:mixedbound}
because $\mathrm{cost}\ge s$. We therefore assume $s\log N\le\gamma N^2/C$.
Subtracting the interleaver and header budgets leaves at least
$\frac\gamma4N^2\log_2(1/\epsilon)$ bits for blocks, hence
\begin{equation}
\sum_{r=0}^n(b^{(r)})^2q_r\ \ge\ \frac{c\,\gamma N^2}{\log N},
\label{eq:mixed-information}
\end{equation}
where
$q_r=\sum_{k\in G_r\setminus\mathcal O_r}n_k+\sum_{k\in\mathcal O_r}\min\{n_k,t_k\}$
is the number of stored blocks of class $r$.

\emph{Allocation.} Use the threshold allocation of \Cref{app:allocation}
in each class, with $D_r$, $\nu_r$, and
$C'_r=4\log(8D_rN^4/\nu_r)$; classes of small generators use
$\mathcal O_r=\emptyset$ and any $0<a_r\le N^3$.
By \Cref{eq:mixed-information} and $(b^{(r)})^2D_r^2=N^2$, some class has
\begin{equation}
q_r\ \ge\ I_r:=\frac{c\,\gamma D_r^2}{(n+1)\log N}.
\end{equation}
Apply the four deterministic cost cases to this class.
Cases~(a,b) are linear in $I_r$; in cases~(c,d), lookup scales
as $I_r^{3/4}\nu_r^{1/4}$ and processing as $(I_r\nu_r)^{1/2}$.
Factors of $n+1$ are polynomial in $n$, while
$\gamma^{3/4},\sqrt\gamma\ge\gamma$.
Thus this class, and hence the full design, costs at least
$(c\gamma/n^C)(\sqrt\epsilon\,ND_r^{1/4}+\epsilon\sqrt{D_r}\,(b^{(r)})^2)$.
Minimizing over $b^{(r)}$ gives \Cref{eq:mixedbound}.
\end{proof}

Treating classes separately charges the other exponentials as
interleavers. Coarsening by a factor $q$ can raise the block-processing
charge by $q^2$, while refining can increase $m_k$ and $\beta_k$
polynomially. Joint counting preserves the original interleaver budget
and each generator's block size.

\bibliography{references}

\end{document}